	\def\ZZ{\mathbb{Z}}
	\def\cal{\mathcal}
	\def\bf{\mathbf}
	\def\sk_s{\mathsf{sk_s}}
	\def\Pr{\mathrm{Pr}}
	\def\Adv{\mathsf{Adv}}
	\def\OW{\textsf{OW-ID-CPA}}
	\def\u{\bf{u}}
	\def\e{\bf{e}}
	\def\x{\bf{x}}
	\def\L{\Lambda}
	\def\Lp{\Lambda^{\perp}}
	\def\s{\bf{s}}
	\def\DuA{\cal{D}_{\L_q^{\u}(A),s}}
	\newtheorem{theorem}{Theorem}
	\newtheorem{lemma}[theorem]{Lemma}
	\newdefinition{remark}{Remark}
	\newdefinition{definition}{Definition}
	\newproof{proof}{Proof}
\begin{document}
	
	\title{Lattice-based Signcryption with Equality Test  in Standard Model}
	
	\author[1]{Huy Quoc Le\corref{cor1}}
	\ead{qhl576@uowmail.edu.au}
	\author[1]{Dung Hoang Duong}
	\ead{hduong@uow.edu.au}
	\author[1]{Partha Sarathi Roy}
	\ead{partha@uow.edu.au}
	\author[1]{Willy Susilo}
	\ead{wsusilo@uow.edu.au}
	\address[1]{ Institute of Cybersecurity and Cryptology, School of Computing and Information Technology, University of Wollongong,
		Northfields Avenue, Wollongong NSW 2522, Australia}

	\author[2]{Kazuhide Fukushima}
	\ead{ka-fukushima@kddi-research.jp}
	\author[2]{Shinsaku Kiyomoto}
	\ead{kiyomoto@kddi-research.jp}
	\address[2]{Information Security Laboratory, KDDI Research, Inc.,\\
		2-1-15 Ohara, Fujimino-shi, Saitama, 356-8502, Japan}

	\cortext[cor1]{Corresponding author}
	
	\begin{abstract}
	A signcryption, which is an integration of a public key encryption and a digital signature, can provide confidentiality and authenticity simultaneously. Additionally, a signcryption associated with equality test allows a third party (e.g., a cloud server) to check whether or not two ciphertexts are encrypted from the same message without knowing the message. This application plays an important role especially in computing on encrypted data. In this paper, we propose the first lattice-based signcryption scheme equipped with a solution to testing the message equality in the standard model. The proposed signcryption scheme is proven to be secure against insider attacks under the learning with errors  assumption and the intractability of the short integer solution  problem.  As a by-product,  we also show that some existing lattice-based signcryptions  either is insecure or  does not work correctly. 
	\end{abstract}
	
	\begin{keyword}
	Signcryption, equality test,  standard model, learning with errors problem, short integer solution problem, insider attacks
	\end{keyword}

	\maketitle
	
	\section{Introduction} \label{intro} 
	 
	A signcryption scheme, first proposed by~Zheng \cite{Zhe97}, simulatneously plays the roles of public key encryption and digital signature. A signcryption scheme therefore guarantees the confidentiality and the authenticity at the same time. On the other hand, signcryptions are designed aiming to be more efficient than the signing-then-encrypting approach in terms of cost. 
	
	With a rapid increasing of amount of data, there are more and more personal users as well as organizations transferring their data to third-party service providers for outsourcing. In order to protect their data's sensitive information, data are usually encrypted. This fact requires service providers  to have efficient methods of encrypted data management.  
	Among such tools, equality test (ET) allows service providers to check whether or not  different ciphertexts are generated on the same message even though service providers do not know what the message actually is. This augmented property has been realized in various public key encryption schemes (which are called PKEET), e.g., \cite{YTHW10}, \cite{LLS+20}, \cite{LSQ18}, \cite{Duongetal2019}, \cite{DFS+20} with applications to internet-based personal health record systems \cite{Tan11}, secure outsourced database managements \cite{YTHW10} just to name a few. Of course, the equality test mechanism can also be realized  in signcryptions which we call signcryption with equality test (SCET). SCET has also found some applications such as in  securing messaging services \cite{WPD+19},   industrial Internet of Things \cite{XZH+20}.
	
	Quantum computers are proven to be able to successfully break number-theoretic assumptions, such as the integer factorization problem or the discrete logarithm problem, which are currently the underlying hard problems for a plenty of cryptographic primitives \cite{Sho97}. Under the threat, research community has recently been paying more and more attention as well as resources to the so-called lattice-based cryptography, which based on hard lattice problems. With some advantages of easy implementation, provable hardness, lattice problems (e.g., learning with errors problem (LWE), short integer solution problem (SIS) are playing the role of underlying hard problems for numerous cryptographic primitives.  \\

	\noindent \textbf{Related works.}  
	Malone-Lee and Mao \cite{LM03} in 2003 presented a RSA-based signcryption scheme in the random oracle model. In the same year, Boyen \cite{Boy03} proposed a stringent security model for the schemes which he call \textit{a joint identity-based signature/encryption} (IBSE), and presented an  efficient IBSE construction, based on bilinear pairings.   The work \cite{LSQ18} by Lin et al. gave a generic SCET construction, extended from a generic PKEET construction.  Wang et al. \cite{WPD+19} presented a concrete SCET construction, which the authors call public key signcryption scheme with designated equality test on ciphertexts. However, the primitive in \cite{WPD+19} is based on bilinear groups. More recently, Xiong et al. \cite{XZH+20} propose the so-called  heterogeneous SCET based on pairings which is claimed to be suitable for the sophisticated heterogeneous network of industrial Internet of Things.  \textit{So far, there has been no any SCET construction in the lattice setting}.
	
	Regarding lattice-based signcryption constructions, there have been some works such as \cite{Lietal12},  \cite{GM18},  \cite{YCX19}, \cite{LWJ+14} and  \cite{SS18}.  Li et al. \cite{Lietal12} built a lattice-based signcryption scheme that are only secure in the random oracle model (ROM). Gérard  and Merckx  \cite{GM18}  proposed a lattice-based signcryption scheme in the random oracle model (ROM) based  on the ring learning with errors (RLWE) and a special version of the short integer solution (SIS) offering  the indistinguishability under chosen plaintext attacks (IND-CPA) and  existential unforgeability under chosen message attacks (EUF-CMA) security.  Yang et al. \cite{YCX19} proposed an efficient lattice-based signcryption scheme in the standard model based on RLWE and the ideal short integer solution (ideal-SIS) assumption. The  signcryption  of \cite{YCX19} offers the IND-CCA2 security and  and existential unforgeability under an adaptive chosen-message attack (EUF-aCMA).  Lu et al. \cite{LWJ+14} proposed a lattice-based signcryption scheme without random oracles which is claimed to  achieve  the IND-CCA2 security and EUF-CMA security basing on the hardness of  LWE and SIS. Sato and Shikata \cite{SS18} constructed a lattice-based signcryption in the standard model of which security also based on LWE and SIS. 
We claim that, the IND-CCA2 security of the construction by Lu et al. \cite{LWJ+14} can be easily broken even by a CPA attack (see Section \ref{attack} below) while the  work \cite{SS18} has some serious flaws described as follows. \\

	\noindent \textbf{Description and flaws of the signcryption proposed by  \cite{SS18}.}  The work  \cite{SS18} exploits the  gadget-based trapdoor mechanism proposed by \cite{MP12}, namely, the algorithms \textsf{GenTrap}, \textsf{Invert}, \textsf{SampleD} (see Definition \ref{gtrapdoor} and Lemma \ref{trapdoor} below for further details). Then, for each receiver we generate the public key 
		 $pk_r=\mathbf{A}_r$, and  the private key $sk_r=\textbf{T}_r$ which is a $\mathbf{G}$-trapdoor for $\mathbf{A}_r$  with tag $\mathbf{H}=\textbf{0}$ , i.e., $ \mathbf{A}_r\bigl[\begin{smallmatrix}
		 	 			\mathbf{T}_r\\ \textbf{I}
		 	 			\end{smallmatrix} \bigr] = \mathbf{0} \pmod q $, where $\textbf{G}$ is the gadget matrix specified in \cite[Section 4]{MP12}. The same thing $pk_s=\mathbf{A}_s, sk_r=\textbf{T}_s$ is done for each sender except that $\mathbf{H}=\textbf{I}$, i.e., $ \mathbf{A}_s\bigl[\begin{smallmatrix}
		 			\mathbf{T}_s\\ \textbf{I}
		 			\end{smallmatrix} \bigr] = \mathbf{G} \pmod q $.

	For signcrypting a plaintext  $\mu$, one  utilizes the Dual-Regev framework based on the LWE problem, i.e., one computes $(\overline{\textbf{c}}_0)^t:=\textbf{s}^t\textbf{A}_{r,\textbf{t}}+(\textbf{x}_0)^t$ and $(\overline{\textbf{c}}_1)^t:=\textbf{s}^t\textbf{U}+(\textbf{x}_1)^t$. Here $\textbf{U}$ is a uniform matrix while the matrix $\textbf{A}_{r,\textbf{t}}$  depends  on $\textbf{A}_{r}$ and some vector tag $\textbf{t}$ which relates to the public key of the receiver $pk_r$, to the public key of the sender and to a random small vector $\textbf{r}_e$. Then one signs on the tuple $(\mu|pk_r|\overline{ct})$, where $\overline{ct}=(\overline{\textbf{c}}_0, \overline{\textbf{c}}_1, \textbf{r}_e)$, to get the signature $(\textbf{e}, \textbf{r}_s)$. To do that, one calculates a vector tag $\textbf{h}$ using $pk_s$, $pk_r$, $(\mu|pk_r|\overline{ct})$ and a random small vector $\textbf{r}_s$ and then signs on $\textbf{h}$ using the same way as the signature scheme in \cite[Section 6.2]{MP12}. Finally, the signcryption finishes by computing  $\textbf{c}_0:=\overline{\textbf{c}}_0+\textbf{r}_s$,  $\textbf{c}_1:=\overline{\textbf{c}}_1+\mu\cdot \lfloor q/2 \rfloor$ and outputting the ciphertext $ct=(\textbf{c}_0, \textbf{c}_1, \textbf{r}_e, \textbf{e})$.  
	
	In the unsigncryption algorithm, one can use \textsf{SampleD} to obtain a matrix $\textbf{E}$ such that $\textbf{A}_{r,\textbf{t}}\textbf{E}=\textbf{U} \bmod q$ which helps us to recover $\mu$. However, in this algorithm, one also needs to verify whether the ciphertext is valid or not which needs to recover $\overline{ct}=(\overline{\textbf{c}}_0, \overline{\textbf{c}}_1, \textbf{r}_e)$ from $ct$ first. The way of \cite{SS18} to do that is to run \textsf{Invert} on input $\textbf{c}_0$ to get $\textbf{r}_s$  and then compute $\overline{\textbf{c}}_0=\textbf{c}_0-\textbf{r}_s$, $\overline{\textbf{c}}_1=\textbf{c}_1-\mu \cdot \lfloor q/2 \rfloor $. Unfortunately, we can see that this is not correct since \textsf{Invert} will actually output the sum $\textbf{x}_0+\textbf{r}_s$ instead  of $\textbf{r}_s$. 
	
	One more flaw is that the dimensions of some matrices and some vectors in  the signcryption of \cite{SS18} do not match. Furthermore, the security proofs are quite unclear. For example, in the security proof for the strong unforgeability against insider attacks (i.e., MU-sUF-iCMA) in \cite[Theorem 2]{SS18}, after showing that $\textbf{A}_S \cdot \textbf{z}=0 \bmod q$, the authors do not prove why $\textbf{z}\neq \textbf{0}$.	\\

	\noindent \textbf{Our contribution and technical overview.} In this paper, we propose, for the first time, a lattice-based signcryption scheme possessing a capacity of equality test provably secure in the standard model.  Moreover, since both \cite{LWJ+14} and \cite{SS18} do not work correctly then our work without the equality test part can be considered as a  lattice-based signcryption alternative to them.  For our construction, we consider the multi-user setting and  the insider security model in which there are multiple users and some of them could adversarially behave. We call such users the \textit{internal users} or \textit{insider attackers}. Such kind of attacker is stronger than external adversaries since they can know private information of other users in the setting. We show that our proposed scheme offers OW-iCCA1, IND-iCCA1 and UF-iCMA security against insider attacks relying on the hardness of decisional-LWE and SIS problems. 
	
 Our scheme is basically inspired from the work of Sato and Shikata \cite{SS18} and the recent method of~Duong et al.~\cite{Duongetal2019} for equality test.  We have shown above that \cite{SS18} does not work correctly.  Fortunately, we successfully fix this error simply by in the signcryption algorithm setting $\textbf{c}_0:=\overline{\textbf{c}}_0$ instead of $\textbf{c}_0:=\overline{\textbf{c}}_0+\textbf{r}_s$ and outputting $ct=(\textbf{c}_0, \textbf{c}_1, \textbf{r}_e, \textbf{r}_s, \textbf{e})$ instead of $ct=(\textbf{c}_0, \textbf{c}_1, \textbf{r}_e,  \textbf{e})$. Also, to fix the dimension-related flaw in \cite{SS18}, we use the hash functions named $H_1, H_3$ to make dimensions match.  
	

	For equality test, we use an one-way  hash function $H$ and encrypt $H(\mu)$ (but do not sign) in the same way described above for the plaintext $\mu$. This releases some things named $\textbf{c}'_0,$ $ \overline{\textbf{c}}'_1, \textbf{r}'_e, $ corresponding to $\textbf{c}_0,$ $ \overline{\textbf{c}}_1, \textbf{r}_e, $ for $\mu$. Note that, the signing phase (which produces the signature $(\textbf{e}, \textbf{r}_s)$) now runs on input $(\mu|pk_r|\overline{ct})$ with $\overline{ct}=(\textbf{c}_0, \overline{\textbf{c}}_1, \textbf{r}_e, \textbf{c}'_0, \overline{\textbf{c}}'_1, \textbf{r}'_e)$ instead of $\overline{ct}=(\textbf{c}_0, \overline{\textbf{c}}_1, \textbf{r}_e)$. Therefore, the final ciphertext for the proposed SCET is   $ct=(\textbf{c}_0, \textbf{c}_1, \textbf{r}_e, \textbf{r}_s, \textbf{c}'_0,$ $ \overline{\textbf{c}}'_1, \textbf{r}'_e, \textbf{e})$. Finally,  two ciphertexts are proven to come from the same plaintext  if we can recover the same hash value $H(\mu)$ from them without knowing the plaintext $\mu$. 
	
	Furthermore, for the security proof, we also utilize the so-called abort-resistant hash functions defined by \cite[Section 7]{ABB10} provided in Lemma \ref{lemma3}.  Also note that, the presence of $\textbf{B}$, $\textbf{r}_e, \textbf{r}'_e, \textbf{r}_s$ helps us to simulate the responses to the adversary's queries.  \\ 
	
\begin{table}[h]
  	\centering

  	  	\caption{ Some SC and SCET constructions based  on hard lattice problems in the literature. }
  	\medskip
  	\smallskip
  	\small\addtolength{\tabcolsep}{0pt}
  	\begin{tabular}{ c | c| c|c|c|c}
  	\hline
   		\textbf{Works}&\textbf{Assumption}& \makecell{\textbf{Security }\\ \textbf{Level}}&\makecell{\textbf{Security}\\ \textbf{Model}}&\textbf{with ET}& \makecell{ \textbf{Insider}\\\textbf{Attacks}}\\
\hline
\hline
\makecell{Li \cite{Lietal12} }&LWE \& SIS&\makecell{IND-CCA2\\ SUF-CMA}&ROM&$\times$ &$\times$\\
\hline
\makecell{Lu \cite{LWJ+14}}& \multicolumn{5}{c}{ Not secure even with IND-CPA (see Section \ref{attack})} \\
\hline
\makecell{Sato \cite{SS18}}& \multicolumn{5}{c}{Does not work correctly} \\

\hline 
\makecell{Gérard \cite{GM18} }&RLWE \& SIS&\makecell{IND-CPA\\ EUF-CMA}&ROM&$\times$ &$\times$\\

\hline 
\makecell{Yang \cite{YCX19} }&\makecell{RLWE \\\& ideal-SIS}&\makecell{IND-CCA2\\ EUF-aCMA}&SDM&$\times$ &$\times$\\

\hline
\hline
\textbf{This work}&LWE \& SIS&\makecell{IND-iCCA1\\OW-iCCA1\\SUF-iCMA}&SDM&$\checkmark$ &$\checkmark$\\
	 	\hline
  	\end{tabular} 
  
  	\label{tab2}
\end{table}

\begin{table}[pt]

	\centering
	\medskip
	\smallskip
	\small\addtolength{\tabcolsep}{-3pt}
	\begin{tabular}{  |c |   c | }

		\hline
		\textbf{Category}  & \textbf{Size }  \\
		\hline
		Public key per user & $2mn \cdot  \mathbb{Z}_q$   \\
		\hline
			Secret key per user  &$2\overline{m}nk \cdot D_{\sigma_1}$ \\
		\hline
	Ciphertext &$3m\cdot D_{\alpha q}+(m+nk)\cdot D_{\sigma_2}+2(m+\ell)\cdot \mathbb{Z}_q$ \\
	\hline
	
	\end{tabular} 
	\caption{Sizes of our \textsf{SCET}. Here $a \cdot S$ means $a$ elements in the domain $S$. For example $3m\cdot D_{\alpha q}$ indicates that there are $3m$ elements each of which sampled from $D_{\alpha q}$.  }
	\label{tab3}
\end{table}

	\noindent \textbf{Organisation.} In Section \ref{pre}, we give a background of lattices. The framework of signcryption with equality test will be provided in Section \ref{scet}. Section \ref{lbscet} is for our lattice-based signcryption construction. The security of the proposed scheme will be given in Section \ref{security}. Parameter setting will be done in Section \ref{para}. We demonstrate an attack against the IND-CPA of the signcryption construction proposed by Lu et al.  \cite{LWJ+14}   in Section \ref{attack}. In Section \ref{conc}, we make some conclusions on our work. 
	
	\section{Preliminaries} \label{pre}
 
Throughout this work, the norm $\Vert \mathbf{S} \Vert$ of a set of vectors
	$\mathbf{S}=\{\mathbf{s}_1,\cdots, \mathbf{s}_n\}$ is computed as $\max_{i \in[n]} \Vert \mathbf{s}_i \Vert$. \\

	\noindent \textbf{Lattices.} Given a matrix $\mathbf{B}=[\mathbf{b}_1, \cdots, \mathbf{b}_m]\in \mathbb{R}^{n \times m}$ of $m$ linearly independent vectors,  the set
	$\mathcal{L}(\mathbf{B}):=\{\sum_{i \in [m]}\mathbf{b}_iz_i: z_i\in \mathbb{Z}\}$ is called a lattice of basis $\mathbf{B}$. In this work, we focus on the so-called $q$-ary lattices:
	$ \Lambda_q^{\bot}(\mathbf{A})=\{\mathbf{e} \in \mathbb{Z}^m: \mathbf{A}\mathbf{e}=\mathbf{0} \text{ (mod } q) \},$, and $\L_q^{\bf{u}}(\textbf{A}) :=  \left\{ \e\in\ZZ^m~\rm{s.t.}~A\e=\mathbf{u} \text{ (mod } q) \right\}$, where  $\mathbf{A} \xleftarrow{\$} \mathbb{Z}^{n \times m}$ is a random matrix. Note that, if $\bf{t}\in\L_q^{\bf{u}}(\textbf{A})$ then $\L_q^{\bf{u}}(\textbf{A})=\bf{t}+\Lp_q(\textbf{A})$.
	
	The \textit{first minimum} of a lattice $\mathcal{L}$ is defined as      $\lambda_1(\mathcal{L}):=\min_{\mathbf{v} \in \mathcal{L} \setminus \{\textbf{0}\}}\Vert \mathbf{v}\Vert$. The \textit{$i$-th minimum} of a lattice $\mathcal{L}$ of dimension $n$ is denoted by and defined as      $\lambda_i(\mathcal{L}):=\min\{r: \dim(\text{span}(\mathcal{L} \cap \mathcal{B}_n(0,r))) \geq i\}$, where $\mathcal{B}_n(0,r)=\{\mathbf{x} \in \mathbb{R}^n: \Vert \mathbf{x}\Vert \leq r \}$.    The \textsf{SIVP}$_\gamma$ and \textsf{GapSVP}$_\gamma$  are assumed to be the worst case hard problems in lattices. Given $\mathbf{A}$ to be a basis of a lattice $ \mathcal{L}(\mathbf{A})$ and a positive real number $d$, the first problem requires to find a set of $n$ linearly independent lattice vectors $\mathbf{S} \subset \mathcal{L}(\mathbf{A})$ such that $\Vert  \mathbf{S} \Vert \leq \gamma \lambda_n(\mathbf{A})$, while
	the second one asks to decide if $\lambda_1(\mathcal{L}(\mathbf{A}))\leq d$ or $\lambda_1(\mathcal{L}(\mathbf{A}))>\gamma d$.\\

	\noindent  \textbf{Gaussians.}  Let $m\geq 1$, a vector $\mathbf{c}\in \mathbb{R}^m$ and a positive parameter $s$, for $\mathbf{x}\in \mathbb{R}^m$ define $\rho_{s,\mathbf{c}}(\mathbf{x})= \exp({{-\pi \Vert \mathbf{x}-\mathbf{c}\Vert^2 }/{ s^2}})$. 
	The continuous Gaussian distribution on $\mathbb{R}^m$ with mean $\textbf{c}$ and with width parameter $s$ is proportional to $\rho_{s,\mathbf{c}}(\mathbf{x})$.  
	Let $\mathbb{T}=\mathbb{R}/\mathbb{Z}$ be the additive group of real numbers modulo 1. Given $\alpha>0$ and $m=1$,   we denote by $\Psi_{\alpha}$ the continuous Gaussian distribution on  $\mathbb{T}$ of mean $0$ and width parameter $s:=\alpha$. Remind that, this Gaussian distribution has standard deviation of $\sigma=\alpha/\sqrt{2\pi}$.

	\begin{definition}[Discretized Gaussian]  \label{def11}
	 The discretized Gaussian distribution $\widetilde{\Psi}_{\alpha q}$ is defined by sampling $X \leftarrow \Psi_{\alpha}$ then outputting $\lfloor q \cdot X\rceil \text{ mod } q$. 
	\end{definition}
	In particular, we can define a Gaussian distribution over a subset of $\mathbb{Z}^m$, hence over any integer lattices in  $\mathbb{Z}^m$. 
	\begin{definition}[Discrete Gaussian] \label{def12}
		Let $m$ be a positive integer,  $\Lambda \subset \mathbb{Z}^m$ be any subset,  a vector $\mathbf{c}\in \mathbb{R}^m$ and a positive parameter $s$, define   $\rho_{s, \mathbf{c}}(\Lambda ):=\sum_{\mathbf{x} \in \Lambda } \rho_{s, \mathbf{c}}(\mathbf{x})$. The discrete Gaussian distribution over $\Lambda$ centered at $\mathbf{c} \in \mathbb{Z}^m$ with width parameter $s$ is defined by:
		$\forall \mathbf{x}\in \Lambda$, $D_{\Lambda,  s, \mathbf{c}}(\mathbf{x}):=\rho_{s,\mathbf{c}}(\mathbf{x})/\rho_{s,\mathbf{c}}(\Lambda).$ If $\mathbf{c}=\mathbf{0}$, we just simply write  $\rho_{s}$, $D_{\Lambda,s}.$ If $\Lambda=\mathbb{Z}$,  we can write $D_{\Lambda,s}$ as $D_{s}.$
	\end{definition}
	Note that in Definition \ref{def12}, $\Lambda$ can be a lattice over $\mathbb{Z}^m$. 	The following lemma shows the min-entropy of a discrete Gaussian.

		\begin{lemma} {{\cite[Lemma 2.1]{PR06}}} \label{min-entropy}
	Let $\Lambda \subset \mathbb{R}^n$	be a lattice and $s \geq 2\eta_{\epsilon}(\Lambda)$ for some $\epsilon\in (0,1)$. Then for any $\mathbf{c}\in \mathbb{R}^n$ and any $\mathbf{y}\in \Lambda +\mathbf{c}$, $\Pr[\mathbf{x} \gets D_{\Lambda +\mathbf{c},s}: \mathbf{x}=\mathbf{y}]\leq 2^{-n}\cdot \frac{1+\epsilon}{1-\epsilon}$.
	
		\end{lemma}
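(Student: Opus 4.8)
The plan is to write the target probability explicitly as a ratio of Gaussian masses and then control numerator and denominator separately. By Definition~\ref{def12} with center $\mathbf{0}$, for $\mathbf{y}\in\Lambda+\mathbf{c}$ one has $\Pr[\mathbf{x}\gets D_{\Lambda+\mathbf{c},s}:\mathbf{x}=\mathbf{y}]=\rho_s(\mathbf{y})/\rho_s(\Lambda+\mathbf{c})$. Since $\rho_s(\mathbf{y})\le 1$, a crude bound on the denominator is not enough to extract the factor $2^{-n}$; the crux is to exploit the hypothesis $s\ge 2\eta_\epsilon(\Lambda)$ rather than merely $s\ge\eta_\epsilon(\Lambda)$.

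The key device I would use is the scaled lattice $2\Lambda$. Because the smoothing parameter scales linearly, $\eta_\epsilon(2\Lambda)=2\eta_\epsilon(\Lambda)\le s$, so $2\Lambda$ is ``smoothed'' at width $s$. I would then partition the coset $\Lambda+\mathbf{c}$ into the $[\Lambda:2\Lambda]=2^n$ cosets of $2\Lambda$ that it contains, say $\Lambda+\mathbf{c}=\bigsqcup_{i=1}^{2^n}(2\Lambda+\mathbf{c}_i)$. For each such coset I would invoke the standard Poisson-summation estimate: for any lattice $\Lambda'$ with $s\ge\eta_\epsilon(\Lambda')$ and any shift $\mathbf{d}$, $\rho_s(\Lambda'+\mathbf{d})=\tfrac{s^n}{\det\Lambda'}\sum_{\mathbf{w}\in(\Lambda')^*}\rho_{1/s}(\mathbf{w})\cos(2\pi\langle\mathbf{w},\mathbf{d}\rangle)$, whose zero-frequency term is $s^n/\det\Lambda'$ and whose remaining terms are bounded in absolute value by $\tfrac{s^n}{\det\Lambda'}\rho_{1/s}((\Lambda')^*\setminus\{\mathbf{0}\})\le\tfrac{s^n}{\det\Lambda'}\epsilon$, exactly by the defining property of $\eta_\epsilon$. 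Hence $\rho_s(2\Lambda+\mathbf{d})\in[1-\epsilon,1+\epsilon]\cdot\tfrac{s^n}{\det(2\Lambda)}$ for every shift $\mathbf{d}$.

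Combining these, I bound the two pieces of the ratio. The point $\mathbf{y}$ lies in exactly one coset $2\Lambda+\mathbf{c}_{i_0}$, and since all summands are nonnegative, $\rho_s(\mathbf{y})\le\rho_s(2\Lambda+\mathbf{c}_{i_0})\le(1+\epsilon)\tfrac{s^n}{\det(2\Lambda)}$. For the denominator, summing the lower bound over all $2^n$ cosets gives $\rho_s(\Lambda+\mathbf{c})=\sum_{i=1}^{2^n}\rho_s(2\Lambda+\mathbf{c}_i)\ge 2^n(1-\epsilon)\tfrac{s^n}{\det(2\Lambda)}$. Taking the quotient, the common factor $s^n/\det(2\Lambda)$ cancels and yields exactly $\Pr[\mathbf{x}=\mathbf{y}]\le\tfrac{1+\epsilon}{2^n(1-\epsilon)}=2^{-n}\cdot\tfrac{1+\epsilon}{1-\epsilon}$, as claimed.

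The main obstacle, and the one place where care is genuinely needed, is recognizing that the doubling trick is what supplies the factor $2^{-n}$: it is precisely the index $[\Lambda:2\Lambda]=2^n$ that surfaces in the denominator, and the hypothesis $s\ge 2\eta_\epsilon(\Lambda)$ is exactly what guarantees $2\Lambda$ is smoothed at width $s$ so that the two-sided per-coset estimate is available. Establishing that per-coset estimate via Poisson summation (identifying the Fourier transform of $\rho_s$ as $s^n\rho_{1/s}$ and isolating the zero frequency) is the only truly analytic step; the partition into cosets, the term-by-term comparison, and the final cancellation are routine bookkeeping.
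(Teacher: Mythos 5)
Your proof is correct; note, though, that the paper offers no proof of this lemma to compare against---it is imported verbatim from \cite{PR06} as a known result. Your argument (write the probability as $\rho_s(\mathbf{y})/\rho_s(\Lambda+\mathbf{c})$, pass to the index-$2^n$ sublattice $2\Lambda$, which is smoothed at width $s$ because $\eta_{\epsilon}(2\Lambda)=2\eta_{\epsilon}(\Lambda)\leq s$, and apply the two-sided Poisson-summation estimate coset by coset) is precisely the original Peikert--Rosen proof, so you have in effect reconstructed the cited source's argument with no gaps.
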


		\begin{lemma} {{\cite[Lemma 4.4]{MR07}}} \label{bound}
			Let $q> 2$ and let $\mathbf{A}$ be a matrix in $\ZZ_q^{n\times m}$ with $m>n$. Let $\mathbf{T}_A$ be a basis for $\Lp_q(\mathbf{A})$. Then, for $s\geq\|\widetilde{\mathbf{T}_A}\|\cdot  \omega(\sqrt{\log n})$,  $$\Pr[\x\gets\DuA~:~\|\x\|>s\sqrt{m}]\leq\mathsf{negl}(n).$$
	
		\end{lemma}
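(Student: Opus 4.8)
The plan is to recover this as a standard discrete-Gaussian tail bound, using the smoothing parameter $\eta_\epsilon$ (the same quantity appearing in Lemma \ref{min-entropy}) as the bridge between the Gram--Schmidt hypothesis and the conclusion. First I would fix the geometric setup. Since $\mathbf{A}\in\mathbb{Z}_q^{n\times m}$ with $m>n$, the lattice $\Lp_q(\mathbf{A})$ is full-rank in $\mathbb{Z}^m$, hence of rank $m$, and $\L_q^{\u}(\mathbf{A})$ is the coset $\mathbf{c}+\Lp_q(\mathbf{A})$ for any fixed $\mathbf{c}\in\L_q^{\u}(\mathbf{A})$. Thus $\DuA$ is the width-$s$ discrete Gaussian centered at $\mathbf{0}$ supported on this coset, and the goal is to show it places all but negligible mass inside the ball $\mathcal{B}_m(\mathbf{0},s\sqrt{m})$.

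The first ingredient controls the smoothing parameter by the Gram--Schmidt norm. I would invoke the standard estimate that for any rank-$m$ lattice $\Lambda$ with basis $\mathbf{T}$ and any $\epsilon\in(0,1)$ one has $\eta_\epsilon(\Lambda)\leq\|\widetilde{\mathbf{T}}\|\cdot\sqrt{\ln(2m(1+1/\epsilon))/\pi}$. Choosing $\epsilon$ to be a fixed negligible function of $n$ (for concreteness $\epsilon=2^{-n}$), the factor $\sqrt{\ln(2m(1+1/\epsilon))/\pi}$ is of order $\sqrt{\log n}$ and is therefore dominated by the $\omega(\sqrt{\log n})$ in the hypothesis. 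Hence the assumption $s\geq\|\widetilde{\mathbf{T}_A}\|\cdot\omega(\sqrt{\log n})$ already forces $s\geq\eta_\epsilon(\Lp_q(\mathbf{A}))$, placing us in the regime where Gaussian mass behaves regularly.

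The second ingredient is Banaszczyk's tail inequality in its shifted-lattice form, which bounds the Gaussian mass of the coset lying outside the radius-$s\sqrt{m}$ ball: $\rho_s\!\big((\Lp_q(\mathbf{A})+\mathbf{c})\setminus\mathcal{B}_m(\mathbf{0},s\sqrt{m})\big)\leq 2^{-m}\,\rho_s(\Lp_q(\mathbf{A}))$. To convert this ratio (relative to the lattice mass) into a genuine probability I would use the smoothing property: once $s\geq\eta_\epsilon(\Lp_q(\mathbf{A}))$, the total coset mass satisfies $\rho_s(\Lp_q(\mathbf{A})+\mathbf{c})\geq\frac{1-\epsilon}{1+\epsilon}\rho_s(\Lp_q(\mathbf{A}))$. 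Dividing the two gives $\Pr[\x\gets\DuA~:~\|\x\|>s\sqrt{m}]\leq\frac{1+\epsilon}{1-\epsilon}2^{-m}$. Since $\epsilon$ is negligible in $n$ and $m>n$ yields $2^{-m}\leq 2^{-n}$, the right-hand side is $\mathsf{negl}(n)$, which is the claim.

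I expect the only genuinely delicate point to be the passage from the lattice to the coset in the last step: Banaszczyk's bound is naturally stated relative to $\rho_s(\Lp_q(\mathbf{A}))$, whereas the probability we want is normalized by the \emph{coset} mass $\rho_s(\Lp_q(\mathbf{A})+\mathbf{c})$, and it is precisely the smoothing condition $s\geq\eta_\epsilon(\Lp_q(\mathbf{A}))$ that keeps this denominator within a $\frac{1-\epsilon}{1+\epsilon}$ factor of the lattice mass and lets the $2^{-m}$ decay survive. The rest is routine bookkeeping, namely verifying that the single choice of $\epsilon$ simultaneously makes the smoothing estimate fit under the $\omega(\sqrt{\log n})$ slack and keeps $\frac{1+\epsilon}{1-\epsilon}2^{-m}$ negligible in $n$.
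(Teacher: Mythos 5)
The paper itself gives no proof of this lemma; it is imported wholesale from the literature (it is the tail bound of \cite[Lemma 4.4]{MR07} combined with the Gram--Schmidt bound on the smoothing parameter), so the comparison point is the standard proof, and your route is exactly that standard proof: full-rank coset setup, Banaszczyk's inequality stated relative to $\rho_s(\Lp_q(\mathbf{A}))$, and the smoothing estimate $\rho_s(\Lp_q(\mathbf{A})+\mathbf{c})\geq\frac{1-\epsilon}{1+\epsilon}\rho_s(\Lp_q(\mathbf{A}))$ to turn the mass ratio into a probability. Your identification of the coset-normalization issue as the delicate point is also accurate.

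However, there is one genuine error: the choice $\epsilon=2^{-n}$ does not do what you claim. With $\epsilon=2^{-n}$ you get $\ln(1/\epsilon)=n\ln 2$, so the factor $\sqrt{\ln(2m(1+1/\epsilon))/\pi}$ is $\Theta(\sqrt{n})$, not of order $\sqrt{\log n}$, and it is \emph{not} dominated by an arbitrary $\omega(\sqrt{\log n})$ function --- for instance $f(n)=\log n$ is $\omega(\sqrt{\log n})$ but $o(\sqrt{n})$. Consequently the hypothesis $s\geq\|\widetilde{\mathbf{T}_A}\|\cdot\omega(\sqrt{\log n})$ does not imply $s\geq\eta_{2^{-n}}(\Lp_q(\mathbf{A}))$, and the second paragraph of your argument collapses. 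The quantifiers must go the other way: first fix the function $f(n)=\omega(\sqrt{\log n})$ appearing in the hypothesis, and only then choose $\epsilon$ as a function of $f$, e.g.\ $\epsilon(n)\approx 2m\,e^{-\pi f(n)^2}$, so that $\|\widetilde{\mathbf{T}_A}\|\,f(n)\geq\|\widetilde{\mathbf{T}_A}\|\sqrt{\ln(2m(1+1/\epsilon))/\pi}\geq\eta_\epsilon(\Lp_q(\mathbf{A}))$. This $\epsilon$ is negligible precisely because $f(n)^2=\omega(\log n)$ forces $e^{-\pi f(n)^2}=n^{-\omega(1)}$, together with $m=\mathrm{poly}(n)$ (which holds in this paper, where $m=O(n\log q)$). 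With that single repair the rest of your computation goes through verbatim and yields $\Pr[\|\x\|>s\sqrt{m}]\leq\frac{1+\epsilon}{1-\epsilon}\cdot 2^{-m}\leq 3\cdot 2^{-n}=\mathsf{negl}(n)$.
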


	\noindent  \textbf{Lerning with Errors problem (LWE) and Short interger Solutions problem (SIS).} Let $n$ and $q \geq 2$ be positive integers and $\chi$ be a distribution on $\mathbb{Z}_q$. Given a vector $\mathbf{s} \xleftarrow{\$} \mathbb{Z}_q^n$, we define an LWE distribution $\mathcal{L}_{\mathbf{s}, \chi}$ on $\mathbb{Z}_q^n \times \mathbb{Z}_q$ as follows: first sample uniformly at random  $\mathbf{a} \xleftarrow{\$} \mathbb{Z}_q^n$, then draw according to $\chi$ an error term $e$, and finally output the pair $(\mathbf{a}, b=\langle \mathbf{a} ,\mathbf{s} \rangle+e \text{ (mod } q))$.

	\begin{definition}[{LWE, \cite{Reg05}}]\label{dlwe}
		The decisional-LWE problem ($\mathsf{dLWE}_{n,q,\chi}$) asks to distinguish a pair $(\mathbf{a},b)\leftarrow \mathcal{L}_{\mathbf{s}, \chi} $ from a pair  $(\mathbf{a},b)\xleftarrow{\$}\mathbb{Z}_q^n \times \mathbb{Z}_q$. 
	\end{definition}

	In the case that $\chi=\widetilde{\Psi}_{\alpha q}$, we instead use the notations $ \textsf{dLWE}_{n,q,\alpha}$ and $ \textsf{sLWE}_{n,q,\alpha}$, 
	 and generally mention them as the \textsf{LWE}$_{n,q,\alpha}$.  
	Regarding the hardness of \textsf{LWE}, we have the following result: 
	\begin{theorem}[{\cite[Theorem 2.16]{BLP+13}}] \label{theo1}
		 Let $n, q\geq 1$ be integers and let $\alpha \in (0,1)$ be such that $\alpha q \geq 2\sqrt{n}$. Then there exists a quantum reduction from worst-case $\mathsf{GapSVP}_{\widetilde{O}(n/\alpha)}$ to $\mathsf{LWE}_{n,q,\alpha}$. If in addition $q\geq 2^{n/2}$ then there is also a classical reduction between those problems.
		
	\end{theorem}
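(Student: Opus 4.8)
The plan is to establish the two reductions separately, handling the quantum case first and then the classical case under the additional hypothesis $q \geq 2^{n/2}$. Both directions share a conceptual core: an oracle solving $\mathsf{LWE}_{n,q,\alpha}$ can be used to decode points lying within distance roughly $\alpha q$ of a lattice --- that is, to solve bounded-distance decoding (BDD) --- and BDD in turn yields a solution to the worst-case lattice problem. The hypothesis $\alpha q \geq 2\sqrt{n}$ is exactly what guarantees that the discretized Gaussian noise $\widetilde{\Psi}_{\alpha q}$ sits above the smoothing parameter of $\mathbb{Z}$, so that the error behaves like a genuine continuous Gaussian; this fact is needed throughout.

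For the quantum reduction I would follow Regev's iterative framework. Its heart is a single \emph{width-reduction} step that, given polynomially many samples from the discrete Gaussian $D_{\mathcal{L},r}$ on the input lattice with parameter $r$, outputs samples from $D_{\mathcal{L},r'}$ with $r' \approx r/2$. This step is a classical--quantum sandwich. The classical part feeds the current samples together with the $\mathsf{LWE}$ oracle into a solver for BDD on the dual lattice $\mathcal{L}^{*}$, using the fact that an LWE sample $(\mathbf{a}, \langle \mathbf{a}, \mathbf{s} \rangle + e)$ is precisely a BDD instance on a suitable $q$-ary lattice. The quantum part then invokes this dual BDD solver, together with the quantum Fourier transform, to prepare and measure a state that yields fresh samples from a narrower discrete Gaussian on $\mathcal{L}$. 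Iterating $\mathrm{poly}(n)$ times drives $r$ down to within a small factor of the smoothing parameter $\eta_\epsilon(\mathcal{L})$, at which point the samples are short lattice vectors solving $\mathsf{SIVP}_\gamma$, and hence $\mathsf{GapSVP}_{\widetilde{O}(n/\alpha)}$ with the stated factor.

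For the classical reduction, valid under the extra hypothesis $q \geq 2^{n/2}$, the difficulty is to eliminate the inherently quantum dual-lattice step; this is Peikert's approach, as packaged by Brakerski et al. I would target the decision problem $\mathsf{GapSVP}$ directly rather than $\mathsf{SIVP}$, since its promise structure admits a purely classical attack. The idea is to run the $\mathsf{LWE}$ oracle as a distinguisher on samples built from the input lattice: in a YES instance a short vector exists and the samples carry a decodable secret, whereas in a NO instance they are statistically close to uniform. The exponentially large modulus is precisely what lets a classical algorithm separate these two cases across the whole range of relevant distances without recourse to the Fourier transform; modulus switching and a chain of hybrid arguments over LWE variants then adjust the parameters to the target $(n,q,\alpha)$.

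The main obstacle, in both halves, is the passage between a lattice and its dual. In the quantum reduction it is the Fourier-transform step, and the delicate point is verifying that the prepared quantum state is close enough to the ideal Gaussian superposition that the error does not accumulate over the $\mathrm{poly}(n)$ iterations. In the classical reduction the same passage must be made without quantum computation, and the exponential modulus is the price paid for it; confirming that the noise rate stays matched to the hypothesis $\alpha q \geq 2\sqrt{n}$ throughout the parameter adjustments is the step I would expect to demand the most care.
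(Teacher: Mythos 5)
The paper does not actually prove this statement: it is imported verbatim as \cite[Theorem 2.16]{BLP+13}, which itself only packages Regev's quantum reduction \cite{Reg05} together with Peikert's classical reduction (STOC 2009), so there is no in-paper argument to compare yours against. As a roadmap of those two literature proofs, your sketch gets the architecture right: the iterative width-reduction step (LWE oracle $\to$ BDD on the dual $\to$ quantum Fourier step $\to$ narrower discrete Gaussian samples on the primal lattice, with $r' \leq r/2$ precisely because $\alpha q \geq 2\sqrt{n}$), and a direct classical attack on the $\mathsf{GapSVP}$ promise problem when $q \geq 2^{n/2}$.

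However, three concrete points are off. First, your classical half has the Goldreich--Goldwasser-style dichotomy reversed: in a YES instance of $\mathsf{GapSVP}$ ($\lambda_1 \leq d$, i.e.\ a short vector exists) the Gaussian perturbation \emph{statistically hides} which lattice point was perturbed, so the LWE-derived BDD solver must fail with noticeable probability, whereas in a NO instance ($\lambda_1 > \gamma d$) decoding succeeds reliably; the distinguisher is exactly this success/failure asymmetry, and with your assignment (YES decodable, NO close to uniform) the argument does not run. Second, ``short lattice vectors solving $\mathsf{SIVP}_\gamma$, and hence $\mathsf{GapSVP}_{\widetilde{O}(n/\alpha)}$'' elides a genuine step: passing from SIVP to GapSVP via transference theorems loses a factor of roughly $n$, so Regev obtains the same $\widetilde{O}(n/\alpha)$ factor for GapSVP by a separate argument (using the decoding capability to estimate $\lambda_1$ directly), not as a corollary of the SIVP samples. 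Third, modulus switching plays no role in the statement at hand: for $q \geq 2^{n/2}$ Peikert's reduction applies as is, and modulus switching is the additional machinery of \cite{BLP+13} for reaching \emph{polynomial} moduli at halved dimension, which is strictly beyond what this theorem claims.
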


	\begin{definition}[SIS] \label{def2}
		For an integer $q$, a random matrix $\mathbf{A} \xleftarrow{\$} \mathbb{Z}_q^{n \times m}$, and a positive real number $\beta$, the short integer problem $\mathsf{SIS}_{q,n,m, \beta}$ is to find a non-zero vector $\mathbf{z}\in \mathbb{Z}^{m} \setminus \{\mathbf{0}\}$ satisfying that $\mathbf{A}\mathbf{z} =\mathbf{0} \text{ (mod } q)$ and $\Vert \mathbf{z}\Vert \leq \beta.$
	\end{definition}
	
	
	It is shown in \cite{MR07} and then in \cite{GPV08} that for large enough $q$, solving SIS is as hard as solving SIVP problem.  Formally, 
	
	\begin{lemma}[{\cite[Proposition 5.7]{GPV08}}] For any poly-bounded $m$, and $\beta=\textsf{poly}(n)$, and for any prime $q \geq \beta\cdot \omega(\sqrt{n\log n})$ the average-case problem $\mathsf{SIS}_{q,n,m,\beta}$ is as hard as $\mathsf{SIVP}_\gamma$ in the worst-case to within certain $\gamma=\widetilde{O}(\beta\sqrt{n})$ factor.
		
		\end{lemma}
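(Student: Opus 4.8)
I would prove this as a worst-case to average-case reduction, arguing by contraposition: assume an oracle $\mathcal{O}$ that, on a uniform $\mathbf{A}\xleftarrow{\$}\mathbb{Z}_q^{n\times m}$, returns a nonzero $\mathbf{z}\in\mathbb{Z}^m$ with $\mathbf{A}\mathbf{z}\equiv\mathbf{0}\ (\mathrm{mod}\ q)$ and $\Vert\mathbf{z}\Vert\le\beta$ with non-negligible probability, and from it build an efficient algorithm that, on an arbitrary $n$-dimensional lattice $\Lambda$ given by a basis, outputs $n$ linearly independent vectors of $\Lambda$ of length at most $\gamma\,\lambda_n(\Lambda)$ with $\gamma=\widetilde{O}(\beta\sqrt{n})$. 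The engine is a single-shot step that turns one oracle call into a short lattice vector; $\mathsf{SIVP}_\gamma$ is then solved by iterating this step to shrink a full-rank set of lattice vectors by a constant factor per round until the length $\gamma\,\lambda_n(\Lambda)$ is reached.

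\textbf{The single-shot step.} Using a discrete-Gaussian sampler on the current basis, sample $\mathbf{y}_1,\dots,\mathbf{y}_m\gets D_{\Lambda,s}$ for a width $s$ exceeding $q\cdot\eta_{\epsilon}(\Lambda)$. Since $\eta_{\epsilon}(q\Lambda)=q\,\eta_{\epsilon}(\Lambda)$, each $\mathbf{y}_i\bmod q\Lambda$ is within statistical distance $O(\epsilon)$ of uniform on $\Lambda/q\Lambda$; as $q$ is prime, $\Lambda/q\Lambda\cong\mathbb{Z}_q^n$, so writing $\mathbf{a}_i$ for the coordinate vector of $\mathbf{y}_i\bmod q\Lambda$ gives $\mathbf{A}=[\mathbf{a}_1\,|\cdots|\,\mathbf{a}_m]$ that is $O(m\epsilon)$-close to uniform in $\mathbb{Z}_q^{n\times m}$. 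Feed $\mathbf{A}$ to $\mathcal{O}$ to obtain a short $\mathbf{z}$ with $\mathbf{A}\mathbf{z}\equiv\mathbf{0}$. Because $\sum_i z_i\mathbf{a}_i\equiv\mathbf{0}\pmod q$, the combination $\mathbf{v}=\sum_i z_i\mathbf{y}_i$ lies in $q\Lambda$, hence $\mathbf{w}:=\mathbf{v}/q\in\Lambda$. The vector $\mathbf{v}$ is a signed Gaussian combination and therefore behaves like a discrete Gaussian of width $s\Vert\mathbf{z}\Vert$, giving $\Vert\mathbf{w}\Vert\lesssim s\beta\sqrt{n}/q$; choosing $s\approx q\,\eta_{\epsilon}(\Lambda)=\widetilde{O}(1)\cdot q\,\lambda_n(\Lambda)$ (via the transference bound $\eta_{\epsilon}(\Lambda)\le\widetilde{O}(1)\,\lambda_n(\Lambda)$) yields $\Vert\mathbf{w}\Vert\le\widetilde{O}(\beta\sqrt{n})\,\lambda_n(\Lambda)$, the target. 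Note that the crude tail bound of Lemma \ref{bound} on each $\Vert\mathbf{y}_i\Vert$ only controls the individual samples; the tight $\sqrt{n}$ factor (rather than $\sqrt{nm}$) comes from the Gaussian concentration of the whole sum.

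\textbf{Nonzero output, independence, and iteration.} Two things must be ensured: $\mathbf{w}\neq\mathbf{0}$, and, across rounds, $\mathbf{w}$ is independent of the vectors already collected. Both rest on Lemma \ref{min-entropy}: conditioned on everything revealed to $\mathcal{O}$, namely the cosets $\mathbf{a}_i=\mathbf{y}_i\bmod q\Lambda$ and the resulting $\mathbf{z}$, each $\mathbf{y}_i$ is still distributed as a discrete Gaussian on a coset of $q\Lambda$ (an $n$-dimensional lattice) with width above $2\eta_{\epsilon}(q\Lambda)$, so for any fixed value its probability is at most $2^{-n}\cdot\frac{1+\epsilon}{1-\epsilon}$. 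Fixing one index $i$ with $z_i\neq 0$ and averaging over $\mathbf{y}_i$, the probability that $\mathbf{v}$ equals the single value $\mathbf{0}$ is thus exponentially small, so $\mathbf{w}\neq\mathbf{0}$ with overwhelming probability; a similar spread argument (the width is above smoothing in every direction) shows $\mathbf{w}$ falls into a fixed proper subspace with probability bounded away from $1$, which after amplification yields a vector outside the span of those found so far. Collecting $n$ such vectors over polynomially many rounds produces a full-rank short set, exactly a solution to $\mathsf{SIVP}_\gamma$.

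\textbf{Main obstacle.} The delicate point is not the algebra but the calibration of $s$: statistical uniformity of $\mathbf{A}$ demands $s\gtrsim q\,\eta_{\epsilon}(\Lambda)$, yet the input basis may be far from optimal, so the sampler can only be run at width proportional to $\Vert\widetilde{\mathbf{B}}\Vert$, initially much larger than $\lambda_n(\Lambda)$. Reconciling the two forces the iterative ``bootstrapping'' structure: start from the (possibly poor) basis, extract shorter vectors, re-reduce the basis to shrink $\Vert\widetilde{\mathbf{B}}\Vert$, lower $s$ accordingly, and repeat, halving the length guarantee each round while maintaining full rank by the min-entropy argument above, until no progress below $\widetilde{O}(\beta\sqrt{n})\,\lambda_n(\Lambda)$ remains. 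Controlling the final $\gamma$ and showing the non-negligible oracle success probability survives the polynomially many rounds, by amplification and a union bound over the constant-factor reductions, is where the quantitative work concentrates.
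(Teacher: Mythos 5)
The paper does not prove this lemma: it is imported verbatim as a citation of \cite[Proposition 5.7]{GPV08} (prefaced only by ``It is shown in \cite{MR07} and then in \cite{GPV08}\dots''), so there is no in-paper argument to compare yours against. What can be judged is whether your sketch is a faithful reconstruction of the reduction in the cited source, and in outline it is: sampling $\mathbf{y}_i \gets D_{\Lambda,s}$ with $s$ above $q\,\eta_\epsilon(\Lambda)$, mapping to near-uniform $\mathbf{a}_i\in\mathbb{Z}_q^n$ via $\Lambda/q\Lambda\cong\mathbb{Z}_q^n$, pulling the oracle's short $\mathbf{z}$ back to $\mathbf{v}=\sum_i z_i\mathbf{y}_i\in q\Lambda$, setting $\mathbf{w}=\mathbf{v}/q\in\Lambda$, ruling out $\mathbf{w}=\mathbf{0}$ by conditional min-entropy (the paper's Lemma \ref{min-entropy}), and iterating with basis re-reduction is exactly the Micciancio--Regev/GPV structure that yields $\gamma=\widetilde{O}(\beta\sqrt{n})$.

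As a standalone proof, however, three load-bearing steps are asserted rather than established. First, the bound $\Vert\sum_i z_i\mathbf{y}_i\Vert\lesssim s\Vert\mathbf{z}\Vert\sqrt{n}$: you correctly note that applying Lemma \ref{bound} to each $\mathbf{y}_i$ separately only gives $\beta s\sqrt{nm}$, and that the $\sqrt{n}$ must come from concentration of the signed combination; but that concentration statement is precisely the lemma you would have to prove, and it is delicate because $\mathbf{z}$ is adversarial and correlated with the cosets of the $\mathbf{y}_i$ --- one needs that, conditioned on the cosets and on $\mathbf{z}$, the combination is still subgaussian of parameter $\approx s\Vert\mathbf{z}\Vert$. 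This is the crux of getting $\gamma=\widetilde{O}(\beta\sqrt{n})$ rather than $\widetilde{O}(\beta\sqrt{nm})$. Second, the hypothesis $q\geq\beta\cdot\omega(\sqrt{n\log n})$ is never actually used in your argument, which signals that the quantitative core is missing: it arises from demanding per-round progress. The sampler can only be run at width $s\approx\Vert\widetilde{\mathbf{B}}\Vert\cdot\omega(\sqrt{\log n})$ relative to the current vectors of length $\ell$, the new vector has length $\approx s\beta\sqrt{n}/q$, and shrinkage by a constant factor therefore forces $q\gtrsim\beta\sqrt{n}\cdot\omega(\sqrt{\log n})$; a complete proof must exhibit this, otherwise the stated modulus bound is unexplained. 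Third, the linear-independence step needs the standard fact that a discrete Gaussian at width above smoothing places probability bounded away from $1$ on any fixed hyperplane; this must be cited or proved for the ``collect $n$ independent vectors'' conclusion to go through. With those three items filled in, your sketch would indeed amount to the proof of the cited proposition.
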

	
	The following lemma gives a condition for which the $\mathsf{SIS}_{n,m,q,\beta}$ problem has a solution. 
	
	\begin{lemma}[{\cite[Lemma 5.2]{MR07}}] \label{sissolution}
		For any $q$, $\textbf{A} \in \mathbb{Z}_q^{n \times m}$, and $\beta \geq \sqrt{m}q^{n/m}$, the $ \mathsf{SIS}_{n,m,q,\beta} $ admits a solution.
	\end{lemma}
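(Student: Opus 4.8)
The plan is to prove existence non-constructively by a counting (pigeonhole) argument: I will exhibit a finite set of integer points that is strictly larger than the number $q^n$ of residue vectors in $\mathbb{Z}_q^n$, so that the linear map $\mathbf{x}\mapsto \mathbf{A}\mathbf{x}\bmod q$ is forced to collide on two of them, and the difference of those two points is the desired short kernel vector.

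First I would fix the box radius $B:=\lfloor q^{n/m}\rfloor$ and consider the finite set $\mathcal{B}:=\{0,1,\dots,B\}^m\subseteq\mathbb{Z}^m$, which contains exactly $(B+1)^m$ points. Since $\lfloor q^{n/m}\rfloor+1>q^{n/m}$ and the map $t\mapsto t^m$ is strictly increasing on the positive reals (because $m\geq 1$), this gives $\vert\mathcal{B}\vert=(B+1)^m>\bigl(q^{n/m}\bigr)^m=q^n$. Next I would apply the pigeonhole principle to $f\colon\mathcal{B}\to\mathbb{Z}_q^n$ defined by $f(\mathbf{x})=\mathbf{A}\mathbf{x}\bmod q$: as $\vert\mathcal{B}\vert>q^n=\vert\mathbb{Z}_q^n\vert$, the map $f$ cannot be injective, so there exist distinct $\mathbf{x}_1,\mathbf{x}_2\in\mathcal{B}$ with $\mathbf{A}\mathbf{x}_1\equiv\mathbf{A}\mathbf{x}_2\pmod q$. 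Setting $\mathbf{z}:=\mathbf{x}_1-\mathbf{x}_2$ then yields $\mathbf{z}\neq\mathbf{0}$ and $\mathbf{A}\mathbf{z}\equiv\mathbf{0}\pmod q$, which is precisely the homogeneous constraint of $\mathsf{SIS}_{n,m,q,\beta}$.

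Finally I would bound the norm. Each coordinate of $\mathbf{z}$ is a difference of two entries from $\{0,\dots,B\}$, hence lies in $\{-B,\dots,B\}$, so $\vert z_i\vert\leq B$ for all $i$. Consequently $\Vert\mathbf{z}\Vert\leq\sqrt{m}\,B\leq\sqrt{m}\,q^{n/m}\leq\beta$, where the middle inequality uses $B=\lfloor q^{n/m}\rfloor\leq q^{n/m}$ and the last is the hypothesis on $\beta$. Thus $\mathbf{z}$ is a valid SIS solution, establishing the claim.

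There is no genuinely hard step here; the proof is elementary pigeonhole. The only points requiring mild care are the choice $B=\lfloor q^{n/m}\rfloor$, which is engineered so that \emph{both} $(B+1)^m>q^n$ holds (guaranteeing a collision) \emph{and} $B\leq q^{n/m}$ holds (keeping the Euclidean norm within $\beta$); one should also note that the floor makes the two requirements compatible even when $q^{n/m}$ is irrational or exactly an integer, and that the argument is purely existential and does not yield an efficient algorithm for finding $\mathbf{z}$.
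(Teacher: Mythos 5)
Your proof is correct, and it is essentially the same argument as in the cited source: the paper itself states this lemma without proof (citing \cite[Lemma 5.2]{MR07}), and the proof there is exactly your pigeonhole argument over the box $\{0,\dots,\lfloor q^{n/m}\rfloor\}^m$, whose cardinality $(\lfloor q^{n/m}\rfloor+1)^m$ exceeds $q^n$, yielding a collision whose difference has norm at most $\sqrt{m}\,q^{n/m}\leq\beta$. Your handling of the floor, the strict inequality needed for the collision, and the norm bound are all accurate, so there is nothing to fix.
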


\noindent \textbf{Gadget-based Trapdoor.} We will recall the notion of \textbf{G}-trapdoor and its related algorithms.

\begin{definition}[\textbf{G}-trapdoors,{\cite[Definittion 5.2]{MP12}}] \label{gtrapdoor}
	Let $ n, q, m, k $ be positive integers and $ \mathbf{A}\in \mathbb{Z}_q^{n\times m} $, $\mathbf{G} \in \mathbb{Z}_q^{n\times nk}$ be matrices with $m \geq nk$. Let $ \mathbf{H}\in \mathbb{Z}_q^{n\times n} $ be some invertible matrix. The $ \mathbf{G} $-trapdoor for $ \mathbf{A} $ with tag $ \mathbf{H} $ is a matrix $ \mathbf{R} \in \mathbb{Z}^{(m-nk)\times nk} $ such that $ \mathbf{A}\bigl[\begin{smallmatrix}
	\mathbf{R}\\ \textbf{I}_{nk}
	\end{smallmatrix} \bigr] = \mathbf{H}\mathbf{G} \pmod q $.  
\end{definition}
The largest singular value $s_1(\textbf{R})$ is used to measure the quality of a \textbf{G}-trapdoor $ \mathbf{R} $ by its. Note that, $s_1(\textbf{R})$  is essentially small as claimed in the following lemma. 
\begin{lemma}[{\cite[Lemma 2.9]{MP12}}] \label{supnorm}
	Let $ D_\sigma^{n\times m} $ be a discrete Gaussian distribution with parameter $ \sigma $ and $ \mathbf{R}\leftarrow D_\sigma^{n\times m} $.  Then with overwhelming probability $ s_1(\mathbf{R}) \le \sigma \cdot \frac{1}{\sqrt{2\pi}}\cdot (\sqrt{n}+\sqrt{m}) $.
\end{lemma}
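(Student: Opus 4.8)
The plan is to read $s_1(\mathbf{R})$ as the operator norm $s_1(\mathbf{R}) = \sup_{\|\mathbf{u}\|=\|\mathbf{v}\|=1} \mathbf{v}^t \mathbf{R}\,\mathbf{u}$ and to exploit that the columns of $\mathbf{R}$ are independent discrete Gaussian vectors, hence subgaussian. First I would record the scalar fact that a one-dimensional discrete Gaussian $D_{\mathbb{Z},\sigma}$ with width parameter $\sigma$ is subgaussian with effective standard deviation $\sigma/\sqrt{2\pi}$ (this is precisely the width-to-deviation conversion fixed in the Gaussian subsection, and explains the factor $\tfrac{1}{\sqrt{2\pi}}$ in the claimed bound): its moment generating function obeys $\mathbb{E}[\exp(2\pi t X)] \le \exp(\pi \sigma^2 t^2)$. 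Since the entries of $\mathbf{R}$ are independent, for fixed unit vectors $\mathbf{u},\mathbf{v}$ the linear form $\mathbf{v}^t \mathbf{R}\,\mathbf{u} = \sum_{i,j} v_i R_{ij} u_j$ is a centered subgaussian scalar with variance proxy $\sigma^2/(2\pi)$, giving the pointwise tail bound $\Pr[\mathbf{v}^t\mathbf{R}\,\mathbf{u} > \tau] \le \exp(-\pi \tau^2/\sigma^2)$.

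Next I would upgrade this pointwise estimate to a uniform bound over the two unit spheres by an $\epsilon$-net argument. Choosing $\epsilon$-nets $\mathcal{N}_n \subset S^{n-1}$ and $\mathcal{N}_m \subset S^{m-1}$ of cardinalities at most $(1+2/\epsilon)^n$ and $(1+2/\epsilon)^m$, the standard approximation lemma gives $s_1(\mathbf{R}) \le (1-2\epsilon)^{-1}\max_{\mathbf{v}\in\mathcal{N}_n,\,\mathbf{u}\in\mathcal{N}_m} \mathbf{v}^t\mathbf{R}\,\mathbf{u}$. A union bound over the at most $(1+2/\epsilon)^{n+m}$ pairs then yields
$$\Pr\left[\max_{\mathcal{N}_n,\mathcal{N}_m}\mathbf{v}^t\mathbf{R}\,\mathbf{u} > \tau\right] \le (1+2/\epsilon)^{n+m}\exp\left(-\pi\tau^2/\sigma^2\right).$$
Taking $\tau$ a constant multiple of $\tfrac{\sigma}{\sqrt{2\pi}}(\sqrt{n}+\sqrt{m})$ makes the exponential decay outrun the net entropy, so the bad event has probability $2^{-\Omega(n+m)}$ and the estimate holds with overwhelming probability.

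The delicate point, and the one I expect to be the main obstacle, is extracting the \emph{sharp} leading constant $\tfrac{1}{\sqrt{2\pi}}$ with exactly the factor $(\sqrt{n}+\sqrt{m})$, rather than some loose universal constant $C$. A short calculation shows the crude net argument above cannot do this: with $\tau = \tfrac{\sigma}{\sqrt{2\pi}}(\sqrt{n}+\sqrt{m})$ the exponent is only $\tfrac12(\sqrt n+\sqrt m)^2$, which the net entropy $\log(1+2/\epsilon)^{n+m}$ swallows for every admissible $\epsilon<1$, forcing $\tau$ to be inflated by a constant $>1$. To recover the stated clean constant I would instead compare against the continuous model: for a matrix with i.i.d.\ $N(0,(\sigma/\sqrt{2\pi})^2)$ entries, Chevet's (equivalently Gordon's) inequality gives the sharp expectation bound $\mathbb{E}[s_1] \le \tfrac{\sigma}{\sqrt{2\pi}}(\sqrt{n}+\sqrt{m})$, and Gaussian concentration of measure (the operator norm being $1$-Lipschitz) promotes this to a high-probability bound. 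The residual technical work is transferring the continuous estimate to the discrete Gaussian $D_\sigma$, which is legitimate because the subgaussian parameters of $D_{\mathbb{Z},\sigma}$ are dominated by those of the continuous Gaussian once $\sigma$ exceeds the smoothing parameter $\eta_\epsilon(\mathbb{Z})$, so the comparison inequalities carry over to the lattice setting.
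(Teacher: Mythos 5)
This lemma is quoted by the paper directly from \cite[Lemma 2.9]{MP12} without proof, so there is no in-paper argument to compare against; what can be assessed is whether your attempt establishes the statement as written. The first half of your proposal (subgaussianity of $D_{\mathbb{Z},\sigma}$ with width $\sigma$, i.e.\ standard deviation $\sigma/\sqrt{2\pi}$, followed by an $\epsilon$-net and a union bound) is the standard route --- it is essentially the proof of the result that MP12 itself cites (Vershynin's Theorem 5.39) --- and your self-diagnosis is also correct: that route only yields $s_1(\mathbf{R}) \le C\cdot\sigma\cdot(\sqrt{n}+\sqrt{m})$ with overwhelming probability for some universal constant $C$ strictly larger than $1/\sqrt{2\pi}$.

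The genuine gap is in the repair you propose for the sharp constant. Chevet's/Gordon's inequality and the concentration step you invoke (the operator norm being $1$-Lipschitz, hence Borell--TIS applies) are intrinsically Gaussian results: they rest on Gaussian comparison (Slepian/Gordon) and Gaussian isoperimetry, both of which exploit the joint rotation-invariant structure of a Gaussian matrix, not merely bounds on one-dimensional moment generating functions. Domination of the subgaussian parameter of $D_{\mathbb{Z},\sigma}$ by that of the continuous Gaussian gives you marginal MGF control, which suffices for the net argument but is not the hypothesis of any comparison inequality; Slepian/Gordon do not ``carry over'' to subgaussian ensembles, and the sharp $\sqrt{n}+\sqrt{m}$ behaviour (Bai--Yin asymptotics) genuinely requires more than subgaussianity of the entries. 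So the last step of your plan does not go through as written. You should not regard this as an easily fixable flaw on your side: MP12 proves its Lemma 2.9 only with an unspecified universal constant $C$, and offers $C \approx 1/\sqrt{2\pi}$ for discrete Gaussian matrices as the empirically observed value matching the continuous asymptotics. The statement with the clean constant $\frac{1}{\sqrt{2\pi}}(\sqrt{n}+\sqrt{m})$, as reproduced in this paper, is therefore itself a (common) overstatement of the cited rigorous result, and your proposal, precisely because it tracks constants honestly, ends up exposing exactly this discrepancy rather than closing it.
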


Let $ k = \lceil \log_2 q \rceil $, and $ \mathbf{g}^t = (1,2,4,...,2^{k-1}) \in \mathbf{Z}^{k}_q  $.  We will be working with 
$ \mathbf{G} = \mathbf{I}_n\otimes \mathbf{g}^t \in \mathbb{Z}_q^{n\times nk}, $ where $ \otimes $ denotes the tensor product. 
 Further details can be found in \cite{MP12}. We will exploit the  following algorithms for the proposed \textsf{SCET} construction. 



	\begin{lemma} \label{trapdoor} 
	Let $ q\ge 2, \overline{m} \ge 1 $, $ k = \lceil \log_2  q \rceil $, and $m = \overline{m}+nk=O(n\log q)$.
	\begin{enumerate}

	\item $(\mathbf{A},\mathbf{R})\leftarrow \mathsf{GenTrap}(n, \overline{m},q, \sigma) $ \cite[Algorithm 1]{MP12}: On input integer $n, \overline{m}, q, \sigma$,  $\mathsf{GenTrap}$ chooses a uniform matrix $ \overline{\mathbf{A}} \in \mathbb{Z}_q^{n\times \overline{m}} $ and a matrix $ \mathbf{H} \in \mathbb{Z}_q^{n\times n} $, then outputs a random matrix $ \mathbf{A} = \left[\overline{\mathbf{A}}|\mathbf{H}\mathbf{G} - \overline{\mathbf{A}}\mathbf{R} \right]  $ and a $ \mathbf{G} $-trapdoor $ \mathbf{R} \sim D_{\sigma}^{\overline{m}\times nk} $ with tag $ \mathbf{H} $. The condition for Gaussian parameter $\sigma$ is that for any $\epsilon \in (0,1)$, $ \sigma  \geq \eta_{\epsilon}(\mathbb{Z})$, i.e.,  $\sigma \geq  \sqrt{\frac{\ln(2(1+1/\epsilon))}{\pi}}$. Moreover,  there exists $\epsilon=\epsilon(n)$ negligible for which $\sigma \geq \omega(\sqrt{\log n})$. Note also that, $ s_1(\mathbf{R}) \le \sigma \cdot \frac{1}{\sqrt{2\pi}}\cdot (\sqrt{\overline{m}}+\sqrt{nk})$ by Lemma \ref{supnorm}. 
	
	\item $\mathbf{e}\leftarrow \mathsf{SampleD}(\mathbf{A},\mathbf{R},\mathbf{H},\mathbf{u},\sigma) $ \cite[Algorithm 3]{MP12}:  On input a matrix $ \mathbf{A} \in \mathbb{Z}_q^{n\times (\overline{m}+nk)} $ and its $ \mathbf{G} $-trapdoor $ \mathbf{R} \in \mathbb{Z}^{\overline{m}\times nk} $, an invertible matrix $ \mathbf{H} \in \mathbb{Z}_q^{n\times n} $, a uniform vector $\mathbf{u}\xleftarrow{\$} \mathbb{Z}_q^n$ and a Gaussian parameter $ \sigma$,   $\mathsf{SampleD}$ outputs a vector $ \mathbf{e} \in \mathbb{Z}^{m+nk} \sim D_{\Lambda_q^{\mathbf{u}}(\mathbf{A}),\sigma} $. The condition for $ \sigma$ is that $ \sigma \geq \sqrt{7( s_1(\mathbf{R})^2+1)}\cdot \omega(\sqrt{\log n})$ (see \cite[Section 5.4]{MP12}).
	

	\item $(\mathbf{s},\mathbf{e})  \leftarrow \mathsf{Invert}(\mathbf{R},\mathbf{A},\mathbf{b}^t=\mathbf{s}^t\mathbf{A}+\mathbf{e}^t) $ \cite[Algorithm 2]{MP12}: On input a uniform matrix  $\mathbf{A}$ and its $\mathbf{G}$-trapdoor $\mathbf{R},$ and a vector $\mathbf{b}$ such that $\mathbf{b}^t=\mathbf{s}^t\mathbf{A}+\mathbf{e}^t$,  $\mathsf{Invert}$ returns $(\mathbf{s}$ and $\mathbf{e})$. Note that if $\textbf{e} \gets D_{\mathbb{Z}^m,\alpha q}$ and $1/\alpha \geq 2 \sqrt{5 (s_1(\mathbf{R})^2+1)}\cdot \omega(\sqrt{\log n})$ then  $\mathsf{Invert}$ succeeds with overwhelming probability over the choice of $\mathbf{e}$ (see \cite[Theorem 5.4]{MP12}).

	\end{enumerate}
	\end{lemma}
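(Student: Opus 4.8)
The plan is to treat Lemma \ref{trapdoor} as a consolidation of the three core routines of the Micciancio--Peikert trapdoor framework \cite{MP12}, so I would not reprove anything from scratch but rather proceed item by item, invoking the corresponding construction of \cite{MP12} and checking that the parameter regime asserted here coincides with the one required there.

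For item (1), I would first verify the trapdoor relation directly: writing $\mathbf{A}=[\overline{\mathbf{A}}\mid \mathbf{H}\mathbf{G}-\overline{\mathbf{A}}\mathbf{R}]$ and computing
\[
\mathbf{A}\begin{bmatrix}\mathbf{R}\\ \mathbf{I}_{nk}\end{bmatrix}=\overline{\mathbf{A}}\mathbf{R}+(\mathbf{H}\mathbf{G}-\overline{\mathbf{A}}\mathbf{R})=\mathbf{H}\mathbf{G}\pmod q,
\]
which shows that $\mathbf{R}$ is a $\mathbf{G}$-trapdoor for $\mathbf{A}$ with tag $\mathbf{H}$ in the sense of Definition \ref{gtrapdoor}. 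Near-uniformity of the output $\mathbf{A}$ over $\mathbb{Z}_q^{n\times m}$ is the regularity statement accompanying \cite[Algorithm 1]{MP12}: provided each column of $\mathbf{R}$ is drawn from $D_{\sigma}$ with $\sigma\ge\eta_{\epsilon}(\mathbb{Z})$ (equivalently $\sigma\ge\sqrt{\ln(2(1+1/\epsilon))/\pi}$), the product $\overline{\mathbf{A}}\mathbf{R}$ is statistically close to uniform, so the stated condition on $\sigma$ is exactly what is needed; the remark that some negligible $\epsilon=\epsilon(n)$ yields $\sigma\ge\omega(\sqrt{\log n})$ is just the standard asymptotic reading of this smoothing bound. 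The singular-value estimate $s_1(\mathbf{R})\le\sigma\cdot\frac{1}{\sqrt{2\pi}}(\sqrt{\overline{m}}+\sqrt{nk})$ is then immediate from Lemma \ref{supnorm} applied with dimensions $\overline{m}\times nk$.

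For items (2) and (3) I would cite \cite[Algorithm 3]{MP12} and \cite[Algorithm 2]{MP12} together with their correctness analyses. For $\mathsf{SampleD}$, the output lies in $\Lambda_q^{\mathbf{u}}(\mathbf{A})$ and is distributed as $D_{\Lambda_q^{\mathbf{u}}(\mathbf{A}),\sigma}$ whenever $\sigma\ge\sqrt{7(s_1(\mathbf{R})^2+1)}\cdot\omega(\sqrt{\log n})$, exactly as in \cite[Section 5.4]{MP12}; the key point to check is that the quantity $s_1(\mathbf{R})$ entering this threshold is precisely the one supplied by item (1), so the three items chain together consistently. For $\mathsf{Invert}$, correctness is the content of \cite[Theorem 5.4]{MP12}: given $\mathbf{b}^t=\mathbf{s}^t\mathbf{A}+\mathbf{e}^t$ with $\mathbf{e}\gets D_{\mathbb{Z}^m,\alpha q}$, the algorithm recovers $(\mathbf{s},\mathbf{e})$ as long as the error is short relative to the trapdoor quality, quantified by $1/\alpha\ge 2\sqrt{5(s_1(\mathbf{R})^2+1)}\cdot\omega(\sqrt{\log n})$. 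I would note that this works because $\mathbf{e}\gets D_{\mathbb{Z}^m,\alpha q}$ has norm at most $\alpha q\sqrt{m}$ with overwhelming probability (cf. Lemma \ref{bound}), keeping it inside the decoding radius determined by $s_1(\mathbf{R})$.

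The main obstacle here is not difficulty but bookkeeping: since all three statements are quoted essentially verbatim from \cite{MP12}, the only genuine work is to confirm that the conditions on $\sigma$ and $\alpha$ are internally compatible --- in particular that the upper bound on $s_1(\mathbf{R})$ produced by item (1) is small enough for the thresholds in items (2) and (3) to be simultaneously satisfiable by the parameters ultimately fixed in Section \ref{para}. Accordingly, I would defer the concrete verification of simultaneous satisfiability to the parameter-setting section rather than reproving the underlying \cite{MP12} guarantees.
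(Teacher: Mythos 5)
Your proposal is correct and takes essentially the same approach as the paper: the paper states Lemma \ref{trapdoor} without any proof, treating it purely as a citation of the \textsf{GenTrap}, \textsf{SampleD} and \textsf{Invert} algorithms and their parameter conditions from \cite{MP12}, which is exactly what you do. Your additional algebraic verification of the trapdoor relation and the invocation of Lemma \ref{supnorm} for the bound on $s_1(\mathbf{R})$ are sound, but they only make explicit what the citation already carries.
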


	We adapt Lemma 6 in \cite{DM14} for scalar tags (i.e., $\mathbf{H}=x\cdot \mathbf{I}_n$ for some $x \in \mathbb{Z}_q\setminus \{0\}$) to get the following lemma which will be helpful for the security analysis in Section \ref{security}:
	
		\begin{lemma}[{Adapted from  \cite[Lemma 6]{DM14}}] \label{lemma4}
	For $i=0,\cdots, n$, let $\mathbf{T}^{(i)}$ be $\mathbf{G}$-trapdoor for $[\mathbf{A}|\mathbf{A}^{(i)}] \in \mathbb{Z}_q^{n\times (m-k)}\times \mathbb{Z}_q^{n \times k}$ with tag $\mathbf{H}^{(i)}=x_i \mathbf{I}_n$ for some $x_i \in \mathbb{Z}_q\setminus \{0\}$. Then any linear combination $\mathbf{T}=\sum_{i=1}^{n}h_i \mathbf{T}^{(i)}$ with $h_i \in \mathbb{Z}_q$ is a $\mathbf{G}$-trapdoor for $[\mathbf{A}|\sum_{i=1}^{n}h_i \mathbf{A}^{(i)}]$ with tag $\mathbf{H}=\sum_{i=1}^{n}h_i \mathbf{H}^{(i)}=(\sum_{i=1}^{n}h_i x_i)  \mathbf{I}_n \neq \mathbf{0}$.
			\end{lemma}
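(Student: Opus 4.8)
The statement is, at heart, an exercise in the linearity of the trapdoor-defining equation, so the plan is to unfold Definition \ref{gtrapdoor} for each index and then take the prescribed linear combination. Applying Definition \ref{gtrapdoor} to $[\mathbf{A}|\mathbf{A}^{(i)}]$ with tag $\mathbf{H}^{(i)}=x_i\mathbf{I}_n$, the trapdoor $\mathbf{T}^{(i)}$ satisfies
\[
[\mathbf{A}|\mathbf{A}^{(i)}]\bigl[\begin{smallmatrix}\mathbf{T}^{(i)}\\ \mathbf{I}\end{smallmatrix}\bigr]=\mathbf{A}\mathbf{T}^{(i)}+\mathbf{A}^{(i)}=x_i\mathbf{G}\pmod q,
\]
where I used $\mathbf{H}^{(i)}\mathbf{G}=x_i\mathbf{I}_n\mathbf{G}=x_i\mathbf{G}$. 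This single identity, written out for each $i$, is the only input the argument needs.

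Next I would set $\mathbf{T}=\sum_{i=1}^{n}h_i\mathbf{T}^{(i)}$ and verify the defining equation for the matrix $[\mathbf{A}|\sum_i h_i\mathbf{A}^{(i)}]$ directly. Multiplying out and using the per-index identity,
\[
[\mathbf{A}|\textstyle\sum_i h_i\mathbf{A}^{(i)}]\bigl[\begin{smallmatrix}\mathbf{T}\\ \mathbf{I}\end{smallmatrix}\bigr]=\mathbf{A}\mathbf{T}+\sum_i h_i\mathbf{A}^{(i)}=\sum_i h_i\bigl(\mathbf{A}\mathbf{T}^{(i)}+\mathbf{A}^{(i)}\bigr)=\sum_i h_i x_i\,\mathbf{G}=\bigl(\textstyle\sum_i h_i x_i\bigr)\mathbf{G}=\mathbf{H}\mathbf{G}\pmod q.
\]
Thus $\mathbf{T}$ meets the defining relation of Definition \ref{gtrapdoor} for $[\mathbf{A}|\sum_i h_i\mathbf{A}^{(i)}]$ with tag $\mathbf{H}=(\sum_i h_i x_i)\mathbf{I}_n$, which is exactly the claimed conclusion. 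The norm of $\mathbf{T}$ is of course no longer drawn from a single Gaussian, but its quality is controlled by $s_1(\mathbf{T})\le\sum_i|h_i|\,s_1(\mathbf{T}^{(i)})$ via Lemma \ref{supnorm}; since the statement only asserts that $\mathbf{T}$ \emph{is} a trapdoor, I would relegate this bound to the places where the lemma is actually invoked.

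The one genuinely non-bookkeeping point is the admissibility of the resulting tag: Definition \ref{gtrapdoor} requires $\mathbf{H}$ to be invertible, and for the scalar matrix $(\sum_i h_i x_i)\mathbf{I}_n$ this amounts to $\sum_i h_i x_i\neq 0\pmod q$, which for prime $q$ is precisely the condition $\mathbf{H}\neq\mathbf{0}$ recorded in the statement. This is where I expect the real content to sit, because it is not furnished by the algebra above — it must be \emph{engineered} by the choice of the scalars $x_i$ and the coefficients $h_i$. In the security analysis of Section \ref{security} this is exactly the role of the abort-resistant hash functions of Lemma \ref{lemma3}: they guarantee that $\sum_i h_i x_i$ is nonzero, so that $\mathbf{H}$ is invertible and $\mathbf{T}$ a usable $\mathbf{G}$-trapdoor, outside the abort event whose probability is separately bounded. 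I would therefore present the nonvanishing of $\sum_i h_i x_i$ as the standing hypothesis of the lemma and defer its verification to each invocation.
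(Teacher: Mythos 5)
Your proof is correct and is exactly the intended argument: the paper gives no proof of this lemma at all, citing it as an adaptation of \cite[Lemma 6]{DM14}, whose content is precisely the linearity computation you wrote out. Your further observation that $\sum_{i} h_i x_i \neq 0 \pmod q$ (invertibility of the tag) is a standing hypothesis rather than a consequence of the algebra, and that it is enforced at each invocation via the abort-resistant hash functions of Lemma \ref{lemma3}, matches exactly how the lemma is used in the paper's SUF-iCMA security proof.
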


	\section{Framework of Signcryption Scheme with Equality Test} \label{scet}

	There are two settings for an SCET scheme depending on the number of users joining the scheme  \cite{ADR02}.  While in two-user setting, there are only one receiver and only one sender, the multi-user setting involves with multiple receivers and senders.  
	In  this setting, it is supposed that the attacker knows all public keys of all receivers and of all senders when he accesses the communication channel between the target sender  and the target receiver. See \cite[Subsection 1.3]{Baek07} for more details.	
	
	From now on, we suppose that in a SCET scheme, there are $N$ receivers and $M$ senders. We also use $r$ (resp., $s$) to represent the index of a receiver (resp., a sender).

	\subsection{Syntax} \label{syn}
	 A SCET is a tuple of algorithms \textsf{Setup},  \textsf{KGr}, \textsf{KGs}, \textsf{SC} and \textsf{USC},  \textsf{Tag},  and \textsf{Test} which is described as follows: 
\begin{itemize}
\item \textsf{Setup($1^{\lambda}$)} is a probabilistic polynomial time (PPT) algorithm that takes as input a security parameter $\lambda$ to output a set of public parameters $pp$.
\item $ \textsf{KGr}(pp)$ (resp., $\textsf{KGs}(pp)$) is a PPT algorithm that on input the set of public parameters $pp$, outputs a public key $pk_r$ and a private key $sk_r$ for a receiver $\mathcal{R}$ (resp., a public key $pk_s$ and a private key $sk_s$ for a sender $\mathcal{S}$). 	  
\item	 \textsf{SC}$(pk_r,sk_s,\mu)$ is a PPT algorithm takes as input  a public key $pk_r$ of a receiver, a private key $ sk_s$ of a sender  and a message $\mu$ in the message space  $\mathcal{M}$  to output a  ciphertext $ct$. 
\item   \textsf{USC$(sk_r,pk_s,ct)$} is a deterministic polynomial time (DPT) algorithm takes as input the private key $ sk_r$ of a receiver,  a public key $pk_s$ of a sender  and a ciphertext $ct$ to output a message $\mu$ or an invalid $\bot$. 
\item	\textsf{Tag$(sk_r)$} is a DPT algorithm that on  input a private key  $sk_r $ of a receiver to output a tag $tg_r$.
\item	\textsf{Test$(tg_1, ct_1, tg_2, ct_2)$} is a DPT algorithm that takes as input two pairs of tag/ciphertext  $(tg_1, ct_1)$, $(tg_2, ct_2)$ to output $1$ if $ct_1$ and $ct_2$ are generated on the same message or $0$ otherwise.
\end{itemize}

	\subsection{Correctness} \label{cor}
	Let $\lambda$ be any security parameter. For any $pp \leftarrow \textsf{Setup}(1^{\lambda})$, $(pk_r, sk_r) \leftarrow \textsf{KGr} (pp)$, $(pk_s, sk_s) \leftarrow \textsf{KGs} (pp)$,  $(pk_{r_1}, sk_{r_1}) $ $ \leftarrow \textsf{KGr} (pp)$, $(pk_{s_1}, sk_{s_1}) \leftarrow \textsf{KGs} (pp)$, $(pk_{r_2}, sk_{r_2}) \leftarrow \textsf{KGr} (pp)$, $(pk_{s_2}, sk_{s_2}) \leftarrow \textsf{KGs} (pp)$, $tg_1 \leftarrow \textsf{Tag}(sk_{r_1})$ and $tg_2 \leftarrow \textsf{Tag}(sk_{r_2})$, any ciphertexts $ct_1$ and $ct_2$, and any message $\mu \in \mathcal{M}$, the correctness for an SCET scheme requires all the following to hold:
	\begin{enumerate}
		\item $\Pr[\mu=\textsf{USC}(sk_r, pk_s,\textsf{SC}( pk_r, sk_s,\mu))]=1-\textsf{negl}(\lambda).$ \\This says that given a valid ciphertext on a message, the unsigncryption algorithm succeeds in recovering that message with overwhelming probability.
		\item If
		 $\textsf{USC}(sk_{r_1}, pk_{s_1},ct_1)$ $=\textsf{USC}(sk_{r_2}, pk_{s_2},ct_2) \neq \bot,$
		then $$\Pr[\textsf{Test}(tg_1,ct_1,tg_2,ct_2) =1]=1-\textsf{negl}(\lambda).$$ This says that if two ciphertexts are from the same message then the equality test algorithm returns $1$ with overwhelming probability.
		\item If
		$\textsf{USC}(sk_{r_1}, pk_{s_1},ct_1) \neq \textsf{USC}(sk_{r_2}, pk_{s_2},ct_2),$
		then $$\Pr[\textsf{Test}(tg_1,ct_1,tg_2,ct_2)=1]=\textsf{negl}(\lambda).$$ This says that if two ciphertexts are generated on two different messages  then the equality test algorithm returns $1$ with negligible probability.
	\end{enumerate}

	\subsection{Security} \label{sec}
		We categorize the security for SCET into the outsider and  insider securities. In the outsider security setting, an external adversary cannot know private information of users but  public information (e.g., public system parameters and public keys).
	In contrast,  in the insider security setting, an internal adversary can know some
	private keys of other users hence he is stronger than any external adversaries.


We also consider three types of adversary against a SCET scheme. Remark that, all of them can be internal adavesaries, i.e., insider attackers.	These types of adversary and their behaviors will be detailed in the following definitions and games.

	  \begin{itemize}
	  	\item 
	\textit{Type 1 adversary} is  supposed to know the target receiver, but does not have the tag of the target receiver and his goal is to guess which message between two options that is used in the signcryption algorithm to produce the challenge ciphertext. As an insider attacker, he can also know the target sender's public and private keys.   The Type 1 adversary corresponds to the IND-iCCA1 game.
	
	\item \textit{Type 2 adversar}y  is supposed to know the target receiver and can  perform equality tests on any ciphertexts and his goal is to recover the message corresponding to the challenge ciphertext. As an insider attacker, he can also know the target sender's public and private keys.    The Type 2  adversary corresponds to the OW-iCCA1 game.
	
	\item \textit{Type 3 adversary} is supposed to know the target sender and his goal is  to try to forge at least one valid ciphertext. As an insider attacker, he can also know the target receiver's public and private keys.  The Type 3 adversary corresponds to the UF-iCMA game.
	
	  \end{itemize}

			\begin{definition}[IND-iCCA1] \label{indcca2}
				An SCET scheme is IND-iCCA1 secure if the advantage of  any PPT adversary  $\mathcal{A}_1$ playing the $\mathsf{INDCCA1}^{\mathcal{A}_1}_{SCET}$ game is negligible: $\mathsf{Adv}^{\text{IND-iCCA1}}_{\mathcal{A}_1}(\lambda):= \vert \Pr[\mathsf{INDCCA1}^{\mathcal{A}_1}_{SCET} \Rightarrow 1] -1/2\vert\leq \mathsf{negl}(\lambda).$
				
			\end{definition}
			
	The $\mathsf{INDCCA1}^{\mathcal{A}_1}_{SCET}$ game is defined as follows:
			
		 \textbf{Setup.} The challenger  $\mathcal{C}$ first runs \textsf{Setup($1^{\lambda}$)} to have the set of  public parameters $pp$. Then $\mathcal{C}$ runs $\textsf{KGr}(pp)$ to get $(pk_{r}, sk_{r})$ for $r\in[N]$, and $\textsf{KGs}(pp)$ to get $(pk_{s}, sk_{s})$ for $s\in[M]$, then sends $(pp, \{pk_{r}\}_{r \in [N]}), \{pk_{s}\}_{s \in [M]})$ to the adversary $\mathcal{A}_1$. Let $r^*\in[N]$ be the index of the target receiver.
		 
	 \textbf{Phase 1.}  $\mathcal{A}_1$ adaptively makes a polynomially bounded number of the following queries:
				\begin{itemize}
					\item Private key query  PKQ$(r)$: If $r=r^*$, the challenger $\mathcal{C}$ rejects the query. Otherwise, $\mathcal{C}$ returns the private key $sk_{r}$ of the receiver $\mathcal{R}_{r}$  to $\mathcal{A}_1$. 
					\item Signcryption query SCQ$(r,s,\mu)$: The challenger $\mathcal{C}$ sends the output $ct$ of $\textsf{SC}(pk_{r}$, $sk_{s},\mu)$ back to $\mathcal{A}_1$.
					\item Unsigncryption query   USQ$(r,s,$ $ct)$: $\mathcal{C}$ sends the output of $\textsf{USC}(sk_{r},pk_{s},$ $ct)$ back to $\mathcal{A}_1$.
					\item Tag query TGQ$(r)$: If $r=r^*$, the challenger $\mathcal{C}$ rejects the query. Otherwise, $\mathcal{C}$ in turn sends the output $tg_{r}$ of $\textsf{Tag}(sk_{r})$ back to $\mathcal{A}_1$.
				\end{itemize}
				
		\textbf{Challenge.} $\mathcal{A}_1$ submits two messages $\mu^*_0, \mu^*_1$ together with the target sender's keys $(pk_{s^*}, sk_{s^*})$. The challenger $\mathcal{C}$ then chooses uniformly at random a bit $b \in \{0,1\}$ and returns the challenge ciphertext $ct^* \leftarrow \mathsf{SC}(pk_{r^*}, sk_{s^*},\mu^*_b)$ to $\mathcal{A}_1$.
		
	 \textbf{Phase 2.} $\mathcal{A}_1$ queries the oracles again as in \textbf{Phase 1} with a restriction that $\mathcal{A}_1$ is not allowed to make the query PKQ$(r^*)$ and all unsigncryption queries USQ$(r,s$, $ct)$.
			
		 \textbf{Output.} $\mathcal{A}_1$  outputs a bit $b' \in \{0,1\}$. He wins the game if $b'=b$.

			\begin{definition}[OW-iCCA1] \label{owcca2}
				The scheme SCET is OW-iCCA1 secure if the advantage of  any PPT adversary  $\mathcal{A}_2$ playing the $\mathsf{OWCCA1}^{\mathcal{A}_2}_{SCET}$  game is negligible: $\mathsf{Adv}^{\text{OW-iCCA1}}_{\mathcal{A}_2}(\lambda):= \Pr[\mathsf{OWCCA1}^{\mathcal{A}_   2}_{SCET}\Rightarrow 1] \leq \mathsf{negl}(\lambda).$
			\end{definition}

			The $\mathsf{OWCCA1}^{\mathcal{A}_2}_{SCET}$ game is defined as follows:
	
				\textbf{Setup.} The challenger  $\mathcal{C}$ first runs \textsf{Setup($1^{\lambda}$)} to have the set of  public parameters $pp$ and then runs $\textsf{KGr}(pp)$ to get $(pk_{r}, sk_{r})$ for $r\in[N]$, and $\textsf{KGs}(pp)$ to get $(pk_{s}, sk_{s})$ for $s\in[M]$, then sends $(pp, \{pk_{r}\}_{r \in [N]}), \{pk_{s}\}_{s \in [M]})$  to the adversary $\mathcal{A}_2$. Let $r^*\in[N]$ be the index of the target receiver.
				
			 \textbf{Phase 1.}  $\mathcal{A}_2$ adaptively makes polynomially bounded number of the following queries:
				\begin{itemize}
					\item Private key query  PKQ$(r)$: If $r=r^*$,  $\mathcal{C}$ rejects the query. Otherwise, $\mathcal{C}$ returns the private key $sk_{r}$ of the receiver $\mathcal{R}_{r}$ . 
					\item Signcryption query SCQ$(r,s,\mu)$: $\mathcal{C}$ returns the output $ct$ of $\textsf{SC}(pk_{r},sk_{s},\mu)$.
					\item Unsigncryption query  USQ$(r,s,$ $ct)$:  $\mathcal{C}$ sends the output  of $\textsf{USC}(sk_{r},pk_{s},$ $ct)$ back to $\mathcal{A}_2$.
					\item Tag query TGQ$(r)$: $\mathcal{C}$ returns the output $tg_{r}$ of $\textsf{Tag}(sk_{r})$ (even when $r=r^*$).
				\end{itemize}
				
			 \textbf{Challenge.}  $\mathcal{A}_2$ submits  the target sender's keys $(pk_{s^*}, sk_{s^*})$, $\mathcal{C}$ chooses a random message $\mu^* \in \mathcal{M}$ and returns the challenge ciphertext $ct^*\leftarrow \textsf{SC}(pk_{r^*}, sk_{s^*},\mu^*)$ to $\mathcal{A}_2$.
			 
			\textbf{Phase 2.}  $\mathcal{A}_2$ queries the oracles again as in \textbf{Phase 1} with a restriction that $\mathcal{A}$ is not allowed to make the query PKQ$(r^*)$ and all unsigncryption queries USQ$(r,s$, $ct)$.
			
			\textbf{Output.} $\mathcal{A}_2$  outputs $\mu'^*$. He wins the game if $\mu'^*=\mu^*$.

				\begin{remark} One should be aware that there is no any reduction from IND-iCCA1 to OW-iCCA1 because the OW-iCCA1 adversary is allowed to know the tag of the target receiver, whilst the IND-iCCA1 is not. 
				\end{remark}	
			
			\begin{definition}[UF-iCMA] \label{ufcma}
				The scheme SCET is UF-iCMA secure if the advantage of any PPT adversary  $\mathcal{A}_3$ playing the $\mathsf{UFCMA}^{\mathcal{A}_3}_{SCET}$ game is negligible: $\mathsf{Adv}^{\text{UF-iCMA}}_{\mathcal{A}_3}(\lambda):= \Pr[\mathsf{UFCMA}^{\mathcal{A}_3}_{SCET}\Rightarrow 1]\leq \mathsf{negl}(\lambda).$
			\end{definition}
			
		The $\mathsf{UFCMA}^{\mathcal{A}_3}_{SCET}$ game is defined as follows:

	\textbf{Setup.} The challenger  $\mathcal{C}$ first runs \textsf{Setup($1^{\lambda}$)} to have the set of  public parameters $pp$ and then runs $\textsf{KGr}(pp)$ to get $(pk_{r}, sk_{r})$ for $r\in[N]$, and $\textsf{KGs}(pp)$ to get $(pk_{s}, sk_{s})$ for $s\in[M]$, then sends $(pp, \{pk_{r}\}_{r \in [N]}), \{pk_{s}\}_{s \in [M]})$ to the adversary $\mathcal{A}_3$. Let $s^*\in[M]$ be the index of the target sender.
	
	 \textbf{Queries.}  $\mathcal{A}_3$ adaptively makes polynomially bounded number of the following queries:
				\begin{itemize}
					\item Private key query PKQ$(s)$:
					If $s=s^*$, $\mathcal{C}$ rejects the query. Otherwise, $\mathcal{C}$ returns the private key $sk_{s}$ of the sender $\mathcal{S}_{s}$  to $\mathcal{A}_3$. 
				
					\item Signcryption query SCQ$(r,s,\mu)$: $\mathcal{C}$ returns the output $ct$ of $\textsf{SC}(pk_{r},sk_{s},\mu)$ back to $\mathcal{A}_3$.
					\item Unsigncryption query USQ$(r,s,ct)$:  $\mathcal{C}$ returns the output of $\textsf{USC}(sk_{r},pk_{s},ct)$.
					\item Tag query TGQ$(r)$: $\mathcal{C}$ returns the output $tg_{r}$ of $\textsf{Tag}(sk_{r})$.
				\end{itemize}

		 \textbf{Forge.} $\mathcal{A}_3$  outputs an index $r^* $ of some receiver  and a ciphertext $ct^*$ on a message $\mu^*$, where $ct^*$  must not be the output of any query $ SCQ(r,s,\mu)$ in the query phase. He wins the game if $\mathsf{USC}(sk_{r^*},pk_{s^*},ct^*) \neq \bot$. 
		 Note that, if $\mu^*$ is not the same as the messages queried previously, the SCET scheme  is called EUF-iCMA (i.e., existential unforgeability). If $\mu^*$ is one of the messages queried previously but $(\mu^*, ct^*) \neq (\mu,ct)$ for all $(\mu,ct)$ that was queried previously, then the SCET scheme  is called SUF-iCMA (i.e., strong unforgeability).

	\section{Our Construction} \label{lbscet}
	
	In this section, we describe a lattice-based signcryption with equality test, named \textsf{SCET}. The proposed \textsf{SCET}  signcryption consists of algorithms \textsf{Setup}, \textsf{KG}, \textsf{SC}, \textsf{USC}, \textsf{Tag} and \textsf{Test}.  We also consider  lattice-based collision-resistant hash functions  indicated  by  a uniform matrix $\textbf{W}$  defined as $f_{\textbf{W}}(\textbf{x}):=\textbf{W}\textbf{x} \bmod q$ (cf. \cite{MR07}).
	
	\begin{description}
	
		\item[ \underline{\textsf{Setup($1^{n}$)}}:] On input a security parameter $n$, perform the following:
				\begin{enumerate}
					\item Set parameters $n, q, \overline{m}$, $\ell$, $N$, $M$, $\alpha$,   $ \sigma_1$, $ \sigma_2$,  $k=\lceil \log q \rceil$, $m=\overline{m}+nk$ as in Section \ref{para}.
				\item Samples randomly and independently matrices 
				$\mathbf{C}_0, \cdots, \mathbf{C}_{n},$ $  \mathbf{C}'_0, \cdots, $ $ \mathbf{C}'_{n} \in \mathbb{Z}_q ^{n \times nk}$, $\mathbf{B}, \mathbf{B}' \in \mathbb{Z}_q ^{n \times m}$, $\mathbf{U},\mathbf{U}' \in \mathbb{Z}_q ^{n \times \ell}$. 
				\item Samples randomly vector $ \mathbf{u} \in \mathbb{Z}_q ^{n}$. 
			\item One-way hash function $H:\{0,1\}^\ell \rightarrow \{0,1\}^\ell $, collision-resistant hash functions $H_1:\mathbb{Z}_q^{n \times m} \rightarrow \mathbb{Z}_q^{\overline{m} }$, $H_2$ is a full-rank differences (FRD) encoding \footnote{See \cite[Section 5]{ABB10} for details on FRD.} and a universal hash function $H_3:\{0,1\}^{*} \rightarrow  \mathbb{Z}_q^{\overline{m}}$.
			\item A  plaintext (message) space $\mathcal{M}= \{0,1\}^\ell  $.
			\item Return  $pp=\{n, q, k, \overline{m}, m,\ell, \alpha, \sigma_1, \sigma_2, N, M, \mathcal{M},$ $  (\bf{C}_i, \bf{C}'_i)_{i=0}^{n}, H, H_1, $ $H_2, H_3\}$ as the set of public parameters.
				\end{enumerate}	
		\item [\underline{\textsf{KG($pp$)}}:] On input the public parameters $pp$, do the following:
				\begin{enumerate}
					\item For each receiver $r \in [N]$, generate  $\overline{\mathbf{A}}_r, \overline{\mathbf{A}}'_r \xleftarrow{\$} \mathbb{Z}_q^{n \times\overline{ m}}$, $ \mathbf{T}_r, \mathbf{T}'_r \leftarrow D_{\mathbb{Z}^{\overline{m} \times n k }, \sigma_1} $ and then set $\mathbf{A}_r=[\overline{\mathbf{A}}_r|-\overline{\mathbf{A}}_r\cdot \mathbf{T}_r]\in \mathbb{Z}_q^{n \times m}$, $\mathbf{A}'_r=[\overline{\mathbf{A}}'_r|-\overline{\mathbf{A}}'_r\cdot \mathbf{T}'_r] \in \mathbb{Z}_q^{n \times m}$
					\item Similarly, for each sender $ s\in [M]$, generate  $\overline{\mathbf{A}}_s, \overline{\mathbf{A}}'_s \xleftarrow{\$} \mathbb{Z}_q^{n \times\overline{ m}}$, $ \mathbf{T}_s, \mathbf{T}'_s \leftarrow D_{\mathbb{Z}^{\overline{m} \times n k }, \sigma_1} $ and then set $\mathbf{A}_s=[\overline{\mathbf{A}}_s|\mathbf{G}-\overline{\mathbf{A}}_s\cdot \mathbf{T}_s] \in \mathbb{Z}_q^{n \times m}$, $\mathbf{A}'_s=[\overline{\mathbf{A}}'_s|\mathbf{G}-\overline{\mathbf{A}}'_s\cdot \mathbf{T}'_s] \in \mathbb{Z}_q^{n \times m}$
				\item Return
		 $pk_r=(\mathbf{A}_r,\mathbf{A}'_r)$, and  $sk_r=(\textbf{T}_r, \textbf{T}'_r)$ as public key and private key for a receiver $\mathcal{R}$ of index $r$, $pk_s=(\mathbf{A}_s,\mathbf{A}'_s)$, and $sk_s=(\textbf{T}_s, \textbf{T}'_s)$ as public key and private key for a sender $\mathcal{S}$  of index $s$.
				\end{enumerate}
	
	\item [\underline{\textsf{SC}($pk_r,sk_s,\mu$)}:] On input a receiver's public key $pk_r=(\mathbf{A}_r,\mathbf{A}'_r)$, a sender's private key $sk_s=(\mathbf{T}_s, \mathbf{T}'_s)$, a plaintext $ \mu \in \mathcal{M}$, perform the following:

				\begin{enumerate}
					\item  $\textbf{r}_e, \textbf{r}'_e \leftarrow D_{\mathbb{Z}^{m }, \alpha q}$, $\textbf{t}=f_{\overline{\textbf{A}}_r}(H_1(\textbf{A}_s))+f_{\textbf{B}}(\textbf{r}_e)\in \mathbb{Z}_q^{n}$,\\
					$\textbf{t}'=f_{\overline{\textbf{A}}'_r}(H_1(\textbf{A}'_s))+f_{\textbf{B}'}(\textbf{r}'_e)\in \mathbb{Z}_q^{n}$.
					\item $\textbf{A}_{r,\textbf{t}}=\textbf{A}_{r}+[\textbf{0}|H_2(\textbf{t})\mathbf{G}]\in \mathbb{Z}_q^{n \times m}=[\overline{\mathbf{A}}_r|H_2(\textbf{t})\mathbf{G}-\overline{\mathbf{A}}_r\cdot \mathbf{T}_r]\in \mathbb{Z}_q^{n \times m}$,\\
					$\textbf{A}'_{r,\textbf{t}}=\textbf{A}'_{r}+[\textbf{0}|H_2(\textbf{t}')\mathbf{G}]\in \mathbb{Z}_q^{n \times m}=[\overline{\mathbf{A}}'_r|H_2(\textbf{t}')\mathbf{G}-\overline{\mathbf{A}}'_r\cdot \mathbf{T}'_r]\in \mathbb{Z}_q^{n \times m}$.
					
					\item $\textbf{s}, \textbf{s}' \xleftarrow{\$} \mathbb{Z}_q^{n}$, \quad $\textbf{x}_1,  \textbf{x}_1'\leftarrow D_{\mathbb{Z}^{\ell}, \alpha q}$.
					\item $\textbf{c}_0=\textbf{s}^t\textbf{A}_{r,\textbf{t}}+ \textbf{x}^t_0\in \mathbb{Z}_q^{m}$, \quad $\overline{\textbf{c}}_1=\textbf{s}^t\textbf{U}+ \textbf{x}^t_1 \in \mathbb{Z}_q^{\ell}$,\\
					$\textbf{c}'_0=(\textbf{s}')^t\textbf{A}'_{r,\textbf{t}} + (\textbf{x}'_0)^t\in \mathbb{Z}_q^{m}$, \quad $\overline{\textbf{c}}'_1=(\textbf{s}')^t\textbf{U}'+ (\textbf{x}'_1)^t \in \mathbb{Z}_q^{\ell}$.
					\item  Set  $\overline{ct}=(\textbf{c}_0, \overline{\textbf{c}}_1, \textbf{r}_e, \textbf{c}'_0, \overline{\textbf{c}}'_1, \textbf{r}'_e)$.
				
					\item Sign on $\mu| pk_r|\overline{ct}$ to get the signature $(\textbf{e}, \textbf{r}_s )$ as follows:
					\begin{enumerate}
					\item  $ \textbf{r}_s \leftarrow D_{\mathbb{Z}^{m }, \alpha q}$.
					\item $\textbf{h}=(h_1, \cdots, h_n)=f_{\overline{\textbf{A}}_s}(H_3(\mu| pk_r|\overline{ct}))+f_{\textbf{B}}(\textbf{r}_s)\in \mathbb{Z}_q^{n}$.
					\item $\textbf{A}_{s,\textbf{h}}=[\mathbf{A}_s|\mathbf{C}_0+\sum_{i=1}^{n}h_i\cdot \mathbf{C}_i]\in \mathbb{Z}_q^{n \times (m+nk)}$,
					\item $\textbf{e} \in \mathbb{Z}^{m+nk} \leftarrow \textsf{SampleD}(\textbf{T}_s. \textbf{A}_{s,\textbf{h}},\textbf{u},\sigma_2)$
				
					\end{enumerate}
			\item     $\textbf{c}_1=\overline{\textbf{c}}_1+ \mu \cdot \lfloor q/2\rfloor \in \mathbb{Z}_q^{\ell}, \quad \textbf{c}'_1=\overline{\textbf{c}}'_1+ H(\mu) \cdot \lfloor q/2\rfloor \in \mathbb{Z}_q^{\ell}$.
			\item Output the ciphertext $ct=(\textbf{c}_0, \textbf{c}_1, \textbf{r}_e,\textbf{r}_s,\textbf{c}'_0, \textbf{c}'_1,\textbf{r}'_e,   \textbf{e})$.
				\end{enumerate}

	\item[ \underline{\textsf{USC}($sk_r, pk_s,ct$)}:] On input a sender's public key  $ pk_s:=(\mathbf{A}_s, \mathbf{A}'_s)$, a receiver's private key $sk_r:=(\textbf{T}_r, \textbf{T}'_r)$, a ciphertext  $ct=(\textbf{c}_0, \textbf{c}_1, \textbf{r}_e,\textbf{r}_s,\textbf{c}'_0,$ $ \textbf{c}'_1,\textbf{r}'_e,   \textbf{e})$, do the following:
			\begin{enumerate}
	
				\item Compute  $\textbf{t}=f_{\overline{\textbf{A}}_r}(H_1(\textbf{A}_s))+f_{\textbf{B}}(\textbf{r}_e)\in \mathbb{Z}_q^{n}$ and  $\textbf{A}_{r,\textbf{t}}=[\overline{\mathbf{A}}_r|H_2(\textbf{t})\mathbf{G}-\overline{\mathbf{A}}_r\cdot \mathbf{T}_r]\in \mathbb{Z}_q^{n \times m}$.
				\item $(\textbf{s},\textbf{x}_0) \leftarrow \textsf{Invert}(\textbf{T}_r,\textbf{A}_{r,\textbf{t}}, \textbf{c}_0)$.
				\item Compute $\textbf{E} \in \mathbb{Z}^{m\times \ell} \leftarrow \textsf{SampleD}(\textbf{T}_r, \textbf{A}_{r,\textbf{t}},\textbf{U},\sigma_2)$.
				
				\item Compute $\textbf{v}^t=\textbf{c}_1^t-(\textbf{c}_0-\textbf{x}_{0})^t\textbf{E}= \textbf{x}_1^t+\mu \cdot \lfloor q/2\rfloor$.
				\item Recover $\mu$ from $\textbf{v} \!\! \mod q$.
				\item  $\overline{\textbf{c}}_1=\textbf{c}_1- \mu \cdot \lfloor q/2\rfloor \!\! \mod q$, $\overline{\textbf{c}}'_1=\textbf{c}'_1- H(\mu) \cdot \lfloor q/2\rfloor \!\! \mod q$, and let $\overline{ct}:=(\textbf{c}_0, \overline{\textbf{c}}_1, \textbf{r}_e, \textbf{c}'_0, \overline{\textbf{c}}'_1, \textbf{r}'_e)$.

				\item Compute  $\textbf{h}=(h_1, \cdots, h_n)=f_{\overline{\textbf{A}}_s}(H_3(\mu| pk_r|\overline{ct}))+f_{\textbf{B}}(\textbf{r}_s)\in \mathbb{Z}_q^{n}$.
				\item $\textbf{A}_{s,\textbf{h}}=[\mathbf{A}_s|\mathbf{C}_0+\sum_{i=1}^{n}h_i\cdot \mathbf{C}_i]\in \mathbb{Z}_q^{n \times (m+nk)}$.
				\item If $\textbf{A}_{s,\textbf{h}}\cdot \textbf{e}=\textbf{u} \!\! \mod q$ and $\|\textbf{e}\|\leq \sigma_2\sqrt{m+nk}$ then 
				output $\mu$; otherwise, output $\bot$.
			\end{enumerate}
	\item [\underline{\textsf{Tag}($sk_r)$}:] On input a receiver's private key $sk_r:=(\textbf{T}_{r}, \textbf{T}'_{r})$,  return the tag $ tg_r:=\textbf{T}'_{r}$.
	
		\item[ \underline{\textsf{Test}($(tg_{r,i},ct_i),(tg_{r,j},ct_j)$)}:] On input a tag $ tg_{r,i}:=\textbf{T}'_{r,i}$, a ciphertext $ct_i=(\textbf{c}_{0,i}, \textbf{c}_{1,i}, \textbf{r}_{e, i}, \textbf{r}_{s, i},  \textbf{c}'_{0,i}, \textbf{c}'_{1,i}, \textbf{r}'_{e, i},  \textbf{e}_{i})$ with respect to the receiver $\mathcal{R}_i,$  and a tag $ tg_{r,j}:=\textbf{T}'_{r,j}$, a ciphertext $ct_j=(\textbf{c}_{0,j}, \textbf{c}_{1,j}, \textbf{r}_{e, j}, \textbf{r}_{s, j}, \textbf{c}'_{0,j}, \textbf{c}'_{1,j}, \textbf{r}'_{e, j},  \textbf{e}_{j})$ with respect to the receiver  $\mathcal{R}_j$, do the following:
		\begin{itemize}
			\item For $\mathcal{R}_i$, do:
				\begin{enumerate}
		
					\item Compute  $\textbf{t}'_i=f_{\overline{\textbf{A}}'_{r,i}}(H_1(\textbf{A}'_{s,i}))+f_{\textbf{B}'}(\textbf{r}'_{e,i})\in \mathbb{Z}_q^{n}$ and  $\textbf{A}'_{r,\textbf{t}_i}=[\overline{\mathbf{A}}'_{r,i}|H_2(\textbf{t}'_i)\mathbf{G}-\overline{\mathbf{A}}'_{r,i}\cdot \mathbf{T}'_{r,i}]\in \mathbb{Z}_q^{n \times m}$,
					\item $(\textbf{s}'_i,\textbf{x}'_{0, i}) \leftarrow \textsf{Invert}(\textbf{T}'_{r,i},\textbf{A}'_{r,\textbf{t}_i}, \textbf{c}'_{0,i})$.
					\item 
					
					Compute $\textbf{E}' \in \mathbb{Z}^{m\times \ell} \leftarrow \textsf{SampleD}(\textbf{T}'_r, \textbf{A}'_{r,\textbf{t}},\textbf{U}',\sigma_2)$.
					\item Compute $(\textbf{v}'_{i})^t=(\textbf{c}'_{1,i})^t-(\textbf{c}'_{0,i}-\textbf{x}'_{0,i})^t\textbf{E}_i= (\textbf{x}_{1,i}')^t+H(\mu_i)\cdot \lfloor q/2\rfloor $.
					\item Recover $H(\mu_i)$ from $\textbf{v}'_i\!\! \mod q$.	
				\end{enumerate}
				\item For $\mathcal{R}_j$: Do the same steps as above for $\mathcal{R}_i$ to recover $H(\mu_j)$.
				\item  Output 1 if  $H(\mu_i)=H(\mu_j)$. Otherwise, output $0$.
		\end{itemize}
	\end{description}

	\begin{theorem}[Correctness] \label{corrtheorem}
		The proposed $\mathsf{SCET}$ scheme is correct following the conditions mentioned in Subsection \ref{cor} provided that $H$ is collision-resistant.
	\end{theorem}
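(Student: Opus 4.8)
The plan is to verify the three requirements of Subsection~\ref{cor} in turn, tracing every quantity recomputed by \textsf{USC} and \textsf{Test} and showing it reproduces, exactly or up to a negligible error, the corresponding quantity fixed during \textsf{SC}. First, for the decryption requirement (item~1), I would run \textsf{USC} step by step on an honestly produced ciphertext. Because the receiver key has the shape $\mathbf{A}_r=[\overline{\mathbf{A}}_r|-\overline{\mathbf{A}}_r\mathbf{T}_r]$, the matrix $\mathbf{A}_{r,\mathbf{t}}$ admits $\mathbf{T}_r$ as a $\mathbf{G}$-trapdoor with tag $H_2(\mathbf{t})$, which is invertible by the FRD property of $H_2$ (and $\mathbf{t}\neq\mathbf{0}$ except with negligible probability). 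Lemma~\ref{trapdoor}(3) then guarantees that $\textsf{Invert}$ returns the exact pair $(\mathbf{s},\mathbf{x}_0)$ with overwhelming probability, provided $1/\alpha\geq 2\sqrt{5(s_1(\mathbf{T}_r)^2+1)}\cdot\omega(\sqrt{\log n})$, where $s_1(\mathbf{T}_r)$ is controlled via Lemma~\ref{supnorm}; the parameters of Section~\ref{para} are chosen precisely so this inequality holds. Likewise Lemma~\ref{trapdoor}(2) yields $\mathbf{E}$ with $\mathbf{A}_{r,\mathbf{t}}\mathbf{E}=\mathbf{U}\bmod q$ exactly. Substituting $(\mathbf{c}_0-\mathbf{x}_0)^t=\mathbf{s}^t\mathbf{A}_{r,\mathbf{t}}$ gives $\mathbf{v}^t=\mathbf{c}_1^t-\mathbf{s}^t\mathbf{U}=\mathbf{x}_1^t+\mu\lfloor q/2\rfloor$, so the rounding step recovers $\mu$ as long as $\|\mathbf{x}_1\|_\infty<q/4$, which holds with overwhelming probability by the Gaussian tail bound of Lemma~\ref{bound} once $\alpha q$ is suitably small.

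Next I would argue the signature-verification branch never rejects an honest ciphertext. Once $\mu$ is recovered, \textsf{USC} reconstructs $\overline{\mathbf{c}}_1$ and $\overline{\mathbf{c}}'_1$ (hence $\overline{ct}$) from $\mu$ and $H(\mu)$, so the vector $\mathbf{h}$ and the matrix $\mathbf{A}_{s,\mathbf{h}}$ it recomputes coincide exactly with those used in \textsf{SC}. This is the critical point where the correct recovery of $\mu$ feeds into verification. Since $\mathbf{e}$ was produced by $\textsf{SampleD}(\mathbf{T}_s,\mathbf{A}_{s,\mathbf{h}},\mathbf{u},\sigma_2)$ using the sender trapdoor (with invertible tag $\mathbf{I}$), we have $\mathbf{A}_{s,\mathbf{h}}\mathbf{e}=\mathbf{u}\bmod q$ by construction, and $\|\mathbf{e}\|\leq\sigma_2\sqrt{m+nk}$ with overwhelming probability by Lemma~\ref{bound}. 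Hence both checks in the final step of \textsf{USC} pass and $\mu$ is output, which settles item~1.

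Finally, items~2 and~3 concern \textsf{Test}. The recovery of $H(\mu_i)$ inside \textsf{Test} from the primed components $(\mathbf{c}'_{0,i},\mathbf{c}'_{1,i},\mathbf{r}'_{e,i})$, using the tag $tg_{r,i}=\mathbf{T}'_{r,i}$, is literally the decryption computation of item~1 applied to the primed system $(\overline{\mathbf{A}}'_r,\mathbf{A}'_{r,\mathbf{t}'},\mathbf{U}')$; thus $H(\mu_i)$ and $H(\mu_j)$ are recovered correctly with overwhelming probability. For item~2, when the two ciphertexts unsigncrypt to the same plaintext the recovered values satisfy $H(\mu_i)=H(\mu_j)$, so \textsf{Test} outputs~$1$. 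For item~3, if the plaintexts differ then $H(\mu_i)=H(\mu_j)$ would exhibit a collision of $H$; since $H$ is collision-resistant, this occurs only with negligible probability, and otherwise \textsf{Test} outputs~$0$.

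The main obstacle is the bookkeeping of the second paragraph: one must check that every intermediate quantity recomputed by \textsf{USC} (especially $\overline{ct}$ and $\mathbf{h}$) is bit-for-bit identical to its counterpart in \textsf{SC}. This hinges on chaining the \emph{exact} equalities coming from $\textsf{Invert}$ (recovering $\mathbf{x}_0$ without error) and $\textsf{SampleD}$ (giving $\mathbf{A}_{r,\mathbf{t}}\mathbf{E}=\mathbf{U}$ exactly) against the \emph{with-overwhelming-probability} bounds on the Gaussian error terms, while keeping every error strictly below the $q/4$ decoding threshold and below $\sigma_2\sqrt{m+nk}$ — constraints that ultimately reduce to the parameter inequalities verified in Section~\ref{para}.
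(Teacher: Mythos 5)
Your proof is correct and follows essentially the same route as the paper's own argument: verify the three conditions of Subsection~\ref{cor} directly, recover $\mu$ via the rounding step using the smallness of the Gaussian noise, attribute the negligible failure probability to the trapdoor algorithms $\textsf{Invert}$ and $\textsf{SampleD}$ under the parameters of Section~\ref{para}, and invoke the collision-resistance of $H$ for the third condition. The only difference is one of detail: the paper's proof is a brief sketch (it treats only the rounding step and says the rest follows ``in the same way''), whereas you additionally spell out the signature-verification branch and the exact $q/4$ decoding threshold, which strengthens rather than changes the argument.
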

	\begin{proof}
			For any $pp \leftarrow \textsf{Setup}(1^{\lambda})$, $(pk_r, sk_r) \leftarrow \textsf{KGr} (pp)$, $(pk_s, sk_s) \leftarrow \textsf{KGs} (pp)$,  $(pk_{r_1}, sk_{r_1}) $ $ \leftarrow \textsf{KGr} (pp)$, $(pk_{s_1}, sk_{s_1}) \leftarrow \textsf{KGs} (pp)$, $(pk_{r_2}, sk_{r_2}) \leftarrow \textsf{KGr} (pp)$, $(pk_{s_2}, sk_{s_2}) \leftarrow \textsf{KGs} (pp)$, $tg_1 \leftarrow \textsf{Tag}(sk_{r_1})$ and $tg_2 \leftarrow \textsf{Tag}(sk_{r_2})$, any ciphertexts $ct_1$ and $ct_2$, and any message $\mu \in \mathcal{M}$. We need to check the following:
			\begin{itemize}
				\item First, we will prove that $\Pr[\mu=\textsf{USC}(sk_r, pk_s,\textsf{SC}( pk_r, sk_s,\mu))]=1-\textsf{negl}(\lambda).$ Indeed, let $ct=(\textbf{c}_0, \textbf{c}_1, \textbf{r}_e,\textbf{r}_s,\textbf{c}'_0,$ $ \textbf{c}'_1,\textbf{r}'_e,   \textbf{e})$ be a ciphertext outputted by $\textsf{SC}( pk_r, sk_s,\mu)$. Now what we need to verify is Step 5 in the \textsf{USC} algorithm. To succesfully recover $\mu=(\mu_1, \cdots, \mu_\ell)$ from $\textbf{v}=\textbf{x}_1+\mu \cdot \lfloor q/2\rfloor $, we compare each component of $\textbf{v}=(v_1, \cdots, v_\ell)$ to  $q/2$. If $|v_i|<q/2$ then $\mu_i=0$. Otherwise, $\mu_i=1$. This is thanks to the smallness of $\textbf{x}_1\leftarrow D_{\mathbb{Z}^{\ell}, \alpha q}$.
				\item Second, we need to show that if
				 $\textsf{USC}(sk_{r_1}, pk_{s_1},ct_1)$ $=\textsf{USC}(sk_{r_2}, pk_{s_2},ct_2)=\mu \neq \bot,$
				then $$\Pr[\textsf{Test}(tg_1,ct_1,tg_2,ct_2) =1]=1-\textsf{negl}(\lambda).$$ This can be done in the same way as above.
				\item Finally, we show that if
				$\textsf{USC}(sk_{r_1}, pk_{s_1},ct_1)=\mu_1 \neq \textsf{USC}(sk_{r_2}, pk_{s_2},ct_2)=\mu_2,$
				then $$\Pr[\textsf{Test}(tg_1,ct_1,tg_2,ct_2)=1]=\textsf{negl}(\lambda).$$
					This can be done in the same way as above with noting that if $H(\mu_1)=H(\mu_2)$ happens, then it must be that $\mu_1=\mu_2$ due to the collision-resistance of $H$.
			\end{itemize}
	Note that, the negligibility in the above conditions comes from that of the trapdoor algotihms being used such as $\textsf{Invert}, \textsf{SampleD}$ with appropriately chosen parameters.

	\end{proof}

	\section{Security Analysis} \label{security}
	
	\begin{theorem}[IND-iCCA1] \label{indtheorem}
	The proposed $\mathsf{SCET}$ scheme is IND-iCCA1 secure under the hardness of the decisional-LWE $\mathsf{dLWE}_{n,2(\overline{m}+\ell), q,\alpha q}$ problem and the collision-resistance of the functions $ f_{\overline{\textbf{A}}_{r}}(\cdot)+f_{\textbf{B}}(\cdot)$ for any $\overline{\mathbf{A}}_r \xleftarrow{\$} \mathbb{Z}_q^{n \times\overline{ m}}$ and any $\mathbf{B} \xleftarrow{\$} \mathbb{Z}_q^{n \times m}$.
	\end{theorem}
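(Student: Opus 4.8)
The plan is to give a game-hopping reduction that turns any IND-iCCA1 adversary $\mathcal{A}_1$ into a distinguisher for $\mathsf{dLWE}_{n,2(\overline{m}+\ell),q,\alpha q}$, simulating the target receiver $r^*$ with an ABB-style punctured trapdoor (in the spirit of \cite{ABB10} and the cancellation trick enabled by the FRD encoding $H_2$). First I would take a $\mathsf{dLWE}$ instance whose matrix has $2(\overline{m}+\ell)$ columns and carve it into four blocks, using $\overline{m}$ columns for $\overline{\mathbf{A}}_{r^*}$, $\ell$ for $\mathbf{U}$, and the remaining $\overline{m}+\ell$ for the primed pair $\overline{\mathbf{A}}'_{r^*},\mathbf{U}'$. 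Using $\mathsf{GenTrap}$ from Lemma~\ref{trapdoor} I would generate $\mathbf{B},\mathbf{B}'$ together with their $\mathbf{G}$-trapdoors, keep all keys $(pk_r,sk_r)_{r\neq r^*}$ and $(pk_s,sk_s)_s$ honest, and commit in advance to uniformly random target tags $\mathbf{t}^*,(\mathbf{t}')^*\in\mathbb{Z}_q^n$. The target receiver's key is then programmed as $\mathbf{A}_{r^*}=[\overline{\mathbf{A}}_{r^*}\,|\,\overline{\mathbf{A}}_{r^*}\mathbf{R}^*-H_2(\mathbf{t}^*)\mathbf{G}]$ with a small Gaussian $\mathbf{R}^*\sim D_{\mathbb{Z}^{\overline{m}\times nk},\sigma_1}$ (and analogously for $\mathbf{A}'_{r^*}$); by a leftover-hash/regularity argument the product $\overline{\mathbf{A}}_{r^*}\mathbf{R}^*$ is statistically close to uniform, so this key is indistinguishable from an honest one.

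The second step is to answer oracle queries. Private-key, tag, and signcryption queries for $r\neq r^*$ are answered honestly. For an unsigncryption query on the target receiver $r^*$ (these occur only in Phase~1 under iCCA1) I recompute the tag $\mathbf{t}=f_{\overline{\mathbf{A}}_{r^*}}(H_1(\mathbf{A}_s))+f_{\mathbf{B}}(\mathbf{r}_e)$ from the queried ciphertext. As long as $\mathbf{t}\neq\mathbf{t}^*$, the full-rank-differences property of $H_2$ makes $H_2(\mathbf{t})-H_2(\mathbf{t}^*)$ invertible, so that $\mathbf{A}_{r^*,\mathbf{t}}=[\overline{\mathbf{A}}_{r^*}\,|\,\overline{\mathbf{A}}_{r^*}\mathbf{R}^*+(H_2(\mathbf{t})-H_2(\mathbf{t}^*))\mathbf{G}]$ admits $\mathbf{R}^*$ as a usable $\mathbf{G}$-trapdoor with invertible tag; I then recover $\mu$ via $\mathsf{Invert}$ and $\mathsf{SampleD}$ exactly as the honest $\mathsf{USC}$ would, and verify the signature through the public check $\mathbf{A}_{s,\mathbf{h}}\mathbf{e}=\mathbf{u}$. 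The simulation fails only if a Phase-1 query satisfies $\mathbf{t}=\mathbf{t}^*$; since $\mathbf{t}^*$ is uniform and information-theoretically hidden before the challenge, and since a second preimage of $\mathbf{t}^*$ under $f_{\overline{\mathbf{A}}_{r^*}}(\cdot)+f_{\mathbf{B}}(\cdot)$ would break the assumed collision-resistance, this event has negligible probability.

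The third step handles the challenge and the final LWE switch. After $\mathcal{A}_1$ submits $(\mu_0^*,\mu_1^*)$ and the target sender keys $(pk_{s^*},sk_{s^*})$, I use the $\mathbf{B}$-trapdoor to sample $\mathbf{r}_e^*\sim D_{\mathbb{Z}^{m},\alpha q}$ conditioned on $\mathbf{B}\mathbf{r}_e^*=\mathbf{t}^*-\overline{\mathbf{A}}_{r^*}H_1(\mathbf{A}_{s^*})$, realizing the committed tag; since the honest $\mathbf{t}^*$ is statistically close to uniform the simulated $\mathbf{r}_e^*$ has the correct marginal, and likewise for $(\mathbf{r}'_e)^*$ via $\mathbf{B}'$. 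At $\mathbf{t}=\mathbf{t}^*$ the gadget term cancels, so $\mathbf{A}_{r^*,\mathbf{t}^*}=[\overline{\mathbf{A}}_{r^*}\,|\,\overline{\mathbf{A}}_{r^*}\mathbf{R}^*]$; I build $\mathbf{c}_0^*$ and $\overline{\mathbf{c}}_1^*$ directly from the LWE target vector (folding $\mathbf{R}^*$ into the second block of $\mathbf{c}_0^*$), build the primed components the same way, and finish the signature honestly with the supplied $sk_{s^*}$. If the LWE samples are genuine the challenge ciphertext is distributed exactly as a real signcryption of $\mu_b^*$; if they are uniform, then $\mathbf{c}_0^*,\overline{\mathbf{c}}_1^*,(\mathbf{c}'_0)^*,(\overline{\mathbf{c}}'_1)^*$ are uniform, whence $\mathbf{c}_1^*=\overline{\mathbf{c}}_1^*+\mu_b^*\lfloor q/2\rfloor$ is independent of $b$ and $\mathcal{A}_1$ has advantage $0$. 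Collecting the hops bounds $\mathsf{Adv}^{\text{IND-iCCA1}}_{\mathcal{A}_1}$ by the $\mathsf{dLWE}$ advantage plus the collision-resistance and regularity terms.

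The hardest part, I expect, is the adaptive realization of the challenge tag: because $\mathbf{t}^*$ must be punctured before the challenge yet depends on the adversarially chosen $\mathbf{A}_{s^*}$ supplied only at challenge time, I must commit to $\mathbf{t}^*$ in advance and reach it through the $\mathbf{B}$-trapdoor while proving that the resulting $\mathbf{r}_e^*$ keeps its genuine $D_{\mathbb{Z}^m,\alpha q}$ distribution and that no Phase-1 query accidentally collides with $\mathbf{t}^*$. This is precisely where the near-uniformity of the honestly generated tag and the collision-resistance of $f_{\overline{\mathbf{A}}_r}(\cdot)+f_{\mathbf{B}}(\cdot)$ are indispensable.
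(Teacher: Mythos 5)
Your proposal follows essentially the same route as the paper's proof: the same ABB-style puncturing of $pk_{r^*}$ with a pre-committed random tag $\mathbf{t}^*$ cancelled by the FRD encoding, the same use of $\mathsf{GenTrap}$-generated $\mathbf{B},\mathbf{B}'$ with $\mathsf{SampleD}$ to realize $\mathbf{t}^*$ adaptively via $\mathbf{r}_e^*$, the same rejection of colliding unsigncryption queries via collision-resistance of $f_{\overline{\mathbf{A}}_{r^*}}(\cdot)+f_{\mathbf{B}}(\cdot)$, the same folding of the trapdoor into the second block of $\mathbf{c}_0^*$, and the same final decisional-LWE switch with the identical four-block parsing of the instance. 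Your explicit leftover-hash justification for why the punctured key looks honest is if anything slightly more careful than the paper's assertion that the distributions coincide, but it is the same argument in substance.
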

	
	\begin{proof}  We consider a sequence of games in which the first game \textbf{Game IND0} is the original one. And the last game \textbf{Game IND5} is the ``uniform-based" game. We will demnostrate that \textbf{Game IND}$i$ and \textbf{Game IND}$(i+1)$ are indistinguishable for $i\in \{0, \cdots, 4\}$.
		
		\begin{description}
			\item[Game IND0.] This is the original IND-iCCA1  game. Suppose that the target receiver is $r^*$ and the target sender announced at \textbf{Challenge} phase by the adversary $\mathcal{A}_1$ is $s^*$. Also, let  $\textbf{t}^*\gets f_{\overline{\textbf{A}}_{r^*}}(H_1(\textbf{A}_{s^*}))+f_{\textbf{B}}(\textbf{r}^*_e)$, and $\textbf{t}'^*\gets f_{\overline{\textbf{A}}'_{r^*}}(H_1(\textbf{A}'_{s^*}))+f_{\textbf{B}'}(\textbf{r}'^*_e)$.
			 \item[Game IND1.]  This game is same as \textbf{Game IND0}, except that if  $\mathcal{A}_1$ makes an unsigcryption query  $(r^*, s,ct)$ such that $f_{\overline{\textbf{A}}_{r^*}}(H_1(\textbf{A}_s))+f_{\textbf{B}}(\textbf{r}_e)=\textbf{t}^*$ or $f_{\overline{\textbf{A}}'_{r^*}}(H_1(\textbf{A}'_s))+f_{\textbf{B}'}(\textbf{r}'_e)=\textbf{t}'^*$, where $\textbf{t}^*$ and $\textbf{t}'^*$ are defined as in \textbf{Game IND0} (we name this event by $\textsf{Event}_1$), then the challenger outputs $\bot$. 
			\item   \textbf{Game IND1} and \textbf{Game IND0} are indistinguishable since the probability that the event $\textsf{Event}_1$ happens is negligible due to the collision resistance of  $ f_{\overline{\textbf{A}}_{r^*}}(\cdot)+f_{\textbf{B}}(\cdot)$, and $ f_{\overline{\textbf{A}}'_{r^*}}(\cdot)+f_{\textbf{B}'}(\cdot)$. 
	
			  \item[Game IND2.]  This game is same as \textbf{Game IND1}, except that  instead of  $\mathbf{B}, \mathbf{B}'$ being uniform in $ \mathbb{Z}_q ^{n \times m}$, use $\textsf{GenTrap}(n, \overline{m},q, \sigma_1)$ to generate  $(\mathbf{B}, \mathbf{T}_{\textbf{B}}), (\mathbf{B}', \mathbf{T}_{\textbf{B}}') \in  \mathbb{Z}_q^{n \times m} \times \mathbb{Z}_q^{\overline{m} \times nk}$.
			  		
		\item  \textbf{Game IND2} and \textbf{Game IND1} are indistinguishable due to the property of $\textsf{GenTrap}$ algorithm. Namely, although being genereted using $\textsf{GenTrap}$, both $\mathbf{B}, \mathbf{B}'$ look uniform in $ \mathbb{Z}_q ^{n \times m}$.   

	  \item[Game IND3.] This game is same as \textbf{Game IND2}, except that in the \textbf{Setup} phase, for the target receiver $r^*$, the challenger generates as follows:
				  
				  	  	\begin{enumerate}
				  			 
				  			 \item Choose $\textbf{t}^*, \textbf{t}'^*\in \mathbb{Z}_q^n$ uniformly at random. The challenger uses  $\textbf{t}^*, \textbf{t}'^*$  to build $\mathbf{A}_{r^*}, \mathbf{A}'_{r^*}$.
				  			\item  Choose $ \mathbf{T}_{r^*}, \mathbf{T}'_{r^*} \leftarrow D_{\mathbb{Z}^{\overline{m} \times n k }, \sigma_1} $ and then set $\mathbf{A}_{r^*}=[\overline{\mathbf{A}}_{r^*}|-H_2(\textbf{t}^*)\mathbf{G}-\overline{\mathbf{A}}_{r^*}\cdot \mathbf{T}_{r^*}] \in \mathbb{Z}_q^{n \times m}$, $\mathbf{A}'_{r^*}=[\overline{\mathbf{A}}'_{r^*}|-H_2(\textbf{t}'^*)\mathbf{G}-\overline{\mathbf{A}}'_{r^*}\cdot \mathbf{T}'_{r^*}] \in \mathbb{Z}_q^{n \times m}$.
				  			\item The public key for $r^*$ is $pk_{r^*}=(\mathbf{A}_{r^*},\mathbf{A}'_{r^*})$ and the private key key for $r^*$ is $sk_{r^*}=(\mathbf{T}_{r^*},\mathbf{T}'_{r^*})$.
				  		 	  	\end{enumerate}

	\item  \textbf{Game IND3} and \textbf{Game IND2} are indistinguishable since the distribution of  $\mathbf{A}_{r^*}, \mathbf{A}'_{r^*}$ is the same as that of $\mathbf{A}_{r}, \mathbf{A}'_{r}$ for all $r \neq r^*$ which are generated in Step 1 of the \textsf{KG} algorithm.  		
			  		  
			  \item[Game IND4.] This game is the same as \textbf{Game IND3}, except that  in the \textbf{Challenge} phase, the challenger performs the following:
			  	\begin{enumerate}
			
			  \item Choose randomly $b \xleftarrow{\$} \{0,1\}$.

			\item $\mathbf{A}_{r^*,\textbf{t}^*}=\mathbf{A}_{r^*}+[\textbf{0}|H_2(\textbf{t}^*)\mathbf{G}]=[\overline{\mathbf{A}}_{r^*}|-\overline{\mathbf{A}}_{r^*}\cdot \mathbf{T}_{r^*}] \in \mathbb{Z}_q^{n \times m}$,\\ $\mathbf{A}'_{r^*,\textbf{t}'^*}=\mathbf{A}'_{r^*}+[\textbf{0}|H_2(\textbf{t}'^*)\mathbf{G}]=[\overline{\mathbf{A}}_{r^*}|-\overline{\mathbf{A}}'_{r^*}\cdot \mathbf{T}'_{r^*}] \in \mathbb{Z}_q^{n \times m}$.
		\item Compute $\textbf{r}_e^* \leftarrow \textsf{SampleD}(\textbf{T}_\textbf{B}, \textbf{B}, (\textbf{t}^*-f_{\overline{\textbf{A}}_{r^*}}(H_1(\textbf{A}_{s^*}))),\alpha q)$,  ${\textbf{r}'_e}^* \leftarrow \textsf{SampleD}(\textbf{T}'_\textbf{B}, \textbf{B}', (\textbf{t}'^*-f_{\overline{\textbf{A}}'_{r^*}}(H_1(\textbf{A}'_{s^*}))),\alpha q)$. 		 
	
		\item Sample $\textbf{s}, \textbf{s}' \xleftarrow{\$} \mathbb{Z}_q^{n}$, \quad  $\hat{\textbf{x}}_0, \hat{\textbf{x}}_0'\leftarrow D_{\mathbb{Z}^{\overline{m}}, \alpha q}$,\quad  $\textbf{x}_1, \textbf{x}_1'\leftarrow D_{\mathbb{Z}^{\ell}, \alpha q}$.
			  	\item Compute $\hat{\textbf{c}}_0=\textbf{s}^t\overline{\mathbf{A}}_{r^*}+\hat{\textbf{x}}_0^t \in \mathbb{Z}_q^{\overline{m}}$, $\overline{\textbf{c}}_1=\textbf{s}^t\textbf{U}+ \textbf{x}^t_1 \in \mathbb{Z}_q^{\ell}$,
			 	$\hat{\textbf{c}}'_0=(\textbf{s}')^t\overline{\mathbf{A}}_{r^*}'+(\hat{\textbf{x}}_0')^t \in \mathbb{Z}_q^{\overline{m}}$, $\overline{\textbf{c}}'_1=(\textbf{s}')^t\textbf{U}'+ (\textbf{x}'_1)^t \in \mathbb{Z}_q^{\ell}$,
		\item Set  ${(\textbf{c}^*_0)}^t:=(\hat{\textbf{c}}_0^t|\hat{\textbf{c}}_0^t\textbf{T}_{r^*})$, \quad  ${(\textbf{c}'^*_0)}^t:=((\hat{\textbf{c}}'_0)^t|(\hat{\textbf{c}}'_0)^t\textbf{T}'_{r^*})$.
	
		\item Sign on $\mu^*_b| pk_{r^*}|\overline{ct}^*$ with $\overline{ct}^*=(\textbf{c}^*_0, \overline{\textbf{c}}^*_1, \textbf{r}^*_e, \textbf{c}'^*_0, \overline{\textbf{c}}'^*_1, \textbf{r}'^*_e)$ to get the signature $(\textbf{e}^*, \textbf{r}_s^*)$ as usuall. 

		\item  $\textbf{c}^*_1=\overline{\textbf{c}}^*_1+ \mu^*_b \cdot \lfloor q/2\rfloor$, \quad  $\textbf{c}'^*_1=\overline{\textbf{c}}'^*_1+ H(\mu^*_b) \cdot \lfloor q/2\rfloor$	
		 
				\item Return $ct^*=(\textbf{c}^*_0, \textbf{c}^*_1, \textbf{r}^*_{e},\textbf{r}^*_{s},\textbf{c}'^*_0, \textbf{c}'^*_1,\textbf{r}'^*_e, \textbf{e}^*)$ to $\mathcal{A}_1$.
			  	\end{enumerate}

			\item  \textbf{Game IND4} and \textbf{Game IND3} are indistinguishable as the challenger ís just following the real signcryption algorithm $\textsf{SC}$ with  $\mathbf{A}_{r^*}=[\overline{\mathbf{A}}_{r^*}|-H_2(\textbf{t}^*)\mathbf{G}-\overline{\mathbf{A}}_{r^*}\cdot \mathbf{T}_{r^*}] \in \mathbb{Z}_q^{n \times m}$, $\mathbf{A}'_{r^*}=[\overline{\mathbf{A}}'_{r^*}|-H_2(\textbf{t}'^*)\mathbf{G}-\overline{\mathbf{A}}'_{r^*}\cdot \mathbf{T}'_{r^*}] \in \mathbb{Z}_q^{n \times m}$ and the distribution of $\textbf{r}^*_e, \textbf{r}'^*_e $ is still $ D_{\mathbb{Z}^{m }, \alpha q}$ by the property of \textsf{SampleD}.   	  			 		
			  			 		
			   \item[Game IND5.] This game is the same as \textbf{Game 4}, except that $\overline{\textbf{c}}^*_0, \overline{\textbf{c}}^*_1, \textbf{r}^*_e, \overline{\textbf{c}}'^*_0, \overline{\textbf{c}}'^*_1, \textbf{r}'^*_e$ are  chosen uniformly at random. 
			  
			\item   Below, we are going to show that \textbf{Game IND5} and \textbf{Game IND4} are indistinguishable using a  reduction from the hardness of the decision LWE problem.
			   \item[Reduction from LWE.] 
			  Suppose that $\mathcal{A}_1$ can distinguish \textbf{Game IND5} and \textbf{Game IND4}. Then we will construct an algorithm  $\mathcal{B}_1$ that can solve an LWE instance.

			  \textbf{LWE Instance.} $\mathcal{B}_1$ is given a pair $(\textbf{F},\textbf{c}^t) \in \mathbb{Z}_q^{n \times (2\overline{ m}+2\ell)} \times \mathbb{Z}_q^{2(\overline{m}+\ell)}$ that can be parsed as $(\mathbf{A}|\mathbf{A}'|\textbf{U}|\textbf{U}',\hat{\textbf{c}}_0^t|(\hat{\textbf{c}}'_0)^t|\overline{\textbf{c}}_1^t|(\overline{\textbf{c}}'_1)^t) \in \mathbb{Z}_q^{n \times (\overline{ m}+\overline{ m}+\ell+\ell)} \times \mathbb{Z}_q^{\overline{ m}+\overline{ m}+\ell+\ell}$, and $\mathcal{B}_1$ has to decide whether \begin{itemize}
			  \item (i) $(\textbf{F},\textbf{c}^t)$ is an LWE instance:  $\hat{\textbf{c}}_0=\textbf{s}^t\textbf{A}+ \hat{\textbf{x}}^t_0 \in \mathbb{Z}_q^{\overline{m}}$, $\overline{\textbf{c}}_1=\textbf{s}^t\textbf{U}+ \textbf{x}^t_1 \in \mathbb{Z}_q^{\ell}$,
			$\hat{\textbf{c}}'_0=(\textbf{s}')^t\textbf{A}'+ (\hat{\textbf{x}}_0')^t \in \mathbb{Z}_q^{\overline{m}}$, $\overline{\textbf{c}}'_1=(\textbf{s}')^t\textbf{U}'+ (\textbf{x}'_1)^t \in \mathbb{Z}_q^{\ell}$, for some $\textbf{s}, \textbf{s}' \xleftarrow{\$} \mathbb{Z}_q^{n}$, $\hat{\textbf{x}}_0, \hat{\textbf{x}}_0'\leftarrow D_{\mathbb{Z}^{\overline{m}}, \alpha q}$, $\textbf{x}_1, \textbf{x}_1'\leftarrow D_{\mathbb{Z}^{\ell}, \alpha q}$; or 
			\item (ii) $(\textbf{F},\textbf{c}^t)$ is uniform in $\mathbb{Z}_q^{n \times (2\overline{ m}+2\ell)} \times \mathbb{Z}_q^{2(\overline{m}+\ell)}$.
			  \end{itemize}
			  
			  The algorithms  $\mathcal{B}_1$ and $\mathcal{A}_1$ play the following game:
			  
			  \textbf{Setup.}   $\mathcal{B}_1$ simulates public parameters $pp$, public keys for $M$ senders and $N$ receivers as follows:
			  \begin{itemize}
			  	\item Pick $n,q, k, \overline{m}, m,\ell, n, N, M, \alpha, \sigma_1, \sigma_2$ and use hash functions $ H, H_1,$  $  H_2, H_3$. The message space is $\mathcal{M}$.
			  	
			  	\item Randomly guess $r^* \xleftarrow{\$} \{1, \cdots, N\}$ to be the target receiver targeted by $\mathcal{A}_1$, and then set $\overline{\mathbf{A}}_{r^*}:=\textbf{A}$, $\overline{\mathbf{A}}'_{r^*}:=\textbf{A}'$. 
			  	\item Choose $\textbf{t}^*, \textbf{t}'^*\in \mathbb{Z}_q^n$ uniformly at random and choose $ \mathbf{T}_{r^*}, \mathbf{T}'_{r^*} \leftarrow D_{\mathbb{Z}^{\overline{m} \times n k }, \sigma_1} $ and then set $\mathbf{A}_{r^*}=[\overline{\mathbf{A}}_{r^*}|-H_2(\textbf{t}^*)\mathbf{G}-\overline{\mathbf{A}}_{r^*}\cdot \mathbf{T}_{r^*}] \in \mathbb{Z}_q^{n \times m}$, $\mathbf{A}'_{r^*}=[\overline{\mathbf{A}}'_{r^*}|-H_2(\textbf{t}'^*)\mathbf{G}-\overline{\mathbf{A}}'_{r^*}\cdot \mathbf{T}'_{r^*}] \in \mathbb{Z}_q^{n \times m}$.
			  	\item Also, use $\textsf{GenTrap}(n, \overline{m},q, \sigma_1)$ to generate  $(\mathbf{B}, \mathbf{T}_{\textbf{B}}), (\mathbf{B}', \mathbf{T}_{\textbf{B}}') \in  \mathbb{Z}_q^{n \times m} \times \mathbb{Z}_q^{\overline{m} \times nk}$.
			  	\item Sample $ \mathbf{u} \xleftarrow{\$} \mathbb{Z}_q ^{n}$ and  matrices $\mathbf{C}_0, \cdots,$ $ \mathbf{C}_{n}$, $\mathbf{C}'_0, \cdots,$ $ \mathbf{C}'_{n} \xleftarrow{\$} \mathbb{Z}_q^{n \times m}$.
			  	\item For each receiver $r \in [N]\setminus \{r^*\}$ and each sender $s\in [M]$, use the algorithm \textsf{KG} to generate  $(\mathbf{A}_r, \mathbf{T}_r)$, $ (\mathbf{A}'_r, \mathbf{T}'_r)$, $(\mathbf{A}_s, \mathbf{T}_s)$, $(\mathbf{A}'_s, \mathbf{T}'_s) \in  \mathbb{Z}_q^{n \times m} \times \mathbb{Z}_q^{\overline{m} \times n k}$.

			  	\item Set $pp=\{n, q, k,  \overline{m},, m,\ell, n, \alpha, \sigma_1,  \sigma_2, N, M,  \mathcal{M}, (\bf{C}_i, \bf{C}'_i)_{i=0}^{n}, H, H_1,$ $ H_2, $ $H_3\}$ as public parameters and  $pk_{r}=(\mathbf{A}_r, \mathbf{A}'_r)$, $pk_{s}=(\mathbf{A}_s, \mathbf{A}'_s)$ as public keys corresponding to each receiver $r \in [N]$, and each sender $s\in [M]$.
			  	\item Send $pp$, $pk_s$'s, $pk_r$'s all to the adversary  $\mathcal{A}_1$.
			  \end{itemize}
			 
			  \textbf{Phase 1.}  $\mathcal{A}_1$ adaptively makes a polynomially bounded number of the following queries:
			\begin{itemize}
		\item Private key query  PKQ$(r)$: If $r=r^*$, $\mathcal{B}_1$ rejects the query. Otherwise, $\mathcal{B}_1$   returns the private key $sk_{r}=(\textbf{T}_r, \textbf{T}'_r)$  to $\mathcal{A}_1$. 
		\item Signcryption query SCQ$(r,s,\mu)$: $\mathcal{B}_1$ sends the output $ct$ of $\textsf{SC}(pk_{r}$, $sk_{s},\mu)$ back to $\mathcal{A}_1$.
		\item Unsigncryption query   USQ$(r,s,ct)$: If  $\mathcal{A}_1$ makes an unsigcryption query  $(r, s,ct)$ such that $\textbf{t}^*=f_{\overline{\textbf{A}}_r}(H_1(\textbf{A}_s))+f_{\textbf{B}}(\textbf{r}_e)$ and $\textbf{t}'^*=f_{\overline{\textbf{A}}'_r}(H_1(\textbf{A}'_s))+f_{\textbf{B}'}(\textbf{r}'_e)$ then  $\mathcal{B}_1$ outputs $\bot$. Otherwise, $\mathcal{B}_1$ sends the output $\mu$/$\bot$ of $\textsf{USC}(sk_{r},pk_{s},ct)$ back to $\mathcal{A}_1$. 
		\item Tag query TGQ$(r)$: If $r=r^*$, $\mathcal{B}_1$ rejects the query. Otherwise, $\mathcal{B}_1$ sends the output $tg_{r}=\textbf{T}'_r$ back to $\mathcal{A}_1$. 
			\end{itemize}
			 				
	\textbf{Challenge.} $\mathcal{A}_1$ submits two messages $\mu^*_0, \mu^*_1$ together with the target sender's keys $(pk_{s^*},sk_{s^*})$. The adversary $\mathcal{B}_1$ does the following:
		\begin{enumerate}
	\item Choose randomly $b \xleftarrow{\$} \{0,1\}$.
	\item Compute $\textbf{r}_e^* \leftarrow \textsf{SampleD}(\textbf{T}_\textbf{B}, \textbf{B}, (\textbf{t}^*-f_{\overline{\textbf{A}}_{r^*}}(H_1(\textbf{A}_{s^*}))),\alpha q)$,  ${\textbf{r}'_e}^* \leftarrow \textsf{SampleD}(\textbf{T}'_\textbf{B}, \textbf{B}', (\textbf{t}'^*-f_{\overline{\textbf{A}}'_{r^*}}(H_1(\textbf{A}'_{s^*}))),\alpha q)$.  
	\item $\mathbf{A}_{r^*,\textbf{t}^*}=\mathbf{A}_{r^*}+[\textbf{0}|H_2(\textbf{t}^*)\mathbf{G}]=[\overline{\mathbf{A}}_{r^*}|-\overline{\mathbf{A}}_{r^*}\cdot \mathbf{T}_{r^*}] \in \mathbb{Z}_q^{n \times m}$,\\ $\mathbf{A}'_{r^*,\textbf{t}'^*}=\mathbf{A}'_{r^*}+[\textbf{0}|H_2(\textbf{t}'^*)\mathbf{G}]=[\overline{\mathbf{A}}_{r^*}|-\overline{\mathbf{A}}'_{r^*}\cdot \mathbf{T}'_{r^*}] \in \mathbb{Z}_q^{n \times m}$.
	\item Set  ${(\textbf{c}^*_0)}^t:=(\hat{\textbf{c}}_0^t|\hat{\textbf{c}}_0^t\textbf{T}_{r^*})$, \quad  ${(\textbf{c}'^*_0)}^t:=((\hat{\textbf{c}}'_0)^t|(\hat{\textbf{c}}'_0)^t\textbf{T}'_{r^*})$.

	\item Sign on $\mu^*_b| pk_{r^*}|\overline{ct}^*$ with $\overline{ct}^*=(\textbf{c}^*_0, \overline{\textbf{c}}^*_1, \textbf{r}^*_e, \textbf{c}'^*_0, \overline{\textbf{c}}'^*_1, \textbf{r}'^*_e)$ to get the signature $(\textbf{e}^*, \textbf{r}_s^*)$ as usuall. 

	\item  $\textbf{c}^*_1=\overline{\textbf{c}}^*_1+ \mu^*_b \cdot \lfloor q/2\rfloor$, \quad  $\textbf{c}'^*_1=\overline{\textbf{c}}'^*_1+ H(\mu^*_b) \cdot \lfloor q/2\rfloor$	
	 
			\item Return $ct^*=(\textbf{c}^*_0, \textbf{c}^*_1, \textbf{r}^*_{e},\textbf{r}^*_{s},\textbf{c}'^*_0, \textbf{c}'^*_1,\textbf{r}'^*_e, \textbf{e}^*)$ to $\mathcal{A}_1$.
		
				\end{enumerate}

 	 \textbf{Phase 2.} $\mathcal{A}_1$ queries the oracles again as in \textbf{Phase 1} with a restriction that $\mathcal{A}_1$ is not allowed to make the queries PKQ$(r^*)$ and   USQ$(r^*,s^*, ct^*)$.
	 \textbf{Output.} $\mathcal{B}_1$ outputs whatever $\mathcal{A}_1$  outputs.
			 			
	\end{description}
	  \textbf{Analysis.} The probability that an unsigncryption query $(r,s,ct)$ makes $\textbf{t}=\textbf{t}^*$ and $\textbf{t}'=\textbf{t}'^*$ is negligible as $\textbf{t}^*$ and $\textbf{t}'^*$ are chosen randomly in \textbf{Setup} phase. Then, we have $H_2(\textbf{t}-\textbf{t}^*)$ and $H_2(\textbf{t}'-\textbf{t}'^*)$ are invertible then we can apply \textsf{Invert} as in the real unsigncryption algorithm \textsf{USC}.  Obviously, if $(\textbf{F},\textbf{c}^t)$ is the LWE instance then the view of $\mathcal{A}_1$ as in \textbf{Game IND4}; while if $(\textbf{F},\textbf{c}^t)$ is uniform in $\mathbb{Z}_q^{n \times (2\overline{ m}+2\ell)} \times \mathbb{Z}_q^{2(\overline{m}+\ell)}$ then the view of $\mathcal{A}_1$ as in \textbf{Game IND5}. Therefore, if $\mathcal{A}_1$ can distinguish \textbf{Game IND4} and \textbf{Game IND5} then $\mathcal{B}_1$ can solve the decision LWE problem. \qed	
		
	
	\end{proof}

		\begin{theorem}[OW-iCCA1] \label{owtheorem}
			The proposed $\mathsf{SCET}$ scheme is OW-iCCA1 secure provided that $H$ is an one-way hash function, the $\mathsf{dLWE}_{n,2(\overline{m}+\ell), q,\alpha q}$ problem is hard and the functions $ f_{\overline{\textbf{A}}_{r}}(\cdot)+f_{\textbf{B}}(\cdot)$ are collision-resistant for any $\overline{\mathbf{A}}_r \xleftarrow{\$} \mathbb{Z}_q^{n \times\overline{ m}}$ and any $\mathbf{B} \xleftarrow{\$} \mathbb{Z}_q^{n \times m}$. In particular, the advantage of the OW-iCCA1 advesary is 
			$$\epsilon \leq \epsilon_{H,OW}+ \epsilon_{f,CR}+\epsilon_{LWE},$$
		where	$\epsilon_{H,OW}$ is the advantage of breaking the one-wayness of $H$,  $\epsilon_{f,CR}$ is the advantage of  finding collision for any functions $ f_{\overline{\textbf{A}}_{r}}(\cdot)+f_{\textbf{B}}(\cdot)$ and $\epsilon_{LWE}$ is the advantage of solving the  $\mathsf{dLWE}_{n,2(\overline{m}+\ell), q,\alpha q}$ problem. 
			
		\end{theorem}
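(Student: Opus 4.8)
The plan is to reuse the game-hopping skeleton of the proof of Theorem~\ref{indtheorem}, while exploiting the single structural difference between the two games: in the OW-iCCA1 game the adversary $\mathcal{A}_2$ additionally obtains the target receiver's tag $tg_{r^*}=\textbf{T}'_{r^*}$, so $\mathcal{A}_2$ can always decrypt the equality-test component $(\textbf{c}'_0,\textbf{c}'_1)$, which carries $H(\mu^*)$. Hence, unlike in the IND proof, this component \emph{cannot} be masked by $\mathsf{dLWE}$. The idea is therefore to randomize only the message-carrying component $(\textbf{c}_0,\textbf{c}_1)$, so that after the game sequence the sole remaining dependence of the challenge on $\mu^*$ is through the value $H(\mu^*)$; recovering $\mu^*$ from that value is then exactly an inversion of the one-way function $H$.

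I would define games \textbf{Game OW0} through \textbf{Game OW5}. \textbf{Game OW0} is the original OW-iCCA1 game. In \textbf{Game OW1} the challenger answers $\bot$ to any unsigncryption query $(r^*,s,ct)$ with $f_{\overline{\textbf{A}}_{r^*}}(H_1(\textbf{A}_s))+f_{\textbf{B}}(\textbf{r}_e)=\textbf{t}^*$ or $f_{\overline{\textbf{A}}'_{r^*}}(H_1(\textbf{A}'_s))+f_{\textbf{B}'}(\textbf{r}'_e)=\textbf{t}'^*$; by the collision resistance of $f_{\overline{\textbf{A}}_{r^*}}(\cdot)+f_{\textbf{B}}(\cdot)$ this changes $\mathcal{A}_2$'s view by at most $\epsilon_{f,CR}$. \textbf{Games OW2--OW4} reprogram $\textbf{B},\textbf{B}'$ through $\textsf{GenTrap}$, pick $\textbf{t}^*,\textbf{t}'^*$ uniformly so as to embed them into $\mathbf{A}_{r^*},\mathbf{A}'_{r^*}$, and build the challenge exactly as in the corresponding games of Theorem~\ref{indtheorem}, using $\textsf{SampleD}$ to produce $\textbf{r}^*_e,\textbf{r}'^*_e$ of the correct distribution; each transition is statistically indistinguishable and contributes only a negligible term. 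Crucially, the tag queries $\textsf{TGQ}(r^*)$ remain answerable because the challenger itself holds $\textbf{T}'_{r^*}$.

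The decisive step is \textbf{Game OW5}, in which the message component (the ``hat'' pieces $\hat{\textbf{c}}^*_0,\overline{\textbf{c}}^*_1$ out of which $\textbf{c}^*_0,\textbf{c}^*_1$ are built) is replaced by uniform vectors while the equality-test component is kept in LWE form. Indistinguishability follows by a reduction to $\mathsf{dLWE}_{n,\overline{m}+\ell,q,\alpha q}$, which is implied by the stated $\mathsf{dLWE}_{n,2(\overline{m}+\ell),q,\alpha q}$, carried out exactly as in the \textbf{Reduction from LWE} step of Theorem~\ref{indtheorem} but restricted to the single pair $(\overline{\mathbf{A}}_{r^*},\textbf{U})$; this costs at most $\epsilon_{LWE}$. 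After \textbf{Game OW5} the vector $\textbf{c}^*_1=\overline{\textbf{c}}^*_1+\mu^*\lfloor q/2\rfloor$ is uniform and independent of $\mu^*$, so the only information about $\mu^*$ left in $ct^*$ is $\textbf{c}'^*_1=\overline{\textbf{c}}'^*_1+H(\mu^*)\lfloor q/2\rfloor$. I then build a reduction $\mathcal{B}$ to the one-wayness of $H$: on input a challenge $y=H(\mu^*)$ for a uniformly random $\mu^*\in\mathcal{M}$, $\mathcal{B}$ runs \textbf{Setup} as in \textbf{Game OW5} (it generates every key and trapdoor itself, including $\textbf{T}_{r^*},\textbf{T}'_{r^*}$ and $\textbf{T}_{\textbf{B}},\textbf{T}'_{\textbf{B}}$, so all oracles are simulatable), samples the message component uniformly --- which needs no knowledge of $\mu^*$, precisely because it is now independent of $\mu^*$ --- produces the equality-test component honestly and sets $\textbf{c}'^*_1:=\overline{\textbf{c}}'^*_1+y\cdot\lfloor q/2\rfloor$; when $\mathcal{A}_2$ halts with $\mu'^*$, $\mathcal{B}$ outputs $\mu'^*$. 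The simulated challenge is distributed exactly as in \textbf{Game OW5}, and $\mathcal{A}_2$ wins iff $\mu'^*=\mu^*$, i.e.\ $H(\mu'^*)=y$, so $\mathcal{B}$ succeeds with the same probability; summing the hops yields $\epsilon\le\epsilon_{H,OW}+\epsilon_{f,CR}+\epsilon_{LWE}$.

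The main obstacle I anticipate is justifying that leaving the equality-test component in LWE form does not leak more about $\mu^*$ than the value $H(\mu^*)$: since $\mathcal{A}_2$ owns $\textbf{T}'_{r^*}$ it will in fact decrypt $(\textbf{c}'^*_0,\textbf{c}'^*_1)$ via $\textsf{Invert}$, and one must argue that this decryption yields exactly $H(\mu^*)$ and nothing more, so that planting the one-wayness challenge $y$ directly into $\textbf{c}'^*_1$ is perfectly consistent with what an honest tag-holder recovers --- this is also why we must not randomize this component, as doing so would be detectable by $\mathcal{A}_2$ through the tag. A secondary point to treat carefully is the simulatability of the unsigncryption oracle at $r^*$: the abort of \textbf{Game OW1} together with the uniform choice of $\textbf{t}^*,\textbf{t}'^*$ ensures that for every admitted query $H_2(\textbf{t})-H_2(\textbf{t}^*)$ is invertible, so $\textbf{T}_{r^*}$ acts as a $\textbf{G}$-trapdoor for $\mathbf{A}_{r^*,\textbf{t}}$ and $\textsf{Invert}$ applies just as in the real $\textsf{USC}$.
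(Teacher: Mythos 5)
Your plan diverges from the paper's proof in its ordering, and the reordering opens a genuine gap at the final step. You postpone the one-wayness reduction to the very end and claim that after your \textbf{Game OW5} ``the only information about $\mu^*$ left in $ct^*$ is $\textbf{c}'^*_1=\overline{\textbf{c}}'^*_1+H(\mu^*)\lfloor q/2\rfloor$.'' That claim is false: the challenge ciphertext also contains the signature $(\textbf{e}^*,\textbf{r}^*_s)$, which is computed on $\mu^*|pk_{r^*}|\overline{ct}^*$, concretely via $\textbf{h}^*=f_{\overline{\textbf{A}}_{s^*}}(H_3(\mu^*|pk_{r^*}|\overline{ct}^*))+f_{\textbf{B}}(\textbf{r}^*_s)$ followed by $\textbf{e}^*\leftarrow\textsf{SampleD}(\textbf{T}_{s^*},\textbf{A}_{s^*,\textbf{h}^*},\textbf{u},\sigma_2)$. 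Randomizing the message slot with LWE does not remove this dependence, and no leftover-hash/universality argument for $H_3$ rescues it either, since $\textbf{r}^*_s$ is part of the ciphertext (so $\textbf{h}^*$ is a deterministic function of $\mu^*$ given the rest) and, conditioned on $H(\mu^*)$ --- which the tag-holding adversary recovers --- $\mu^*$ may have no residual min-entropy. Consequently your inverter $\mathcal{B}$, which knows only $y=H(\mu^*)$, cannot compute $H_3(\mu^*|pk_{r^*}|\overline{ct}^*)$ and hence cannot produce the signature component at all; if it instead fakes $\textbf{h}^*$, the simulated challenge is no longer distributed as in Game OW5 (recall the adversary holds $sk_{s^*}$, since it chose the target sender's keys), and you give no argument bounding that discrepancy. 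So the concluding claim ``$\mathcal{B}$ succeeds with the same probability'' is not established.

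This is precisely why the paper front-loads the hash step rather than finishing with it: its \textbf{Game OW1} swaps $H(\mu^*)$ for $H(\mu')$, with $\mu'$ an independent uniform message, \emph{while the challenger still knows $\mu^*$} and can therefore sign honestly on $\mu^*$; the swap is charged to $\epsilon_{H,OW}$. Every subsequent game (the collision abort, the $\textsf{GenTrap}$ and tag-embedding hops, and the final LWE hop that randomizes the challenge) is run by a challenger that knows $\mu^*$, so the signature is never an obstacle, and the last game carries no information about $\mu^*$ in the equality-test slot by construction. Your instinct that the equality-test component cannot be casually randomized in front of an adversary holding $tg_{r^*}=\textbf{T}'_{r^*}$ is a reasonable one, but the resolution used by the paper is this reordering --- decouple the ET payload from $\mu^*$ first, then randomize --- not a final-stage inversion reduction. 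To salvage your ordering you would need an additional hop, before the inversion reduction, that removes the signature's dependence on $\mu^*$; that is exactly the step your sketch is missing.
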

		
		\begin{proof} We prove by giving a sequence of five games in which the first game is the original OW-iCCA1 one and in the last game, the ciphertext will be chosen randomly. Obviously, in the last game  the advantage of the OW-iCCA1 adversary is zero.  For $i \in \{0,1,2,3,4\}$, let $W_i$ be the event that the OW-iCCA1 adversary $\mathcal{A}_2$ wins \textbf{Game OW$i$}, we need to prove that $\Pr[W_0] $ is negligible. To do that we will show that for $i\in \{0,\cdots, 4\}$, $|\Pr[W_{i}]-\Pr[W_{i+1}]| $ is negligible, guaranteed by the one-wayness of hash functions  and especially the hardness of the decision LWE problem.
			
			\begin{description}
				\item[Game OW0.] This is the original OW-iCCA1 game. Suppose that the target receiver is $r^*$ and the target sender announced at \textbf{Challenge} phase by the adversary $\mathcal{A}_2$ is $s^*$. Also, assume that $\textbf{t}^*:=f_{\overline{\textbf{A}}_{r^*}}(H_1(\textbf{A}_{s^*}))+f_{\textbf{B}}(\textbf{r}^*_e)$, and $\textbf{t}'^*:=f_{\overline{\textbf{A}}'_{r^*}}(H_1(\textbf{A}'_{s^*}))+f_{\textbf{B}'}(\textbf{r}'^*_e)$. Note that, the adversary $\mathcal{A}_2$ can get the trapdoor $\textbf{T}'_{r^*}$ using the trapdoor query for the target receiver $r^*$.
		
			\item[Game OW1.]  This game is same as \textbf{Game OW0}, except that  in the \textbf{Challenge} phase, on the challenge plaintext $\mu^*  \xleftarrow{\$}\mathcal{M}$, the challenger first chooses $\mu' \xleftarrow{\$}\mathcal{M}$ then signcrypts $\mu^*$  in $\textbf{c}_1^*$  and  $ H(\mu')$ instead of $\mu^*$ in ${\textbf{c}'}_1^*$ of the challenge ciphertext $ct^*$, i.e.,  $\textbf{c}^*_1=\overline{\textbf{c}}^*_1+ \mu^* \cdot \lfloor q/2\rfloor$,\quad  $\textbf{c}'^*_1=\overline{\textbf{c}}'^*_1+ H(\mu') \cdot \lfloor q/2\rfloor$	.
			
			\item 			 Since the view of the adverary $\mathcal{A}_2$ is the same  in both \textbf{Game OW1} and \textbf{Game OW0},  except the case $\mathcal{A}_2$ can break the one-wayness of $H$, we have $$|\Pr[W_1]-\Pr[W_0]|\leq \epsilon_{H,OW}.$$

				\item[Game OW2.]  This game is same as \textbf{Game OW1}, except that if  $\mathcal{A}_2$ makes an unsigcryption query  $(r^*, s,ct)$ such that $f_{\overline{\textbf{A}}_{r^*}}(H_1(\textbf{A}_s))+f_{\textbf{B}}(\textbf{r}_e)=\textbf{t}^*$ or $f_{\overline{\textbf{A}}'_{r^*}}(H_1(\textbf{A}'_s))+f_{\textbf{B}'}(\textbf{r}'_e)=\textbf{t}'^*$, where $\textbf{t}^*$ and $\textbf{t}'^*$ are defined as in \textbf{Game OW0} (we name this event by $\textsf{Event}_1$), then the challenger outputs $\bot$. 

				\item 			 Since the view of the adverary $\mathcal{A}_2$ is the same, except once the event $\textsf{Event}_1$ happens,  in both \textbf{Game OW2} and \textbf{Game OW1}, we have $$|\Pr[W_2]-\Pr[W_1]|\leq \epsilon_{f,CR}.$$

				\item[Game OW3.]  This game is same as \textbf{Game OW2}, except that  instead of  $\mathbf{B}, \mathbf{B}'$ being uniform in $ \mathbb{Z}_q ^{n \times m}$, the challenger uses $\textsf{GenTrap}(n, \overline{m},q, \sigma_1)$ to generate  $(\mathbf{B}, \mathbf{T}_{\textbf{B}}), (\mathbf{B}', \mathbf{T}_{\textbf{B}}') \in  \mathbb{Z}_q^{n \times m} \times \mathbb{Z}_q^{\overline{m} \times nk}$. 
				\item Due to the fact that $\mathbf{B}, \mathbf{B}'$ generated by \textsf{GenTrap} are close to uniform, then we have $\Pr[W_3]=\Pr[W_2].$

	  \item[Game OW4.] This game is same as \textbf{Game OW3}, except that in the \textbf{Setup} phase, for the target receiver $r^*$, the challenger generates as follows:
					  
			\begin{enumerate}
					  			 
		 \item Choose $\textbf{t}^*, \textbf{t}'^*\in \mathbb{Z}_q^n$ uniformly at random. The challenger uses  $\textbf{t}^*, \textbf{t}'^*$  to build $\mathbf{A}_{r^*}, \mathbf{A}'_{r^*}$.
		\item  Choose $ \mathbf{T}_{r^*}, \mathbf{T}'_{r^*} \leftarrow D_{\mathbb{Z}^{\overline{m} \times n k }, \sigma_1} $ and then set $\mathbf{A}_{r^*}=[\overline{\mathbf{A}}_{r^*}|-H_2(\textbf{t}^*)\mathbf{G}-\overline{\mathbf{A}}_{r^*}\cdot \mathbf{T}_{r^*}] \in \mathbb{Z}_q^{n \times m}$, $\mathbf{A}'_{r^*}=[\overline{\mathbf{A}}'_{r^*}|-H_2(\textbf{t}'^*)\mathbf{G}-\overline{\mathbf{A}}'_{r^*}\cdot \mathbf{T}'_{r^*}] \in \mathbb{Z}_q^{n \times m}$.
		\item The public key for $r^*$ is $pk_{r^*}=(\mathbf{A}_{r^*},\mathbf{A}'_{r^*})$ and the private key key for $r^*$ is $sk_{r^*}=(\mathbf{T}_{r^*},\mathbf{T}'_{r^*})$.
					  		 	  	\end{enumerate}

		\item In this game, once the adversary $\mathcal{A}_2$ makes a trapdoor query for $r^*$, the challenger still easily returns $\textbf{T}'_{r^*}$ to $\mathcal{A}_2$. In  \textbf{Game OW4}  the view of $\mathcal{A}_2$ is unchanged in comparison with in \textbf{Game OW3} since the distribution of  $\mathbf{A}_{r^*}, \mathbf{A}'_{r^*}$ is the same as that of $\mathbf{A}_{r}, \mathbf{A}'_{r}$ for all $r \neq r^*$ which are generated in Step 1 of the \textsf{KG} algorithm.  Therefore,
		$$\Pr[W_4]=\Pr[W_3].$$

				\item[Game OW5.] This game is same as \textbf{Game OW4}, except that in the \textbf{Challenge} phase, on  the challenge message $\mu^*$,  the challenger does the following:
				\begin{enumerate}
				\item Compute $\textbf{r}_e^* \leftarrow \textsf{SampleD}(\textbf{T}_\textbf{B}, \textbf{B}, (\textbf{t}^*-f_{\overline{\textbf{A}}_{r^*}}(H_1(\textbf{A}_{s^*}))),\alpha q)$,  ${\textbf{r}'_e}^* \leftarrow \textsf{SampleD}(\textbf{T}'_\textbf{B}, \textbf{B}', (\textbf{t}'^*-f_{\overline{\textbf{A}}'_{r^*}}(H_1(\textbf{A}'_{s^*}))),\alpha q)$.  
					\item Sample  $\textbf{s}, \textbf{s}' \xleftarrow{\$} \mathbb{Z}_q^{n}$, \quad  $\hat{\textbf{x}}_0, \hat{\textbf{x}}_0'\leftarrow D_{\mathbb{Z}^{\overline{m}}, \alpha q}$, \quad  $\textbf{x}_1, \textbf{x}_1'\leftarrow D_{\mathbb{Z}^{\ell}, \alpha q}$.
					\item Compute $\hat{\textbf{c}}_0=\textbf{s}^t\overline{\mathbf{A}}_{r^*}+ \hat{\textbf{x}}^t_0 \in \mathbb{Z}_q^{\overline{m}}$, $\overline{\textbf{c}}_1=\textbf{s}^t\textbf{U}+ \textbf{x}^t_1 \in \mathbb{Z}_q^{\ell}$,
					$\hat{\textbf{c}}'_0=(\textbf{s}')^t\overline{\mathbf{A}}_{r^*}'+ (\hat{\textbf{x}}_0')^t \in \mathbb{Z}_q^{\overline{m}}$, $\overline{\textbf{c}}'_1=(\textbf{s}')^t\textbf{U}'+ (\textbf{x}'_1)^t \in \mathbb{Z}_q^{\ell}$,
			\item Set  ${(\textbf{c}^*_0)}^t:=(\hat{\textbf{c}}_0^t|\hat{\textbf{c}}_0^t)\textbf{T}_{r^*}$, \quad  ${(\textbf{c}'^*_0)}^t:=((\hat{\textbf{c}}'_0)^t|(\hat{\textbf{c}}'_0)^t)\textbf{T}'_{r^*}$.
		
			\item Sign on $\mu^*_b| pk_{r^*}|\overline{ct}^*$ with $\overline{ct}^*=(\textbf{c}^*_0, \overline{\textbf{c}}^*_1, \textbf{r}^*_e, \textbf{c}'^*_0, \overline{\textbf{c}}'^*_1, \textbf{r}'^*_e)$ to get the signature $(\textbf{e}^*, \textbf{r}_s^*)$ as usuall. 

			\item $\mu'\xleftarrow{\$} \mathcal{M}$, \quad  $\textbf{c}^*_1=\overline{\textbf{c}}^*_1+ \mu^* \cdot \lfloor q/2\rfloor$, \quad  $\textbf{c}'^*_1=\overline{\textbf{c}}'^*_1+ H(\mu') \cdot \lfloor q/2\rfloor$.	
			 
					\item Return $ct^*=(\textbf{c}^*_0, \textbf{c}^*_1, \textbf{r}^*_{e},\textbf{r}^*_{s},\textbf{c}'^*_0, \textbf{c}'^*_1,\textbf{r}'^*_e, \textbf{e}^*)$ to $\mathcal{A}_2$.
									\end{enumerate}

			\item 	 We have $\Pr[W_5]=\Pr[W_4]$ as the distributions of corresponding components in $ct^*$ in \textbf{Game OW5} and \textbf{Game OW4} are the same.
				\item[Game OW6.] This game is same as \textbf{Game OW5}, except that the challenge ciphertext $ct^*=(\textbf{c}_0^*, \textbf{c}_1^*, \textbf{r}_e^*,\textbf{r}_s^*, \textbf{c}'^*_0, \textbf{c}'^*_1,\textbf{r}'^*_e, \textbf{e}^*)$ is  chosen uniformly at random.  The advantage of $\mathcal{A}_2$ in this game is obviously zero, i.e.,  $\Pr[W_6]=0.$
				
			\item	At this point, we show that $|\Pr[W_6]-\Pr[W_5]| \leq \epsilon_{LWE}$ which is negligible by using  a reduction from the LWE assumption as in Theorem \ref{indtheorem}. \qed

			\end{description}
			

		\end{proof}

	Before stating the SUF-iCMA security, we recap  the so-called \textit{abort-resistant hash functions}, presented in \cite[Section 7.4.1]{ABB10}. We will exploit the hash functions in designing answers to the adversary's queries.
	
	\begin{definition}[{\cite[Definition 26]{ABB10}}] \label{def3}
		Let $\mathcal{H}:=\{H:X\rightarrow Y\}$ be a family of hash functions $H$ from $X$ to $Y$ where $0 \in Y$. For a set of $Q+1$ inputs $\overline{\textbf{h}}:=(\textbf{h}^*,\textbf{h}^{(1)}, \cdots, \textbf{h}^{(Q)})$, the non-abort probability of $\overline{\textbf{h}}$ is defined as
		$$\alpha(\overline{\textbf{h}}):=\Pr[H(\textbf{h}^*)=0 \text{ and }H(\textbf{h}^{(1)})\neq 0 \text{ and } \cdots \text{ and } H(\textbf{h}^{(Q)})\neq0],$$
		where the probability is over the random choice of $H$ in $\mathcal{H}$. And $\mathcal{H}$ is called $(Q, \alpha_{\min}, \alpha_{\max})$ abort-resistant if for all $\overline{\textbf{h}}:=(\textbf{h}^*,\textbf{h}^{(1)}, \cdots, \textbf{h}^{(Q)})$ and $\textbf{h}^* \notin \{\textbf{h}^{(1)}, \cdots, \textbf{h}^{(Q)} \}$, we have $ \alpha_{\min}\leq \alpha(\overline{\textbf{h}})\leq  \alpha_{\max}$.
	\end{definition}
	Particularly, we have the following result that will be applied to the security proof for the proposed signcryption construction. 
	
	\begin{lemma}[{\cite[Lemma 27]{ABB10}}] \label{lemma3}
		let $q$ be a prime and $0<Q<q$. Consider the family	$\mathcal{H}_{Wat}:=\{H_{\mathbf{x}}: \mathbb{Z}^{n}_q \setminus \{\mathbf{0}\}\rightarrow \mathbb{Z}_q:  \mathbf{x}=(x_1, \cdots, x_n) \in \mathbb{Z}^{n}_q \setminus \{\mathbf{0}\}\}$  defined  as
		$H_{\mathbf{x}}(\mathbf{h})=1+\sum_{i=1}^{n}x_ih_i \in \mathbb{Z}_q$   where  $ \mathbf{h}=(h_1, \cdots, h_n) \in \mathbb{Z}^{n}_q.$ Then $\mathcal{H}_{Wat}$ is $(Q, \frac{1}{q}(1-\frac{Q}{q}), \frac{1}{q})$ abort-resistant.
	\end{lemma}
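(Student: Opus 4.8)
The plan is to reduce the event defining $\alpha(\overline{\mathbf{h}})$ to a system of affine linear constraints on the random index $\mathbf{x}$ and then estimate its probability by a union bound. Throughout, the probability is taken over the uniform choice of $\mathbf{x}=(x_1,\ldots,x_n)$, equivalently over the uniform choice of $H=H_{\mathbf{x}}$ in $\mathcal{H}_{Wat}$; I would carry out the counting with $\mathbf{x}$ ranging over $\mathbb{Z}_q^n$, the normalisation under which the stated constants come out exact. Writing $\langle \mathbf{x},\mathbf{h}\rangle:=\sum_{i=1}^n x_i h_i$, observe that $H_{\mathbf{x}}(\mathbf{h})=0$ is equivalent to the single affine condition $\langle \mathbf{x},\mathbf{h}\rangle=-1$. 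Accordingly I introduce the event $A:\langle \mathbf{x},\mathbf{h}^*\rangle=-1$ and, for each $j\in\{1,\ldots,Q\}$, the event $B_j:\langle \mathbf{x},\mathbf{h}^{(j)}\rangle=-1$, so that $\alpha(\overline{\mathbf{h}})=\Pr[A\wedge\neg B_1\wedge\cdots\wedge\neg B_Q]$.

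For the upper bound I simply discard the constraints $\neg B_j$ and use $\alpha(\overline{\mathbf{h}})\le\Pr[A]$. Since $\mathbf{h}^*\neq\mathbf{0}$ belongs to the domain $\mathbb{Z}_q^n\setminus\{\mathbf{0}\}$, the linear form $\mathbf{x}\mapsto\langle\mathbf{x},\mathbf{h}^*\rangle$ is surjective onto $\mathbb{Z}_q$, hence attains the value $-1$ on exactly a $1/q$-fraction of all $\mathbf{x}$; thus $\Pr[A]=1/q=\alpha_{\max}$, as required. For the lower bound I would write $\alpha(\overline{\mathbf{h}})=\Pr[A]-\Pr[A\wedge(B_1\vee\cdots\vee B_Q)]$ and bound the second term by $\sum_{j=1}^{Q}\Pr[A\wedge B_j]$. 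The crux is the claim $\Pr[A\wedge B_j]\le 1/q^2$, which I would establish by a case analysis on the pair $(\mathbf{h}^*,\mathbf{h}^{(j)})$. If they are linearly independent, the two affine equations $\langle\mathbf{x},\mathbf{h}^*\rangle=-1$ and $\langle\mathbf{x},\mathbf{h}^{(j)}\rangle=-1$ cut out an affine set of codimension $2$, i.e. exactly $q^{n-2}$ of the $q^n$ choices, so $\Pr[A\wedge B_j]=1/q^2$. If they are dependent, say $\mathbf{h}^{(j)}=c\,\mathbf{h}^*$ with $c\neq 0$, then on $A$ the second equation reads $-c=-1$, forcing $c=1$ and hence $\mathbf{h}^{(j)}=\mathbf{h}^*$, which is excluded by the hypothesis $\mathbf{h}^*\notin\{\mathbf{h}^{(1)},\ldots,\mathbf{h}^{(Q)}\}$; the system is therefore inconsistent and $\Pr[A\wedge B_j]=0$. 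In either case $\Pr[A\wedge B_j]\le 1/q^2$, and combining yields
\[
\alpha(\overline{\mathbf{h}})\ \ge\ \frac{1}{q}-\frac{Q}{q^2}\ =\ \frac{1}{q}\Bigl(1-\frac{Q}{q}\Bigr)\ =\ \alpha_{\min}.
\]

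The only delicate point is the dependent case of the final claim: it is precisely the additive constant $1$ in $H_{\mathbf{x}}(\mathbf{h})=1+\langle\mathbf{x},\mathbf{h}\rangle$ that converts a coincidence of two inputs into an \emph{inconsistent} (rather than coincident) pair of affine constraints, and this is exactly where the distinctness hypothesis $\mathbf{h}^*\neq\mathbf{h}^{(j)}$ is consumed. I would also remark that restricting the index to $\mathbb{Z}_q^n\setminus\{\mathbf{0}\}$, as in the statement of the family, leaves every solution count above unchanged (none of the solution sets contains $\mathbf{0}$, since $H_{\mathbf{0}}\equiv 1$) and only rescales all probabilities by the factor $q^n/(q^n-1)$, so the argument is unaffected up to this harmless normalisation.
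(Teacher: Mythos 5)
The paper does not prove this lemma at all: it imports it verbatim (as Lemma~27) from Agrawal--Boneh--Boyen \cite{ABB10}, so there is no in-paper proof to compare against. Your argument is correct, and it is essentially the standard ABB10 argument: reduce $H_{\mathbf{x}}(\mathbf{h})=0$ to the affine condition $\langle\mathbf{x},\mathbf{h}\rangle=-1$, get $\alpha_{\max}$ from surjectivity of a nonzero linear form, and get $\alpha_{\min}$ from $\Pr[A\wedge B_j]\le 1/q^2$ (pairwise independence in the independent case, inconsistency --- thanks to the additive constant $1$ --- in the dependent-but-distinct case) plus a union bound. Your closing identification of where the hypothesis $\mathbf{h}^*\notin\{\mathbf{h}^{(1)},\dots,\mathbf{h}^{(Q)}\}$ is consumed is exactly the right observation.

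One caveat deserves sharpening. Your counting is done with $\mathbf{x}$ uniform over all of $\mathbb{Z}_q^n$, which is precisely the setting of ABB10 (there the family is indexed by all of $\mathbb{Z}_q^n$), and under that normalisation both constants come out exactly as stated. The paper's restatement, however, indexes the family by $\mathbb{Z}_q^n\setminus\{\mathbf{0}\}$, and under that normalisation every relevant probability is multiplied by $q^n/(q^n-1)$, since $\mathbf{x}=\mathbf{0}$ never satisfies event $A$. This rescaling only strengthens the lower bound, but it makes the upper bound $q^{n-1}/(q^n-1)$, which is \emph{strictly larger} than the claimed $\alpha_{\max}=1/q$ (and this slack is attained, e.g.\ when every $\mathbf{h}^{(j)}$ is a scalar multiple $c_j\mathbf{h}^*$ with $c_j\neq 1$). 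So ``harmless normalisation'' is accurate for $\alpha_{\min}$ --- the bound actually used in the paper's SUF-iCMA simulation --- but the exact constant $\alpha_{\max}=1/q$ is really a statement about the unrestricted family; as literally stated in the paper (a slight deviation from ABB10), the lemma holds only up to the negligible factor $q^n/(q^n-1)$. This is a defect of the paper's restatement rather than of your proof, but a fully rigorous write-up should say so explicitly rather than fold it into ``harmless.''
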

	Now, it is the time we  state and prove the SUF-iCMA security for \textsf{SCET}.
	
	\begin{theorem}[SUF-iCMA] Our \textsf{SCET} is SUF-iCMA secure in the standard model provided that the SIS problem is intractable. In particular, assume that there is an adversarial algorithm $\mathcal{F}$ who can win the SUF-iCMA  game making at most $Q<q/2$ adaptive chosen message queries. Then, there is an algorithm $\mathcal{G}$ who is able to solve the $\mathsf{SIS}_{n, 2\overline{m}+2nk+1, q, \beta}$ problem, with $ \beta:=  2\sigma_1\cdot \sigma_2\cdot \sqrt{n+1} \cdot \frac{1}{\sqrt{2\pi}}\cdot (\sqrt{\overline{m}}+\sqrt{nk}) \cdot\sqrt{m+nk}  $..
	\end{theorem}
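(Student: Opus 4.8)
The plan is to reduce strong unforgeability to $\mathsf{SIS}_{n,2\overline{m}+2nk+1,q,\beta}$ via a simulator $\mathcal{G}$ that plants the SIS challenge in the matrices attached to the target sender. Since the index $s^*$ is fixed at \textbf{Setup} in the UF-iCMA game, no guessing of the sender is needed: $\mathcal{G}$ parses its SIS instance as $\mathbf{F}=[\mathbf{A}_{s^*}\,|\,\mathbf{B}\,|\,\mathbf{u}]\in\mathbb{Z}_q^{n\times(2m+1)}$, i.e. it uses the challenge columns for the full public matrix $\mathbf{A}_{s^*}$, for the global matrix $\mathbf{B}$ and for the syndrome $\mathbf{u}$; all receiver keys and all non-target sender keys are generated honestly, which is legitimate in the insider game. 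Note $\mathbf{A}_{s^*}$ may be taken uniform because the honest output of \textsf{KG} is statistically close to uniform, so $\mathcal{G}$ need not know a $\mathbf{T}_{s^*}$. The crucial programming step is to realise the global matrices $\mathbf{C}_0,\dots,\mathbf{C}_n$ in a punctured way using the Waters-type abort-resistant family $\mathcal{H}_{Wat}$ of Lemma \ref{lemma3}: choose $\mathbf{x}=(x_1,\dots,x_n)$, set $\mathbf{C}_0=\overline{\mathbf{A}}_{s^*}\mathbf{R}_0+\mathbf{G}$ and $\mathbf{C}_i=\overline{\mathbf{A}}_{s^*}\mathbf{R}_i+x_i\mathbf{G}$ with $\mathbf{R}_i\leftarrow D_{\sigma_1}^{\overline{m}\times nk}$, so that $\mathbf{C}_0+\sum_i h_i\mathbf{C}_i=\overline{\mathbf{A}}_{s^*}\mathbf{R}(\mathbf{h})+H_{\mathbf{x}}(\mathbf{h})\mathbf{G}$ where $\mathbf{R}(\mathbf{h})=\mathbf{R}_0+\sum_i h_i\mathbf{R}_i$ and $H_{\mathbf{x}}(\mathbf{h})=1+\sum_i x_i h_i$. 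By Lemma \ref{lemma4} this makes $\mathbf{R}(\mathbf{h})$ (padded with zeros) a $\mathbf{G}$-trapdoor for $\mathbf{A}_{s^*,\mathbf{h}}$ with the scalar tag $H_{\mathbf{x}}(\mathbf{h})\mathbf{I}_n$, invertible exactly when $H_{\mathbf{x}}(\mathbf{h})\neq 0$; the masking term $\overline{\mathbf{A}}_{s^*}\mathbf{R}_i$ keeps each $\mathbf{C}_i$ statistically close to uniform and hides $\mathbf{x}$.

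Next I would answer the oracle queries. Private-key, tag, unsigncryption queries and signcryption queries on any sender $s\neq s^*$ are answered with the honest keys (in particular $\mathcal{G}$ signs for $s\neq s^*$ through its genuine trapdoor $\mathbf{T}_s$). For a signcryption query on $s^*$, $\mathcal{G}$ forms the tag $\mathbf{h}$ exactly as in \textsf{SC}; if $H_{\mathbf{x}}(\mathbf{h})\neq 0$ it runs \textsf{SampleD} with the trapdoor $\mathbf{R}(\mathbf{h})$ to output $\mathbf{e}\sim D_{\Lambda_q^{\mathbf{u}}(\mathbf{A}_{s^*,\mathbf{h}}),\sigma_2}$, which is distributed exactly as in the real scheme because the law of \textsf{SampleD} depends only on $\sigma_2$ and not on which trapdoor is used, provided $\sigma_2$ dominates $s_1(\mathbf{R}(\mathbf{h}))$; if $H_{\mathbf{x}}(\mathbf{h})=0$ it aborts. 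When $\mathcal{F}$ returns a forgery $(\mu^*,ct^*)$ inducing tag $\mathbf{h}^*$, $\mathcal{G}$ aborts unless $H_{\mathbf{x}}(\mathbf{h}^*)=0$. Because $\mathbf{x}$ is information-theoretically hidden, Lemma \ref{lemma3} yields that all $Q$ distinct query tags satisfy $H_{\mathbf{x}}(\cdot)\neq 0$ while $H_{\mathbf{x}}(\mathbf{h}^*)=0$ with probability at least $\tfrac{1}{q}\bigl(1-\tfrac{Q}{q}\bigr)$, which is non-negligible precisely because $Q<q/2$; this is what forces that bound.

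For extraction I would split on how the forgery tag meets the queries, which is where strong unforgeability is used. If $\mathbf{h}^*$ coincides with the tag $\mathbf{h}$ of some answered signcryption on $s^*$, then from $\mathbf{h}=\overline{\mathbf{A}}_{s^*}H_3(\cdot)+\mathbf{B}\,\mathbf{r}_s$ and the constraint $(\mu^*,ct^*)\neq(\mu,ct)$ one is forced into either $\mathbf{r}_s^*\neq\mathbf{r}_s$, yielding the short vector $\mathbf{r}_s^*-\mathbf{r}_s$ with $\mathbf{B}(\mathbf{r}_s^*-\mathbf{r}_s)=\mathbf{0}$, or $\mathbf{e}^*\neq\mathbf{e}$ with $\mathbf{A}_{s^*,\mathbf{h}}\mathbf{e}^*=\mathbf{A}_{s^*,\mathbf{h}}\mathbf{e}=\mathbf{u}$, yielding the short kernel vector $\mathbf{e}^*-\mathbf{e}$; both embed into $\mathbf{F}$ by padding the unused blocks with zeros. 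Otherwise $\mathbf{h}^*$ is fresh, $H_{\mathbf{x}}(\mathbf{h}^*)=0$, and $\mathbf{A}_{s^*,\mathbf{h}^*}=[\mathbf{A}_{s^*}\,|\,\overline{\mathbf{A}}_{s^*}\mathbf{R}(\mathbf{h}^*)]$; writing $\mathbf{e}^*=(\mathbf{e}_1^*,\mathbf{e}_2^*)$, the verification identity becomes $\mathbf{A}_{s^*}\bigl(\mathbf{e}_1^*+(\mathbf{R}(\mathbf{h}^*)\mathbf{e}_2^*;\mathbf{0})\bigr)=\mathbf{u}$, so placing the coefficient $-1$ on the $\mathbf{u}$-column of $\mathbf{F}$ gives a vector $\mathbf{z}$ with $\mathbf{F}\mathbf{z}=\mathbf{0}$. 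In each branch $\|\mathbf{z}\|$ is controlled by combining the verification bound $\|\mathbf{e}^*\|\le\sigma_2\sqrt{m+nk}$ with the singular-value estimates $s_1(\mathbf{R}_i)\le\sigma_1\tfrac{1}{\sqrt{2\pi}}(\sqrt{\overline{m}}+\sqrt{nk})$ from Lemma \ref{supnorm}, the factor $2$ arising from the differencing and the factor $\sqrt{n+1}$ from the combination over $\mathbf{R}_0,\dots,\mathbf{R}_n$, which is exactly the stated $\beta$.

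The main obstacle — and precisely the gap we identified in \cite{SS18}, where $\mathbf{A}_S\mathbf{z}=\mathbf{0}$ was derived without arguing $\mathbf{z}\neq\mathbf{0}$ — is guaranteeing that the extracted solution is genuinely nonzero, not merely short and within $\beta$. In the fresh-tag branch this is automatic by design, since the $-1$ entry on the $\mathbf{u}$-column forces $\mathbf{z}\neq\mathbf{0}$ irrespective of cancellations; building $\mathbf{u}$ into $\mathbf{F}$ (hence the dimension $2m+1=2\overline{m}+2nk+1$) is what lets us avoid the $\mathbf{z}=\mathbf{0}$ pitfall. In the tag-collision branch nonzeroness follows from the strong-unforgeability inequality $(\mu^*,ct^*)\neq(\mu,ct)$ itself, backed by the min-entropy of the Gaussian $\mathbf{r}_s$ (Lemma \ref{min-entropy}) to rule out a degenerate coincidence $\mathbf{r}_s^*=\mathbf{r}_s$ with $\mathbf{e}^*=\mathbf{e}$. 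Assembling the two branches, the success probability of $\mathcal{G}$ is at least the non-abort probability times $\mathcal{F}$'s advantage, which is non-negligible, contradicting the intractability of $\mathsf{SIS}_{n,2\overline{m}+2nk+1,q,\beta}$.
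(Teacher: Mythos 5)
Your fresh-tag branch is essentially the paper's \textbf{Case 1}: the same Waters-style programming of $\mathbf{C}_0,\dots,\mathbf{C}_n$, the same use of Lemmas \ref{lemma4} and \ref{lemma3}, and the same $-1$-on-the-$\mathbf{u}$-column trick to force a nonzero solution (your honest-tags-plus-abort simulation, in place of the paper's query tags programmed through $\mathbf{T}_{\mathbf{B}}$, is a legitimate ABB10-style variant). The genuine gap is in the tag-collision branch, i.e.\ exactly the part that makes the theorem \emph{strong} unforgeability. Suppose the forgery reuses the tag $\mathbf{h}$ of an answered query on $s^*$ and differs in $\mathbf{e}$. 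Since that query was answered, $H_{\mathbf{x}}(\mathbf{h})\neq 0$, so $\mathbf{A}_{s^*,\mathbf{h}}=[\mathbf{A}_{s^*}\,|\,\overline{\mathbf{A}}_{s^*}\mathbf{R}(\mathbf{h})+H_{\mathbf{x}}(\mathbf{h})\mathbf{G}]$. Writing $\mathbf{e}^*-\mathbf{e}=(\mathbf{z}_1,\mathbf{z}_2)$, the relation $\mathbf{A}_{s^*,\mathbf{h}}(\mathbf{e}^*-\mathbf{e})=\mathbf{0}$ reads $\mathbf{A}_{s^*}\bigl(\mathbf{z}_1+\bigl(\begin{smallmatrix}\mathbf{R}(\mathbf{h})\mathbf{z}_2\\ \mathbf{0}\end{smallmatrix}\bigr)\bigr)+H_{\mathbf{x}}(\mathbf{h})\mathbf{G}\mathbf{z}_2=\mathbf{0}$, and the gadget term $H_{\mathbf{x}}(\mathbf{h})\mathbf{G}\mathbf{z}_2$ has no counterpart among the columns of $\mathbf{F}=[\mathbf{A}_{s^*}|\mathbf{B}|\mathbf{u}]$, nor does the reduction possess a short $\mathbf{A}_{s^*}$-preimage of it; ``padding the unused blocks with zeros'' therefore does not produce $\mathbf{F}\mathbf{z}=\mathbf{0}$. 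This is precisely why the paper runs a separate \textbf{Case 2}: there $\mathcal{G}$ guesses the target query, uses the trapdoor $\mathbf{T}_{\mathbf{B}}$ (so $\mathbf{B}$ comes from \textsf{GenTrap}, \emph{not} from the SIS instance) to steer that query's tag onto a zero of $H_{\mathbf{x}}$ so that the gadget term vanishes, answers the query by first drawing $\mathbf{e}^{*(1)}\leftarrow D_{\mathbb{Z}^{m+nk},\sigma_2}$ and \emph{programming} $\mathbf{u}:=\mathbf{A}_{s^*,\mathbf{h}^*}\mathbf{e}^{*(1)}$ (so $\mathbf{u}$ is not an SIS column in that case either), and finally folds $\mathbf{e}^{*(1)}-\mathbf{e}^{*(2)}$ into a kernel vector of $[\overline{\mathbf{A}}_{s^*}|\mathbf{W}]$, invoking Lemma \ref{min-entropy} for nonvanishing. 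By spending your SIS columns on $\mathbf{B}$ and on $\mathbf{u}$ you foreclose both moves at once: without $\mathbf{T}_{\mathbf{B}}$ you cannot place any query tag on a zero of $H_{\mathbf{x}}$, and even if one landed there by chance you could not answer that query, having neither a trapdoor for the zero tag nor the freedom to choose $\mathbf{u}$.

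The companion sub-branch is also leaky. A tag collision only gives $\overline{\mathbf{A}}_{s^*}\Delta_3+\mathbf{B}\,\Delta\mathbf{r}_s=\mathbf{0}$, where $\Delta_3$ is the difference of the two $H_3$-values in $\mathbb{Z}_q^{\overline{m}}$; your dichotomy ``$\mathbf{r}_s^*\neq\mathbf{r}_s$ or $\mathbf{e}^*\neq\mathbf{e}$'' tacitly assumes $\Delta_3=\mathbf{0}$. If $\Delta_3\neq\mathbf{0}$ it is an arbitrary, non-short vector, so the relation is a collision of $f_{\overline{\mathbf{A}}_{s^*}}(\cdot)+f_{\mathbf{B}}(\cdot)$ but not a solution of the SIS instance within $\beta$ --- and the SUF theorem, unlike the IND-iCCA1 and OW-iCCA1 theorems, does not get to assume that collision resistance as a separate hypothesis. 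Moreover \textsf{USC} never checks $\Vert\mathbf{r}_s\Vert$ (only $\Vert\mathbf{e}\Vert\leq\sigma_2\sqrt{m+nk}$ is verified), so even when $\Delta_3=\mathbf{0}$ the vector $\mathbf{r}_s^*-\mathbf{r}_s$ supplied by the forger need not meet the bound $\beta$. In sum, your unqueried-tag extraction stands, but the reduction as written does not prove strong unforgeability; fixing it essentially forces you back to the paper's structure, keeping $\mathbf{B}$ (and, in the reused-tag case, $\mathbf{u}$) out of the SIS instance and handling reused-tag forgeries by tag programming and syndrome programming.
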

	\begin{proof} 
	 We assume by contradiction that if there exists a forger $\mathcal{F}$ who can break the SUF-iCMA security of the \textsf{SCET} scheme, then we can build from $\mathcal{F}$ an algorithm $\mathcal{G}$ that can find a solution to a given SIS problem. %
	 
	 We will give proof for the SUF-iCMA in two cases: \textbf{Case 1:} The forger  $\mathcal{F}$ forges on an unqueried message and \textbf{Case 2:} The  forger  $\mathcal{F}$ who forges on a queried message. 
	 Suppose that, $\mathcal{F}$ makes at most $Q$ adaptive signcryption queries on $(r^{(1)},s^{(1)},\mu^{(1)}), \cdots, (r^{(Q)},s^{(Q)},\mu^{(Q)}),$.
	 
	 We consider \textbf{Case 1} first.
	 
	   \textbf{SIS Instance.} Suppose that the algorithm  $\mathcal{G}$ is given the following SIS problem below:
		\begin{equation}\label{key}
		\bf{F}\cdot \bf{x}=\bf{0} \text{ (mod } q), \text{ where } \bf{F} \xleftarrow{\$} \mathbb{Z}_q^{n \times (2\overline{m}+2nk+1)}, \Vert \textbf{x}\Vert \leq \beta_1, 
		\end{equation}
		where $\beta_1= \sigma_2 \cdot \sqrt{n+1}\cdot (\sqrt{\overline{m}}+\sqrt{nk})\sqrt{m+nk+1} $.
	Then, $\mathcal{G}$ parses $\bf{F}$ as $\bf{F}:=[\overline{\bf{A}}|\bf{W}|\textbf{u}|\overline{\bf{A}}'|\bf{W'}]$, where $\overline{\bf{A}},\overline{\bf{A}}' \in \mathbb{Z}_q^{n \times \overline{m}},$ $ \mathbf{u} \in \mathbb{Z}_q ^{n}$, and $ \mathbf{W}, \bf{W}'\in \mathbb{Z}_q^{n \times nk}$. 
		The algorithms  $\mathcal{G}$ and $\mathcal{F}$ play the following game:
		
	 \textbf{Setup.}   $\mathcal{G}$ simulates public parameters $pp$, public keys for $M$ senders and $N$ receivers as follows:
			\begin{itemize}
			\item Pick $n,q, k, m,\ell, n, \alpha,  N, M, \sigma_1, \sigma_2, H, H_1, H_2, H_3, \mathcal{M}$ as system parameters.
			
			\item Guess $s^* \xleftarrow{\$} \{1, \cdots, M\}$ to be the target sender that $\mathcal{F}$ would like to forge, and then set $\overline{\bf{A}}_{s^*}:=\overline{\bf{A}}$, $\overline{\bf{A}}'_{s^*}:=\overline{\bf{A}}'$,  $\mathbf{A}_{s^*}:=[\overline{\mathbf{A}}_{s^*}|\mathbf{W}] \in \mathbb{Z}_q^{n \times m}$, $\mathbf{A}'_{s^*}:=[\overline{\mathbf{A}}'_{s^*}|\mathbf{W}'] \in \mathbb{Z}_q^{n \times m}$. Recall  that $m=\overline{m}+nk.$
		
			\item Also, use $\textsf{GenTrap}(n, \overline{m},q, \sigma_1)$ to generate  $(\mathbf{B}, \mathbf{T}_{\textbf{B}}), (\mathbf{B}', \mathbf{T}_{\textbf{B}}') \in  \mathbb{Z}_q^{n \times m} \times \mathbb{Z}_q^{m \times m}$, and choose randomly $\mathbf{U},\mathbf{U}' \xleftarrow{\$} \mathbb{Z}_q ^{n \times \ell}$.
			\item For $i \in \{0, \cdots, n\}$, sample $ \mathbf{T}_{s^*,i}, \mathbf{T}'_{s^*,i} \leftarrow D_{\mathbb{Z}^{\overline{m}\times nk}, \sigma_1} $. Let $x_0:=1$. Choose $\textbf{x}:=(x_1, \cdots, x_n)\xleftarrow{\$}\mathbb{Z}_q^n\setminus \{\textbf{0}\}$ and for  $i \in \{0, \cdots, n\}$, let  $\textbf{H}_i=x_i\textbf{I}_n$ and set $\textbf{C}_i:=\textbf{H}_i\textbf{G}- \overline{\bf{A}}_{s^*} \textbf{T}_{s^*,i} \text{ (mod } q) \in \mathbb{Z}_q^{n \times nk}$, $\textbf{C}'_i:=\textbf{H}'_i\textbf{G}- \overline{\bf{A}}'_{s^*} \textbf{T}'_{s^*,i} \text{ (mod } q) \in \mathbb{Z}_q^{n \times nk}$. Obviously, by Lemma \ref{lemma3}, such an $\textbf{x}$ will define an abort-resistant hash function belonging to $\mathcal{H}_{Wat}$.
	
		\item For each $j \in [Q]$: repeat choosing $\textbf{h}^{(j)}=(h^{(j)}_1, \cdots, h^{(j)}_{n})\in \mathbb{Z}_q^n$ uniformly at random until being such that $(1+\sum_{i=1}^{n}x_i \cdot h^{(j)}_i)\neq 0$.
			\item For all $s \in [M]\setminus \{s^*\}$ and all $r\in [N]$, use $\textsf{GenTrap}(n, \overline{m},q, \sigma_1)$ to generate  $(\mathbf{A}_r, \mathbf{T}_r)$, $ (\mathbf{A}'_r, \mathbf{T}'_r)$, $(\mathbf{A}_s, \mathbf{T}_s)$, $(\mathbf{A}'_s, \mathbf{T}'_s) \in  \mathbb{Z}_q^{n \times m} \times \mathbb{Z}_q^{\overline{m}\times nk}$.

			\item Set $pp=\{n,q,k, m,\ell, n, N, M, \alpha, \sigma_1, \sigma_2,  \mathcal{M}, (\bf{C}_i, \bf{C}'_i)_{i=0}^{n}, H, H_1, $ $H_2, $ $H_3\}$ as public parameters and  $pk_{r}=(\mathbf{A}_r, \mathbf{A}'_r)$, $pk_{s}=(\mathbf{A}_s, \mathbf{A}'_s)$ as public keys corresponding to each receiver $r \in [N]$, and each sender $s\in [M]$.
			\item Send $pp$, $pk_s$'s, $pk_r$'s all to the forger $\mathcal{F}$.
			\end{itemize}
		 \textbf{Queries.}  $\mathcal{F}$ can adaptively make  PKQ,  SCQ and  TGQ queries  polynomially many times  and in any order. Accordingly, $\mathcal{G}$ responds to the queries made by $\mathcal{F}$  as follows:
			\begin{itemize}
				\item For private key queries PKQ$(s)$: if  $s=s^*$, $\mathcal{G}$ rejects it. Otherwise, $\mathcal{G}$ returns the private key sk$_{s}=(\mathbf{T}_s,\mathbf{T}'_s )$ of a sender $\mathcal{S}_{s}$ to $\mathcal{F}$.
			
				\item For the $j$-th signcryption query SCQ$(r^{(j)},s^{(j)},\mu^{(j)})$ querie: If $s^{(j)}\neq s^*$, 	$\mathcal{G}$ sends the output $ct$ of  $\textsf{SC}(pk_{r^{(j)}}$, $sk_{s^{(j)}},\mu^{(j)})$ back to $\mathcal{F}$. Otherwise, $\mathcal{G}$ creates a ciphertext $ct$ on input $(pk_{r^{(j)}},sk_{s^*},\mu)$   as follows: 	
			\begin{enumerate}
			\item  $\textbf{r}_e,  \textbf{r}'_e \leftarrow D_{\mathbb{Z}^{m }, \alpha q}$, $\textbf{t}=f_{\overline{\textbf{A}}_r}(H_1(\textbf{A}_{s^*}))+f_{\textbf{B}}(\textbf{r}_e)\in \mathbb{Z}_q^{n}$,\\
			$\textbf{t}'=f_{\overline{\textbf{A}}'_r}(H_1(\textbf{A}'_{s^*}))+f_{\textbf{B}'}(\textbf{r}'_e)\in \mathbb{Z}_q^{n}$.
			\item $\textbf{A}_{r,\textbf{t}}=\textbf{A}_{r}+[\textbf{0}|H_2(\textbf{t})\mathbf{G}]\in \mathbb{Z}_q^{n \times m}=[\overline{\mathbf{A}}_r|H_2(\textbf{t})\mathbf{G}-\overline{\mathbf{A}}_r\cdot \mathbf{T}_r]\in \mathbb{Z}_q^{n \times m}$,\\
			$\textbf{A}'_{r,\textbf{t}}=\textbf{A}'_{r}+[\textbf{0}|H_2(\textbf{t}')\mathbf{G}]\in \mathbb{Z}_q^{n \times m}=[\overline{\mathbf{A}}'_r|H_2(\textbf{t}')\mathbf{G}-\overline{\mathbf{A}}'_r\cdot \mathbf{T}'_r]\in \mathbb{Z}_q^{n \times m}$.
									
			\item $\textbf{s}, \textbf{s}' \xleftarrow{\$} \mathbb{Z}_q^{n}$, \quad $\textbf{x}_0, \textbf{x}'_0\leftarrow D_{\mathbb{Z}^{m}, \alpha q}$, \quad  $\textbf{x}_1, \textbf{x}_1'\leftarrow D_{\mathbb{Z}^{\ell}, \alpha q}$.
			\item $\textbf{c}_0=\textbf{s}^t\textbf{A}_{r,\textbf{t}}+ \textbf{x}^t_0 \in \mathbb{Z}_q^{m}$, \quad \quad \quad  $\overline{\textbf{c}}_1=\textbf{s}^t\textbf{U}+ \textbf{x}^t_1 \in \mathbb{Z}_q^{\ell}$,\\
			$\textbf{c}'_0=(\textbf{s}')^t\textbf{A}'_{r,\textbf{t}}+ (\textbf{x}_0')^t \in \mathbb{Z}_q^{m}$, \quad  $\overline{\textbf{c}}'_1=(\textbf{s}')^t\textbf{U}'+ (\textbf{x}'_1)^t \in \mathbb{Z}_q^{\ell}$.
			\item $\overline{ct}=(\textbf{c}_0, \overline{\textbf{c}}_1, \textbf{r}_e, \textbf{c}'_0, \overline{\textbf{c}}'_1, \textbf{r}'_e)$.

			\item Generate a signature on $\mu^{(j)}| pk_{r^{(j)}}|\overline{ct}$:
			\begin{enumerate}
		
			\item Compute ${\textbf{r}_s} \leftarrow \textsf{SampleD}(\textbf{T}_\textbf{B}, \textbf{B}, (\textbf{h}^{(j)}-f_{\overline{\textbf{A}}_{s^*}}(H_3(\mu^{(j)}| pk_r|\overline{ct})))),\alpha q)$.
			\item $\textbf{A}_{{s^*},\textbf{h}}=[\mathbf{A}_{s^*}|\mathbf{C}_0+\sum_{i=1}^{n}h_i^{(j)}\cdot \mathbf{C}_i]=[\mathbf{A}_{s^*}|\textbf{H} \textbf{G}-\overline{\textbf{A}}_{s^*}\textbf{T}_{s^*}]\in \mathbb{Z}_q^{n \times (m+nk)}$, with $\textbf{H}:=\mathbf{H}_0+\sum_{i=1}^{n}h_i\cdot \mathbf{H}_{i}=(1+\sum_{i=1}^{n}x_i \cdot h_i^{(j)})\textbf{I}_n \neq 0$ and $\textbf{T}_{s^*}:=\mathbf{T}_{s^*,0}+\sum_{i=1}^{n}h_i\cdot \mathbf{T}_{s^*,i}$. Note that,	by Lemma \ref{lemma4},  $\textbf{T}_{s^*}$ is a $\textbf{G}$-trapdoor for $\textbf{A}_{{s^*},\textbf{h}}$ with tag $\textbf{H}$. 
			\item $\textbf{e} \in \mathbb{Z}^{ m+nk} \leftarrow \textsf{SampleD}(\textbf{T}_{s^*}, \textbf{A}_{{s^*},\textbf{h}},\textbf{u},\sigma_2)$.
			\item $(\textbf{e},\textbf{r}_s)$ is the signature.
		\end{enumerate}
			\item  $\textbf{c}_1=\overline{\textbf{c}}_1+ \mu^{(j)} \cdot \lfloor q/2\rfloor$, \quad $\textbf{c}'_1=\overline{\textbf{c}}'_1+ H(\mu^{(j)}) \cdot \lfloor q/2\rfloor$.
		\item Output $ct=(\textbf{c}_0, \textbf{c}_1, \textbf{r}_e, \textbf{r}_s,\textbf{c}'_0, \textbf{c}'_1,\textbf{r}'_e, \textbf{e})$.
		\end{enumerate}
							
		\item For unsigncryption query USQ$(r,s,ct)$: $\mathcal{G}$ simply sends the output $\mu$/$\bot$ of $\textsf{USC}(sk_{r},pk_{s}$, $ct)$ back to $\mathcal{F}$ since $\mathcal{G}$ has $(\bf{T}_r,\bf{T}'_r)$ for all $r \in [N]$.
					
		\item For tag query TGQ$(r)$: $\mathcal{G}$ simply sends the output $tg_{r}:=\bf{T}'_r$ back to $\mathcal{F}$.
			\end{itemize}
			
		 \textbf{Forge.} The forger $\mathcal{F}$  outputs an index  $r^* $ of some receiver and its corresponding key-pair $(pk_{r^*}, sk_{r^*})$, and a new valid ciphertext $ct^*=(\textbf{c}^*_0, \textbf{c}^*_1, \textbf{r}^*_e,\textbf{c}'^*_0,$ $ \textbf{c}'^*_1,\textbf{r}'^*_e, \textbf{e}^*)$ on message $\mu^*$, i.e., $ct^*$  must be not the output of any previuos query SCQ$(r,s,\mu)$, and $\mathsf{USC}(sk_{r^*},pk_{s^*},ct^*) \neq \bot.$ 
			
		\textbf{Analysis.} At the moment, $\mathcal{G}$ proceeds the following steps: 
			\begin{itemize}
				\item Compute $\textbf{h}^*=(h^*_1, \cdots, h^*_n)\gets f_{\overline{\textbf{A}}_{s^*}}(H_3(\mu^*| pk_{r^*}|\overline{ct}^*))+f_{\textbf{B}}(\textbf{r}_{s^*})\in \mathbb{Z}_q^{n}$ and check if $\textbf{H}:=\mathbf{H}_0+\sum_{i=1}^{n}h^*_i\cdot \mathbf{H}_{i}=(1+\sum_{i=1}^{n}x_i \cdot h_i^*)\textbf{I}_n=0$. If not, $\mathcal{G}$ aborts the game. Otherwise, it computes $\textbf{A}_{{s^*},\textbf{h}^*}=[\mathbf{A}_{s^*}|\mathbf{C}_0+\sum_{i=1}^{n}h^*_i\cdot \mathbf{C}_i]=[\overline{\textbf{A}}_{s^*}|\textbf{W}|-\overline{\textbf{A}}_{s^*}\textbf{T}_{s^*}]\in \mathbb{Z}_q^{n \times (m+nk)}$. Note that, the probability that $\mathcal{G}$ aborts the game is negligible by appropriately choosing parameters via Lemma \ref{lemma3}.
				\item From  $\textbf{A}_{s^*,\textbf{h}^*}\cdot \textbf{e}^*=\textbf{u} \!\! \mod q$ and $\|\textbf{e}^*\|\leq \sigma_2\sqrt{m+nk}$, we have $[\overline{\textbf{A}}_{s^*}|\textbf{W}|-\overline{\textbf{A}}_{s^*}\textbf{T}_{s^*}]\cdot \bf{e}^*=\bf{u} \text{ (mod } q),$ equivalently, 
	
	$$[\overline{\textbf{A}}_{s^*}|\textbf{W}] \cdot \left(\begin{bmatrix}
	\mathbf{I}_{\overline{m}} & \textbf{0} &-\mathbf{T}_{s^*} \\
	\textbf{0}&\textbf{I}_{nk}& \textbf{0}\\
	\end{bmatrix} \cdot \bf{e}^*\right)=\bf{u} \text{ (mod } q).$$

\item 
		Let $\widehat{\mathbf{x}}:=\begin{bmatrix}
			\mathbf{I}_{\overline{m}} & \textbf{0} &-\mathbf{T}_{s^*} \\
			\textbf{0}&\textbf{I}_{nk}& \textbf{0}\\
			\end{bmatrix} \cdot \bf{e}^* \neq \textbf{0}$,  then $[\overline{\textbf{A}}|\textbf{W}|\textbf{u}]\cdot  \bigl( \begin{smallmatrix}
				   \widehat{\mathbf{x}}\\ -1
				 \end{smallmatrix} \bigr)=\bf{0}\text{ (mod } q)$.
	Hence, $\mathcal{G}$ gets a solution $\mathbf{x}$ to the SIS problem \eqref{key}, i.e.,  $\mathbf{F}\cdot \mathbf{x}=\bf{0}\text{ (mod } q),$ with $\mathbf{x}= \bigl( \begin{smallmatrix}
	   \widehat{\mathbf{x}}\\ -1\\\mathbf{0}
	 \end{smallmatrix} \bigr)$, and by Lemma \ref{bound},
	\begin{align*}
	\| \mathbf{x}\|=\| \widehat{\mathbf{x}}\| +1	&\leq  s_1(\mathbf{T}_{s^*}) \cdot \| \mathbf{e}^*\|+1\\
	&	\leq \sigma_1\cdot \sigma_2\cdot \sqrt{n+1} \cdot \frac{1}{\sqrt{2\pi}}\cdot (\sqrt{\overline{m}}+\sqrt{nk}) \cdot\sqrt{m+nk}+1.
	\end{align*}

	\end{itemize}

Now, we consider \textbf{Case 2}.  Suppose that the algorithm  $\mathcal{G}$ is given the following SIS problem below:
			\begin{equation}\label{key2}
			\bf{F}\cdot \bf{x}=\bf{0} \text{ (mod } q), \text{ where } \bf{F} \xleftarrow{\$} \mathbb{Z}_q^{n \times (2\overline{m}+2nk)}, \Vert \textbf{x}\Vert \leq \beta_2,
			\end{equation}
	where $ \beta_2:=  2\sigma_1\cdot \sigma_2\cdot \sqrt{n+1} \cdot \frac{1}{\sqrt{2\pi}}\cdot (\sqrt{\overline{m}}+\sqrt{nk}) \cdot\sqrt{m+nk}  $. 	Then, $\mathcal{G}$ parses $\bf{F}$ as $\bf{F}:=[\overline{\bf{A}}|\bf{W}|\overline{\bf{A}}'|\bf{W'}]$, where $\overline{\bf{A}},\overline{\bf{A}}' \in \mathbb{Z}_q^{n \times \overline{m}},$  and $ \mathbf{W}, \bf{W}'\in \mathbb{Z}_q^{n \times nk}$. 
			The algorithms  $\mathcal{G}$ and $\mathcal{F}$ play the following game:
			
		 \textbf{Setup.}   This phase is the same as  the \textbf{Setup} phase of \textbf{Case 1}, except that $\mathcal{G}$ does the following:
				\begin{itemize}
				
				\item Randomly guess $s^* \xleftarrow{\$} \{1, \cdots, M\}$ and $r^* \xleftarrow{\$} \{1, \cdots, N\}$ to be the target sender and the target receiver respectively, that $\mathcal{F}$ would like to forge, and then set $\overline{\bf{A}}_{s^*}:=\overline{\bf{A}}$, $\overline{\bf{A}}'_{s^*}:=\overline{\bf{A}}'$,  $\mathbf{A}_{s^*}:=[\overline{\mathbf{A}}_{s^*}|\mathbf{W}] \in \mathbb{Z}_q^{n \times m}$, $\mathbf{A}'_{s^*}:=[\overline{\mathbf{A}}'_{s^*}|\mathbf{W}'] \in \mathbb{Z}_q^{n \times m}$. Recall  that $m=\overline{m}+nk.$

				\item Repeat choosing $\textbf{h}^{(j)}=(h^{(j)}_1, \cdots, h^{(j)}_{n})\in \mathbb{Z}_q^n$ uniformly at random until   $(1+\sum_{i=1}^{n}x_i \cdot h^{(j_0)}_i)= 0$ for some $ j_0 \in [Q]$ and  $(1+\sum_{i=1}^{n}x_i \cdot h^{(j)}_i)\neq 0$ for each $j \in [Q]\setminus \{j_0\}$. Let $\textbf{h}^{(j_0)}$ to be the one corresponding to the target query $(r^*,s^*, \mu^*)$ and let $\textbf{h}^*=(h_1^*,\cdots, h_n^*) \gets \textbf{h}^{(j_0)}$.

	\item  Set $\textbf{A}_{{s^*},\textbf{h}^*}=[\mathbf{A}_{s^*}|\mathbf{C}_0+\sum_{i=1}^{n}h^*_i\cdot \mathbf{C}_i]=[\overline{\textbf{A}}_{s^*}|\textbf{W}|-\overline{\textbf{A}}_{s^*}\textbf{T}_{s^*}],$ where $\textbf{T}_{s^*}:=\mathbf{T}_{s^*,0}+\sum_{i=1}^{n}h^*_i\cdot \mathbf{T}_{s^*,i}$.
			\item Choose ${\textbf{e}^*}^{(1)}  \leftarrow D_{\mathbb{Z}^{m+nk},\sigma_2}$, set $ \mathbf{u}:=\textbf{A}_{s^*,\textbf{h}^*}\cdot {\textbf{e}^*}^{(1)} \in \mathbb{Z}_q ^{n}$ and send $\textbf{u}$ to $\mathcal{F}$ as a uniformly random one.

				\end{itemize}
			 \textbf{Queries.}  For almost PKQ,  SCQ and  TGQ queries,$\mathcal{G}$ responds in the same way as in the \textbf{Queries} phase of \textbf{Case 1}, except that  with the target query SCQ$(r^*,s^*,\mu^*)$, $\mathcal{G}$ responds as follows: 
		\begin{enumerate}
	
\item Produce $\overline{ct}^*=(\textbf{c}^*_0, \overline{\textbf{c}}^*_1, \textbf{r}^*_e, \textbf{c}'^*_0, \overline{\textbf{c}}'^*_1, \textbf{r}'^*_e)$ as usual.
	\item Generate a signature on $\mu^*| pk_{r^*}|\overline{ct}^*$:
		\begin{enumerate}

			\item $\textbf{r}^*_s \leftarrow \textsf{SampleD}(\textbf{T}_{\textbf{B}}, \textbf{B},\textbf{h}^*-f_{\overline{\textbf{A}}_{s^*}}(H_3(\mu^*| pk_{r^*}|\overline{ct}^*)),\alpha q)$.
	
				\item Set $({\textbf{e}^*}^{(1)},{\textbf{r}_s^*})$ to be the signature.
										\end{enumerate}
		\item $\textbf{c}^*_1=\overline{\textbf{c}}^*_1+ \mu \cdot \lfloor q/2\rfloor$, \quad $\textbf{c}'^*_1=\overline{\textbf{c}}'^*_1+ H(\mu) \cdot \lfloor q/2\rfloor$.
		\item Return ${ct^*}^{(1)}=({\textbf{c}^*_0}, {\textbf{c}^*_1}, {\textbf{r}^*_e},  {\textbf{r}^*_s}, {\textbf{c}'^*_0}, {\textbf{c}'^*_1},{\textbf{r}'^*_e}, {\textbf{e}^*}^{(1)})$ to $\mathcal{F}$.
									\end{enumerate}

			 \textbf{Forge.} The forger $\mathcal{F}$  outputs the target receiver's key-pair $(pk_{r^*}, sk_{r^*})$ together with a new valid ciphertext ${ct^*}^{(2)}=({\textbf{c}^*_0}, {\textbf{c}^*_1}, {\textbf{r}^*_e},  {\textbf{r}^*_s}, {\textbf{c}'^*_0}, {\textbf{c}'^*_1},{\textbf{r}'^*_e}, {\textbf{e}^*}^{(2)})$ on the message $\mu^*$. Note that, ${\textbf{e}^*}^{(2)}\neq { \textbf{e}^*}^{(1)}$ while ${\textbf{c}^*_0}, {\textbf{c}^*_1}, {\textbf{r}^*_e},  {\textbf{r}^*_s}, {\textbf{c}'^*_0}, {\textbf{c}'^*_1},{\textbf{r}'^*_e}$ are unchanged due to the validity of ${ct^*}^{(2)}$ corresponding to $\textbf{A}_{s^*,\textbf{h}^*}$.
				
			\textbf{Analysis.} At the moment, $\mathcal{G}$ proceeds the following steps: 
				\begin{itemize}
				
					\item Recall $\textbf{A}_{{s^*},\textbf{h}^*}=[\mathbf{A}_{s^*}|\mathbf{C}_0+\sum_{i=1}^{n}h^*_i\cdot \mathbf{C}_i]=[\overline{\textbf{A}}_{s^*}|\textbf{W}|-\overline{\textbf{A}}_{s^*}\textbf{T}_{s^*}]$.
					\item From  $\textbf{A}_{{s^*},\textbf{h}^*}\cdot{\textbf{e}^*}^{(1)}=\textbf{u}\!\! \mod q$ and $\textbf{A}_{s^*,\textbf{h}^*}\cdot {\textbf{e}^*}^{(2)}=\textbf{u} \!\! \mod q$, we have
		$$[\overline{\textbf{A}}_{s^*}|\textbf{W}] \cdot \left(\begin{bmatrix}
		\mathbf{I}_{\overline{m}} & \textbf{0} &-\mathbf{T}_{s^*} \\
		\textbf{0}&\textbf{I}_{nk}& \textbf{0}\\
		\end{bmatrix} \cdot ({\textbf{e}^*}^{(1)}-{\textbf{e}^*}^{(2)})\right)=\bf{0} \text{ (mod } q).$$

	\item 
			Let $\widehat{\mathbf{x}}:=\begin{bmatrix}
				\mathbf{I}_{\overline{m}} & \textbf{0} &-\mathbf{T}_{s^*} \\
				\textbf{0}&\textbf{I}_{nk}& \textbf{0}\\
				\end{bmatrix} \cdot ({\textbf{e}^*}^{(1)}-{\textbf{e}^*}^{(2)})$,  then $[\overline{\textbf{A}}|\textbf{W}]\cdot  
					   \widehat{\mathbf{x}}
					 =\bf{0}\text{ (mod } q),$.
		Hence, $\mathcal{G}$ gets a solution $\mathbf{x}$ to the SIS problem \eqref{key2}, i.e.,  $\mathbf{F}\cdot \mathbf{x}=\bf{0}\text{ (mod } q),$ with $\mathbf{x}= \bigl( \begin{smallmatrix}
		   \widehat{\mathbf{x}}\\\mathbf{0}
		 \end{smallmatrix} \bigr)$, and 	 
	\begin{align*}
		\| \mathbf{x}\|=\| \widehat{\mathbf{x}}\| 	&\leq  s_1(\mathbf{T}_{s^*}) \cdot \| {\textbf{e}^*}^{(1)}-{\textbf{e}^*}^{(2)}\|\\
		&	\leq 2\sigma_1\cdot \sigma_2\cdot \sqrt{n+1} \cdot \frac{1}{\sqrt{2\pi}}\cdot (\sqrt{\overline{m}}+\sqrt{nk}) \cdot\sqrt{m+nk},
	\end{align*}
		 by Lemma \ref{bound}.

		\end{itemize}
			It remains  to prove that $\widehat{\mathbf{x}}\neq 0$ with overwhelming probability. Let $\textbf{w}:={\textbf{e}^*}^{(1)}-{\textbf{e}^*}^{(2)} \neq \textbf{0}$ and parse $\mathbf{w}= \bigl( \begin{smallmatrix}
						  \mathbf{w}_1\\\mathbf{w}_2\\\textbf{w}_3
						 \end{smallmatrix} \bigr)$. Then  $\widehat{\mathbf{x}}=(\mathbf{w}_1-\textbf{T}_{s^*} \mathbf{w}_3, \mathbf{w}_2)$. Obviously, if $\mathbf{w}_2 \neq \textbf{0}$ or $\mathbf{w}_3:=(w_1, \cdots, w_{nk})=\textbf{0}$ then $\widehat{\mathbf{x}}\neq 0$. Otherwise, i.e., $\mathbf{w}_2 = \textbf{0}$ and $\mathbf{w}_3\neq \textbf{0}$,  we will show that $\overline{\textbf{A}}_{s^*}(\mathbf{w}_1-\textbf{T}_{s^*} \mathbf{w}_3)=\textbf{0}$ happens only with negligible probability. Indeed, without loss of generality, assume that  $w_{nk}\neq 0$ then $\overline{\textbf{A}}_{s^*}(\mathbf{w}_1-\textbf{T}_{s^*} \mathbf{w}_3)=\textbf{0}$ only if $\textbf{t}_{nk}\sim D_{\Lambda_q^{\bot}(\overline{\textbf{A}}_{{s^*}})+\textbf{c}, \sigma_1}$ and $\textbf{t}_{nk} \cdot w_{nk}=\textbf{y}$ for some $\textbf{y}\in \Lambda_q^{\bot}(\overline{\textbf{A}}_{{s^*}})+\textbf{c}$ for any $\textbf{c}$ in span$(\Lambda_q^{\bot}(\overline{\textbf{A}}_{{s^*}}))$, where $\textbf{t}_{nk}$ is the $nk$-th column of $\textbf{T}_{s^*,0}$. Then by Lemma  \ref{min-entropy}, such a  $\textbf{t}_{nk}$ exists with negligible probability.	
						 
						 In conclusion, we choose the common SIS problem  $\mathsf{SIS}_{n, 2\overline{m}+2nk+1, q, \beta}$  for both two cases  (i.e., \textbf{Case 1} and \textbf{Case 2}) with $\beta:=\max \{\beta_1, \beta_2\}$ which should be  $ \beta:=  2\sigma_1\cdot \sigma_2\cdot \sqrt{n+1} \cdot \frac{1}{\sqrt{2\pi}}\cdot (\sqrt{\overline{m}}+\sqrt{nk}) \cdot\sqrt{m+nk}  $.
						 \qed
		\end{proof}

	\section{Parameter Selection} \label{para}
	
	\begin{itemize}
		\item We take $n$ as the security parameter.
		\item For the gadget-based trapdoor mechanism to work: Choose $ q\ge 2, \overline{m} \ge 1 $, $ k = \lceil \log_2  q \rceil $, and $m = O(n\log q)$.
		
		\item Also choose $q, n, \overline{m} ,\ell, Q$ such  that  $H$ is one-way, $H_1$ is collision-resistant,    $H_3$ is  universal, $\mathcal{H}_{Wat}$ (in Lemma \ref{lemma3}) is a family of  abort-resistant hash functions, and  the functions $ f_{\overline{\textbf{A}}_{r}}(\cdot)+f_{\textbf{B}}(\cdot)$ are collision-resistance   for any $\overline{\mathbf{A}}_r \xleftarrow{\$} \mathbb{Z}_q^{n \times\overline{ m}}$ and any $\mathbf{B} \xleftarrow{\$} \mathbb{Z}_q^{n \times m}$.
		\item By Theorem  \ref{theo1}, in order for  the decisional-LWE $\mathsf{dLWE}_{n,2(\overline{m}+\ell), q,\alpha q}$ to be hard: $ q>2\sqrt{n}/\alpha $ for $\alpha=1/\textsf{poly}(n) \in (0,1)$.

		\item For the Gaussian parameter $\sigma_1$ used in \textsf{KeyGen} (which follows \textsf{GenTrap}):   $\sigma_1 \geq \omega(\sqrt{\log n})$. Note also that, all private keys for senders and users $\mathbf{T}_s, \mathbf{T}_r$ satisfy that $ s_1(\mathbf{T}_s), s_1(\mathbf{T}_r) \leq \sigma_1 \cdot \frac{1}{\sqrt{2\pi}}\cdot (\sqrt{\overline{m}}+\sqrt{nk})$ by Lemma \ref{supnorm}.
		
		\item For  the parameter $\alpha$ used in \textsf{Invert}:  We should choose $\alpha$ such that  $1/\alpha \geq 2 \sqrt{5 (s_1(\mathbf{R})^2+1)}\cdot \omega(\sqrt{\log n})$ by Lemma \ref{trapdoor}.
		\item For the Gaussian parameter $\sigma_2$ used in \textsf{SampleD}:  $ \sigma_2 \geq \sqrt{7( s_1(\mathbf{T}_s)^2+1)}\cdot \omega(\sqrt{\log n})$ (see \cite[Section 5.4]{MP12}).

		\item For the 
		$\mathsf{SIS}_{n, 2\overline{m}+2nk+1, q, \beta}$ problem has a solution and to be hard  (in the SUF-iCMA security proof):
		$\beta \geq \sqrt{2\overline{m}+2nk+1}\cdot q^{n/(2\overline{m}+2nk+1)}, $ $ q \geq \beta\cdot \omega(\sqrt{n\log n}),$ $ \beta:=  2\sigma_1\cdot \sigma_2\cdot \sqrt{n+1} \cdot \frac{1}{\sqrt{2\pi}}\cdot (\sqrt{\overline{m}}+\sqrt{nk}) \cdot\sqrt{m+nk} .$
	\end{itemize}

	\section{Insecurity  of the Signcryption by Lu et al. \cite{LWJ+14}} \label{attack}
	
	\subsection{Description}	
	
	The signcryption  by Lu et al. \cite{LWJ+14} (called \textsf{Lu-SC}) exploits the basis-based trapdoor mechanism by \cite{GPV08a} which consists of algorithms \textsf{TrapGen}, \textsf{ExtBasis}, \textsf{SamplePre}.  The \textsf{Lu-SC} includes  the following algorithms:
	
	\begin{itemize}
		\item \textsf{Setup($1^{n}$)}: On input the security parameter $n$, performs the following:
				\begin{enumerate}
					\item Generate common parameters: $q=\textsf{poly}(n)$, $m=\lceil 6n\log q\rceil$, $\tilde{L}=O(\sqrt{n \log q})$, $\sigma\geq \tilde{L} \omega(\sqrt{\log m})$, error rate $\alpha=1/\textsf{poly}(n)$ such that $\alpha q >2\sqrt{n}$.
				\item Sample randomly and independently  matrices $\mathbf{C}_0, \cdots, \mathbf{C}_{\tau} \in \mathbb{Z}_q ^{n \times m}$.
			\item A collision-resistant hash function $H_1:\{0,1\}^{*} \rightarrow \{0,1\} ^{\tau}$, a universal hash function $H_2:\{0,1\}^{*} \rightarrow  \mathbb{Z}_q^n $.
			\item A message space $\mathcal{M}=  \mathbb{Z}_q^n$.
				\end{enumerate}	
		\item \textsf{KeyGen($n$)}: Do the following:
				\begin{enumerate}
					\item Use $\textsf{TrapGen}(1^n)$ to generate the matrix-pairs $(\mathbf{A}_R, \mathbf{T}_R)$,  $(\mathbf{A}_S, \mathbf{T}_S)$, where each pair belongs to $ \mathbb{Z}_q^{n \times m} \times \mathbb{Z}_q^{m \times m}$.
				\item  Sample randomly matrices $\mathbf{B}_R,  \mathbf{B}_S$, each from $\mathbb{Z}_q^{n \times m}$.
				\item Return
				\begin{itemize}
					\item $\mathsf{pk}_R:=(\mathbf{A}_R, \mathbf{B}_R)$, and  $\mathsf{sk}_R:=\textbf{T}_{R}$ as public key and secret key for a receiver $\mathcal{R}$,
					\item $\mathsf{pk}_S:=(\mathbf{A}_S, \mathbf{B}_S)$, and $\mathsf{sk}_S:=\textbf{T}_{S}$ as public key and secret key for a sender $\mathcal{S}$.
				\end{itemize}
				\end{enumerate}
	
	\item \textsf{SignCrypt}($ \mu, \mathsf{sk}_S, \mathsf{pk}_R$): On input  a plaintext $mu \in \mathcal{M}$,  $\mathsf{pk}_R:=(\mathbf{A}_R, \mathbf{B}_R)$, $\mathsf{sk}_S:=\mathbf{T}_S$, do the following:

				\begin{enumerate}
					\item   $\mathbf{h}=(h_i)_{i\in[\tau]}=H_	1(\mu,\mathsf{pk}_R) \in \mathbb{Z}_q^{\tau}$.
				\item $\mathbf{F}_{\mu}:=[\mathbf{A}_S|| \mathbf{C}_0+\sum_{i=1}^{\tau} (-1)^{h_i}\mathbf{C}_i] \in \mathbb{Z}_q^{n \times 2m}$.
				\item Compute $\mathbf{T}_{\mu}$ as the short basis for $\Lambda_q^{\bot}(\mathbf{F}_{\mu})$ from $\mathbf{T}_{S}$ using \textsf{ExtBasis}; $\mathbf{v}\leftarrow \textsf{SamplePre}(\mathbf{F}_{\mu}, \mathbf{T}_{\mu}, \mathbf{0},\sigma)$, $\mathbf{v}\in \mathbb{Z}_q^{2m}$, i.e., $\mathbf{F}_{\mu}\cdot \mathbf{v}=\mathbf{0} \textbf{ (mod } q)$.
				\item
				$\mathbf{t}=H_2(\mu, \textsf{pk}_S, \textsf{pk}_R, \mathbf{v})$ and $\mathbf{c}:=\mathbf{t}+\mu \bmod q$. \footnote{Actually, in \cite{LWJ+14}, the authors wrote $\mathbf{c}:=\mathbf{t}\oplus \mu$. However, since $\mu$ and $\mathbf{t}$ belong to $\mathbb{Z}_q^n$, we think that it should be $\mathbf{c}:=\mathbf{t} + \mu \bmod q$. }.
				
				\item $\mathbf{e} \leftarrow \widetilde{\Psi}_{\alpha}^{2m}$, $\mathbf{b}_1=[\mathbf{A}_R||\mathbf{C}_0]^{T}\cdot \mathbf{t}+\mathbf{v}\in \mathbb{Z}_q^{ 2m}$, $\mathbf{b}_2=[\mathbf{B}_R||\mathbf{C}_1]^{T}\cdot \mathbf{t}+\mathbf{e} \in \mathbb{Z}_q^{2m}$.

				\item Output ciphertext $\mathsf{CT}=(\mathbf{c}, \mathbf{b}_1, \mathbf{b}_2)$.
				\end{enumerate}

	\item \textsf{UnSignCrypt}($ \textsf{CT}, \mathsf{pk}_S,\mathsf{sk}_R$): On input $ \mathsf{pk}_S:=(\mathbf{A}_R, \mathbf{B}_R)$, $\mathsf{sk}_R:=(\textbf{T}_{R})$, $\textsf{CT}:=(\mathbf{c}, \mathbf{b}_1, \mathbf{b}_2)$, do the following:
			\begin{enumerate}
	
				\item Compute $\mathbf{t}$ and $\mathbf{v}$ from $\mathbf{b}_1$ using \textsf{ExtBasis} and \textsf{Invert} with the help of $\mathbf{T}_R$
				\item Compute $\mathbf{e}=\mathbf{b}_2-[\mathbf{B}_R||\mathbf{C}_1]^{T}\cdot \mathbf{t}$ and check whether $0<\Vert \mathbf{e}\Vert \leq \sigma\sqrt{2m}$ or not. If not, reject it and return  $\bot$
				\item Compute $\mu=\mathbf{c} - \mathbf{t} \bmod q$ and $\mathbf{h}=(h_i)_{i\in[\tau]} =H_1(\mu, \textsf{pk}_R)$
				\item Check whether $\mathbf{t} =H_2(\mu, \textsf{pk}_S, \textsf{pk}_R, \mathbf{v})$ or not. If not, reject it and output $\bot$
				 \item If {$\mathbf{v}\in \mathbb{Z}^{2m}$ and $0<\Vert \mathbf{v} \Vert \leq \sigma \sqrt{2m}$ and $[\mathbf{A}_S|| \mathbf{C}_0+\sum_{i=1}^{\tau} (-1)^{h_i}\mathbf{C}_i]\cdot \mathbf{v}= \mathbf{0} \text{ (mod } q)$}
				output $\mu$; otherwise, output $\bot$.
			\end{enumerate}

			\end{itemize}

\subsection{An Attack against IND-CPA}	
Recall that, in the challenge phase of the IND-CPA security, the adversary submits two plaintexts $\mu^*_0, \mu^*_1$ together with the target public  key $\textsf{pk}_{S^*}$. The challenger  then chooses uniformly at random a bit $b \in \{0,1\}$ and returns the challenge ciphertext $\textsf{CT}^* \leftarrow \textsf{SignCrypt}(\mu^*_b, \mathsf{sk}_{S^*}, \mathsf{pk}_{R^*})$ back to the adversary. The adversary wins the game if he can guess correctly the bit $b$.

 Given the challenge ciphertext $\textsf{CT}^* = (\textbf{c}^*, \textbf{b}^*_1, \textbf{b}^*_2)$ of a plaintext either $\mu^*_0 $ or
$\mu^*_1$, the adversary is able to check the following conditions:

\begin{itemize}
	\item $\textbf{v }= \textbf{b}_1- [\textbf{A}_{R*} | \textbf{C}_0]^T \cdot (\textbf{c} - \mu^*_b)
$,  and 
$\textbf{e }= \textbf{b}_2 -[\textbf{B}_{R^*} | \textbf{C}_1]^T \cdot (\textbf{c} - \mu^*_b)
$
are small, 
\item
 $\textbf{c }- \mu^*_b = H_2(\mu^*_b, \textsf{pk}_{S^*}, \textsf{pk}_{R^*}, \textbf{v})$,
\item $ [\textbf{A}_{S^*} | \textbf{C}_0 + \sum^\tau_{i=1}(-1)^{h_i}\textbf{C}_i] \textbf{v} = \textbf{0 }\bmod q
,$ where $(h_i)_{i\in [\tau]}=H_1(\mu^*_b, \textsf{pk}_{R^*})$.
\end{itemize}
We see that the correct $b$ satisfies all above conditions, whilst the incorrect $b $ does not meet all the conditions with high probability. Then the adversary is able to
win the IND-CPA security game with high probability.

Moreover,  in the case that  both $b = 0$ and $b=1$ satisfy all the conditions, then the adversary is able to find a collision of
$H_2$.

	\section{Conclusions} \label{conc}
	In this work, we constructed a lattice-based signcryption scheme associated with a capacity of equality testing in the standard model. To the best of our knowledge, this scheme is the first post-quantum signcryption with equality test  in the literature. The proposed scheme satisfies confidentiality (IND-iCCA1 and OW-iCCA1) and the (strong) unforgeability under chosen message attack (SUF-iCMA) against insider attacks at the same time in which the former is based on the decisional-LWE assumption and the latter is guaranteed by the hardness of the SIS problem. We also showed that some lattice-based signcryptions in the literature   neither are secure nor  work correctly.  Our main tool in the construction is the gadget-based trapdoor technique introduced in \cite{MP12}.  Finding a better way to simplify the equality test mechanism would be an interesting future task.

	\end{document}